\def\l@subsubsection#1#2{}
\newcommand{\ba}{\begin{align}}
\newcommand{\ea}{\end{align}}
\newtheorem{lemma}{Lemma}[section]
\theoremstyle{definition}
\newtheorem{remark}[lemma]{Remark}
\newcommand{\Udis}{U_{\rm dis}}
\newcommand{\tUdis}{\tilde U_{\rm dis}}
\newcommand{\tUdisz}{\tilde U_{\rm dis,0}}
\newcommand{\Umat}{U^{\rm mat}}
\newcommand{\RR}{{\mathbb{R}}}
\newcommand{\ZZ}{{\mathbb{Z}}}
\newcommand{\symX}{{\mathbf X}}
\newcommand{\Xbdry}{\symX_{\rm bdry}}
\newcommand{\Xbdryz}{\symX_{\rm bdry,0}}
\begin{document}

\title{An Exactly Solvable Model for a $4+1D$ Beyond-Cohomology Symmetry Protected Topological Phase}
\begin{abstract}
We construct an exactly solvable commuting projector model for a $4+1$ dimensional ${\mathbb Z}_2$ symmetry-protected topological phase (SPT) which is outside the cohomology classification of SPTs.  The model is described by a decorated domain wall construction, with ``three-fermion'' Walker-Wang phases on the domain walls.  We describe the anomalous nature of the phase in several ways.  One interesting feature is that, in contrast to in-cohomology phases, the effective ${\mathbb Z}_2$ symmetry on a $3+1$ dimensional boundary cannot be described by a quantum circuit and instead is a nontrivial quantum cellular automaton (QCA).  A related property is that a codimension-two defect (for example, the termination of a ${\mathbb Z}_2$ domain wall at a trivial boundary) will carry nontrivial chiral central charge $4$ mod $8$. \iffalse giving rise to a nontrivial symmetry enriched topological phase at a $3+1$ dimensional boundary.  \fi We also construct a gapped symmetric topologically-ordered boundary state for our model, which constitutes an anomalous symmetry enriched topological phase outside of the classification of Ref. \cite{HermeleChen}, and define a corresponding anomaly indicator.

\end{abstract}

\author{Lukasz Fidkowski}
\affiliation{Department of Physics, University of Washington, Seattle WA USA}

\author{Jeongwan Haah}
\affiliation{Microsoft Quantum and Microsoft Research, Redmond, WA USA}

\author{Matthew B.~Hastings}
\affiliation{Station Q, Microsoft Research, Santa Barbara, CA USA}
\affiliation{Microsoft Quantum and Microsoft Research, Redmond, WA USA}

\maketitle

\section{Introduction}

A non-trivial symmetry protected topological (SPT) phase is one that can be continuously connected to a trivial phase, 
but only at the expense of breaking the symmetry or closing the gap.
One well-understood sub-class of SPTs are the so-called ``in-cohomology'' phases, 
whose quantized responses can equivalently be thought of 
either in a Lagrangian field theory formulation as twisted Dijkgraaf-Witten terms~\cite{dijkgraaf1990topological}
or in a Hamiltonian lattice formulation as braiding statistics of symmetry flux defects~\cite{LevinGu, Chen2013}.
Less well-understood are the remaining ``beyond-cohomology'' phases.
At a field theory level these have mixed gauge-gravity terms 
in their response~\cite{Kapustin, Freed, FreedHopkins, debray2018low, Wang_2015}, 
but it is not always clear how to interpret such field theory responses at a lattice Hamiltonian level.  

This issue becomes especially sharp in the case of $4+1$D phases of bosons protected by onsite unitary $\ZZ_2$ symmetry.
Here field theory predicts a $\ZZ_2 \times \ZZ_2$ classification of SPTs, 
generated by an in-cohomology phase and a beyond-cohomology phase~\cite{Kapustin, FreedHopkins}.
The latter corresponds to the term $\frac{1}{2} A \,w_2^2$ in the Lagrangian, 
where $A$ is the $\ZZ_2$ gauge field and $w_2$ the second Stiefel-Whitney class of the spacetime manifold.
Recently a candidate lattice Hamiltonian for the beyond-cohomology phase was proposed~\cite{Freedman_2016, debray2018low}.
This so-called ``Generalized Double Semion'' (GDS) dual model%
\footnote{
The GDS dual is a $\ZZ_2$ SPT~\cite{fidkowski2019disentangling} 
which gauges into the GDS model originally written down in~\cite{Freedman_2016, debray2018low}.
}
does indeed reproduce the correct $\frac{1}{2} A \,w_2^2$ response, 
for spacetime manifolds of the form $M_{\rm{spatial}} \times {\rm{time}}$ and flat $\ZZ_2$ gauge field configurations.
However, in \cite{fidkowski2019disentangling} it was shown that in flat space, 
the GDS dual model is equivalent, up to a finite depth circuit of local $\ZZ_2$-symmetric quantum gates, 
to the in-cohomology SPT Hamiltonian, implying that it cannot be in the beyond-cohomology phase.
This leaves the natural questions:
is there really a $\ZZ_2$-protected beyond cohomology phase in $4+1$D, 
and, if so, what is its quantized response at the Hamiltonian lattice level?

In this paper we answer these questions by constructing an explicit commuting projector model 
for a $\ZZ_2$-protected $4+1$D beyond-cohomology phase,
and identifying a quantized invariant of gapped $\ZZ_2$-symmetric Hamiltonians 
that distinguishes it from the trivial and in-cohomology phases.
The model involves decorating $\ZZ_2$ domain walls with $3+1$D Walker-Wang models based on the ``$3$-fermion'' topological order;
the considerable technical challenges associated with consistently performing this decoration on non-flat $3+1$D geometries 
and fluctuating the domains occupies us for all of \cref{sec:model}.
One useful consequence of our construction is the existence 
of a $\ZZ_2$-symmetric disentangling circuit $\Udis$ for our ground state.%
\footnote{
Although $\Udis$ is overall $\ZZ_2$-symmetric,
the individual gates that make it up cannot all be $\ZZ_2$-symmetric for our model to describe a non-trivial SPT.
}

The $3$-fermion Walker-Wang model is an SPT of time reversal symmetry $\ZZ_2^T$,
so naively one might expect the decorated domain wall model to be an SPT of $\ZZ_2 \times \ZZ_2^T$.
However, it turns out that time reversal symmetry is not necessary,
and the model is an SPT of just the unitary $\ZZ_2$.
To substantiate this claim we define a quantized response invariant 
by probing the bulk with a non-flat $\ZZ_2$ gauge field configuration, 
namely a static $2$-spatial dimensional $\ZZ_2$ symmetry defect.
By choosing the Hamiltonian at the core of the defect appropriately 
we can ensure that there is no topological order (i.e., no anyons) living on the defect.
In a rough sense, which we make precise later, 
the non-trivial signature of the phase is then the fact that 
the defect carries half (modulo one) of the minimal quantized chiral central charge 
allowed for a $2+1$D invertible state of bosons, i.e., its chiral central charge is $4\,\,\rm{mod}\,\,8$.

Equivalently, we can understand the beyond-cohomology phase by studying its $3+1$D boundary, 
where we expect an anomalous action of the $\ZZ_2$ symmetry. 
In \cref{subsec:J} we will show that the signature of this phase 
is also encoded in a special property of this boundary symmetry action $\Xbdry$: 
namely, that $\Xbdry$ is non-trivial as a quantum cellular automaton (QCA)~\cite{Gross_2012, haah2018nontrivial}
--- in particular, it is not a finite depth circuit, despite preserving locality.
Such a boundary action is more severely anomalous than that of an in-cohomology SPT, 
which, despite not being onsite, is still a finite depth circuit.
A non-trivial QCA, on the other hand, cannot even be ``truncated'' to act on a portion of space,
leading to a breakdown at the second step of the Else-Nayak descent procedure 
that characterizes bulk SPT order in terms of boundary symmetry action~\cite{ElseNayak}.

Although the non-trivial QCA nature of $\Xbdry$ constitutes a well-defined quantized invariant, 
it is somewhat abstract, so it is desirable to have a concrete physical diagnostic for when $\Xbdry$ is non-trivial.
This can be done in several ways, 
all of which essentially encode the idea that a $\ZZ_2$ domain wall at the boundary, 
if gapped out in a way that avoids topological order, 
will have a chiral central charge of $4\,\,\rm{mod}\,\,8$.
For example, one can compactify one of the boundary directions and make two domains, 
with two domain walls that run parallel to the remaining two directions; 
the claim, as show in \cref{subsec:l} is then that 
if the domain walls are ``identical,'' in the sense of being related by a translation followed by $\Xbdry$, 
then the $2+1$D dimensionally reduced system has a central charge 
which is an odd multiple of $8$.  
Alternatively, we can study a single domain wall in isolation: 
using the fact that the regions away from the domain wall can be gapped using commuting projectors, 
in \cref{subsec:m} we generalize A.~Kitaev's bulk definition of chiral central charge~\cite[App.D]{Kitaev_2005} 
to allow it to be applied to a single domain wall.

A natural question to ask about anomalous boundaries is: 
what kind of symmetric states can the boundary accommodate?  
For the case of our $4+1$D beyond-cohomology SPT of $\ZZ_2$ 
we will show in \cref{sec:topoboundary} that the boundary anomaly is saturated by a certain $3+1$D $\ZZ_2$ symmetry enriched topological (SET) phase, 
whose underlying topological order is that of a $\ZZ_2$ gauge theory with a fermionic gauge charge. 
The anomalous property of this SET is encoded in its symmetry fractionalization pattern,
which is easiest to discuss in the framework of X. Chen and M. Hermele~\cite{HermeleChen}.
Namely, dimensionally reducing along one of the boundary directions leads to a quasi-$2+1$D system, 
in which we can make a loop of symmetry defect parallel to the uncompactified directions.
The anomalous property is then that inside the loop we have a $3$-fermion topological order 
whereas outside the loop we have the ordinary toric code.
More generally, the anomaly indicator is that the naive chiral central charges of the two topological orders,
computed mod $8$ from their anyon statistics, differ by $4$.
This is a new kind of anomaly for unitary $\ZZ_2$ symmetry in $3+1$D.

It may be surprising that field theory correctly predicts the existence of a beyond-cohomology phase 
even though it also identifies the GDS dual as a beyond-cohomology phase 
despite the GDS dual's equivalence to the in-cohomology phase in flat space.
One clue may come from our consideration of a ``phase rule'' 
for defining the phase of a given domain wall configuration in our model.
One natural choice of the phase rule that we discuss later comes from the Crane-Yetter TQFT 
and reproduces at least some properties of the GDS dual action,
in particular the dependence of the symmetry of the ground state on the Euler characteristic of the manifold.
However, a different phase rule that we present does not have these properties of the GDS dual action 
but still retains all the anomalous defect properties that we discuss above; 
hence, we believe the mixed gauge-gravity response in the action is not essential to defining the beyond-cohomology phase.

\section{Exactly solved $4+1$D model} \label{sec:model}
\newcommand{\smallR}{{r_{\rm mat}}}
\newcommand{\largeR}{{r_{\rm spin}}}
\newcommand{\Psiprod}{\Psi_{\rm prod}}
\newcommand{\unhealedQCA}{\xi}

Our primary technical contributions are the construction of a commuting projector Hamiltonian for a theory 
that we argue is a 4+1D beyond cohomology SPT and the construction of a circuit $\Udis$ which disentangles the ground state.
The ground state wavefunction $\Psi_0$ is a ``decorated domain wall'' construction.
Roughly speaking, the ground state has fluctuating spin degrees of freedom with additional degrees of freedom 
on the domain walls between spins, 
where the additional degrees of freedom are in the three-fermion Walker-Wang ground state~\cite{WalkerWang}.
See for example Ref.~\onlinecite{chen2014symmetry} for previous work on decorated domain walls, though that work realized
 in-cohomology SPT phases by the decorated domain wall construction; see also Ref.~\onlinecite{cordova2019decorated} which appeared while this paper was in preparation.

A key role will be played in our construction by both quantum circuits and quantum cellular automata (QCA),
so we briefly review the distinction.
A quantum circuit is a unitary $U$ which can be written as a product $U=U_d U_{d-1} \cdots U_1$
where each unitary $U_i$ for $1\leq i \leq d$ is a product of unitaries supported on disjoint sets of bounded diameter;
each of the individual unitaries in the product for $U_i$ is called a {\it gate}.
The index $i$ on $U_i$ labels the {\it round} and the number $d$ is called the {\it depth} and the bound on the diameter of the gates is called the {\it range} of the gates.
Implicitly, when we refer to a quantum circuit throughout this paper, we mean that the depth of the circuit and the range of the gates are both bounded by some $O(1)$ constants, independent of system size.

A QCA $\alpha$ is a $*$-automorphism of the algebra of operators, 
subject to certain locality constraints: 
given any operator $O$ supported on some site, 
the operator $\alpha(O)$ is supported within some distance $R$ (called the {\it range} of the QCA) of $O$.
The term ``$*$-automorphism'' means that $\alpha$ maps operators to operators,
while preserving the product structure and Hermitian conjugation structure, 
i.e., $\alpha(OP+Q)=\alpha(O) \alpha(P) + \alpha(Q)$ and $\alpha(O^\dagger)=\alpha(O)^\dagger$.
For any finite system, any $*$-automorphsim can be written as conjugation by a unitary: 
$\alpha(O)=V O V^\dagger$ for some $V$ depending on $\alpha$, 
and we will sometimes simply say that a unitary ``is a QCA'' to mean that conjugation by that unitary 
obeys the locality requirement of a QCA.  
Hence, every quantum circuit is a QCA, with the range of the QCA bounded 
by some function of the depth of the circuit and the diameter of the gates in the circuit.

However, not every QCA is a circuit.
A QCA that is not a circuit is called {\it nontrivial}.
A standard example is a ``shift" on a one-dimensional line~\cite{Gross_2012}.
Strong evidence has been presented for a three-dimensional nontrivial QCA $\alpha_{WW}$
that it is not even a combination of a quantum circuit and a shift~\cite{haah2018nontrivial}.
This QCA $\alpha_{WW}$ is constructed so that it disentangles the three-fermion Walker-Wang model ground state,
mapping it to a product state with all spins in the $Z$-direction.
More strongly, it maps a specific choice of commuting projector Hamiltonian for the three-fermion Walker-Wang model 
onto a sum of Pauli $Z$ operators on each qubit.
The nontrivial nature of this QCA plays a key role in our description of the effective boundary symmetry below,
and our argument as to why the Else-Nayak construction~\cite{ElseNayak} terminates.

Our Hamiltonian for a 4+1D beyond cohomology SPT is constructed by a unitary $\Udis$ 
which disentangles the decorated domain wall ground state $\Psi_0$, 
so that $\Udis \Psi_0$ is a product state which we write $\Psiprod$.
Since that product state is trivially the ground state of a commuting projector Hamiltonian (indeed, it will be a sum of Pauli operators on each qubit) we can conjugate that trivial commuting projector 
Hamiltonian $\Udis$ to obtain a commuting projector Hamiltonian for our beyond cohomology phase.
The unitary $\Udis$ will be realized by a quantum circuit of bounded depth and range.

This section is organized as follows. 
\cref{sec:geometry} describes the geometry of a cellulation of the four-manifold that we will use, defines the degrees of freedom of the model, and defines the state $\Psiprod$.
\cref{sec:UdisProperties}
gives a general construction of decorated domain walls on a fluctuating configuration of spins and gives defining properties of
the disentangler $\Udis$.
This subsection assumes the existence of a family of unitaries $\Umat$ which, roughly speaking, 
create a particular state on the domain walls, given a configuration of the spins.
The construction of these unitaries $\Umat$ in the particular case that 
we decorate with the three-fermion Walker-Wang model 
is in \cref{sec:3FWWarb,sec:QCAstitch,sec:phase}.
In \cref{sec:3FWWarb} we construct a three-fermion Walker-Wang state model an arbitrary domain wall configuration;
this requires extending the construction of the three-fermion Walker-Wang model state from a three-dimensional square lattice to more general three-manifolds.
In \cref{sec:QCAstitch}, we construct a disentangling QCA for this model.
Finally, \cref{sec:phase}, we fix a phase ambiguity in the unitary defined by this QCA.
An additional section, \cref{sec:CY}, sketches a different way to resolve the phase ambiguity.

\subsection{Geometry}
\label{sec:geometry}

We fix some Voronoi cellulation of the system generated by points in general position.
For each $4$-cell in the Voronoi cellulation, we have one qubit degree of freedom.
One may imagine this degree of freedom as lying somewhere in the center of the $4$-cell.
We refer to these degrees of freedom as the {\it spins}.
On each $3$-cell we have some additional degrees of freedom, which are also qubits; 
in our construction each $3$-cell will necessarily have a rather large number of these degrees of freedom 
($\gg 10^3$ such qubits).
We call these qubits on the $3$-cells the {\it material}.

The state $\Psiprod$ will be a product state with all spins in the $+1$ eigenstate of the Pauli $X$ operator and all material degrees of freedom in the $+1$ eigenstate of the Pauli $Z$ operator.

Note that for any configuration $\vec z$ of spins 
the subset of $3$-cells containing a domain wall forms a closed $3$-manifold $M_{\vec z}$. 
This is the reason for choosing a generic cellulation,
and it is completely analogous to the reason for choosing the double semion model in two dimensions 
to be defined on a cellulation with trivalent vertices (such
as a hexagonal tiling) so that any closed $1$-chain 
is a collection of closed loops without self-intersection.

We will arrange the material qubits within the $3$-cells 
so that (at least locally) they form a cubic lattice.  
We do this in three steps:
First, we triangulate each $3$-cell in some arbitrary way into $O(1)$ $3$-simplices
so that the closed $3$-manifold $M_{\vec z}$ is triangulated (a simplicial complex).
Second, we decompose every $3$-simplex into a union of $4=3+1$ cuboids.
Each cuboid occupies a portion of the $3$-simplex which is closer to a vertex of the $3$-simplex than any other vertex.
All the $4$ cuboids meet at a point in the center of the $3$-simplex.
This procedure is depicted in \cref{fig:cubulation}.

\begin{figure}
\includegraphics[width=\textwidth, trim={5ex, 75ex, 58ex, 0ex}, clip]{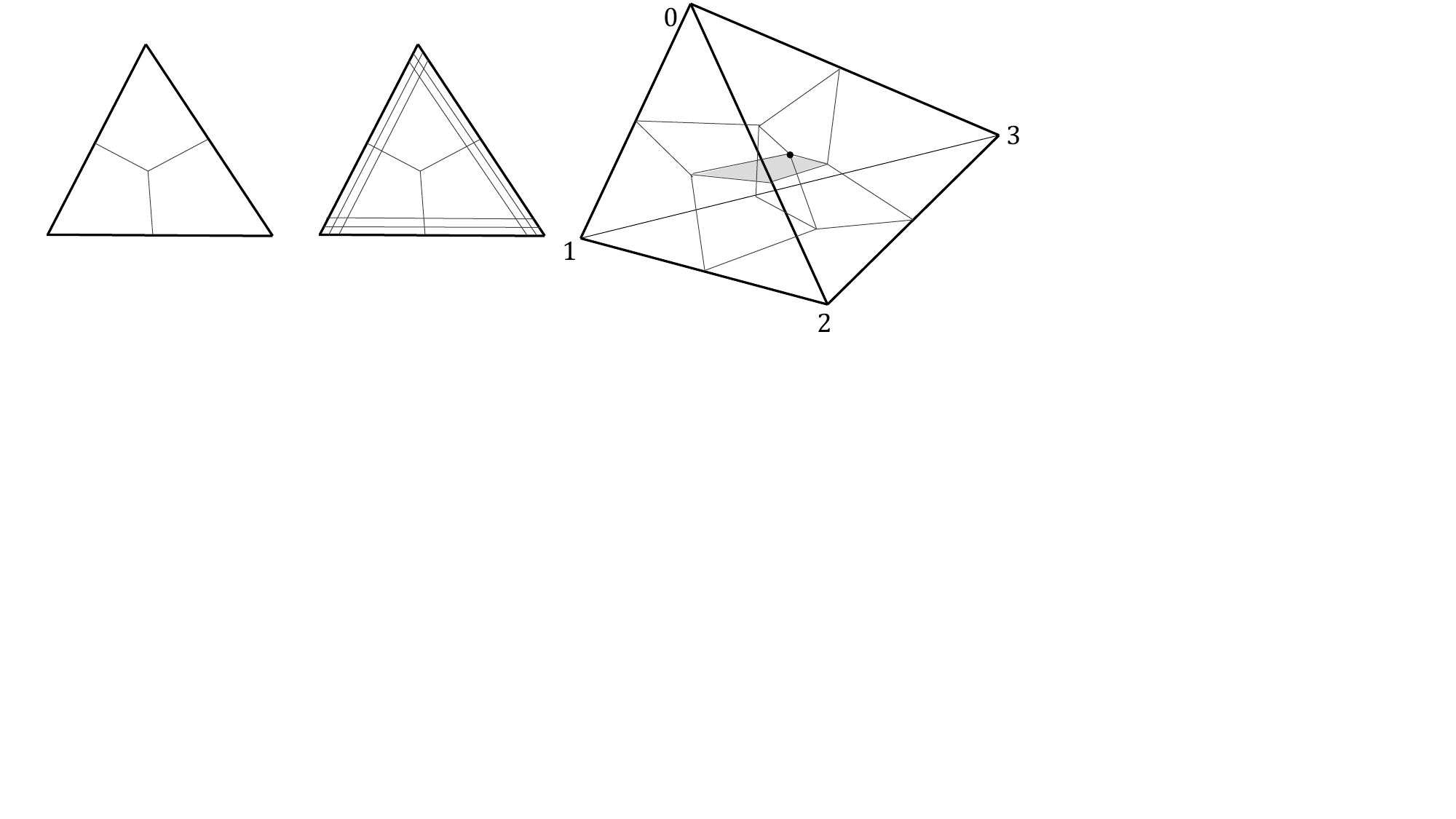}
\caption{Decomposition of simplices.  The leftmost figure shows the decomposition of a $2$-simplex into 3 quadrilaterals and the
third figure shows the subdivision of a $3$-simplex into 4 cuboids.
The black dot in the $3$-simplex lies at the center of the $3$-simplex.
Planes parallel to faces subdivide the simplex giving a refinement of the cubulation, so that near each vertex there are small parallelepipeds
forming a (shear transformed) cubic lattice;
we have depicted this subcubulation only for the $2$-simplex as shown in the second figure.
See \cref{rem:cubulation}.
The shaded quadrilateral in the third figure is $\square_{02}$.}
\label{fig:cubulation}
\end{figure}

This decomposition of $3$-simplices into cuboids gives a cubulation of the $3$-skeleton of the Voronoi cellulation.
Then, in the third step we refine this cubulation by subdividing each cuboid into $\ell \times \ell \times \ell$ subcubes in the obvious way,
where $\ell$ will be chosen sufficiently large later.
Finally, we place the qubits of the material at the edges of this cubulation.

\begin{remark}\label{rem:cubulation}
The topology of the $4\ell^3$ subcubes in a $3$-simplex $\Delta$ can be furnished
with various geometries, but we specify one that will be technically useful later.
Consider the standard $3$-simplex $\Delta$ in $\mathbb R^3$
defined by inequalities $x,y,z \ge 0$ and $x+y+z \le 1$.
Let $\epsilon \in (0,\frac 1 4)$ be a real number.
Bring four sets of planes defined by
\begin{align*}
&x = k \frac{\epsilon}{\ell},\qquad 
y = k \frac{\epsilon}{\ell}, \qquad 
z = k \frac{\epsilon}{\ell}, \qquad
x+y+z = 1 - k \frac{\epsilon}{\ell},
\end{align*}
where $k$ assumes any value among $1,2,\ldots, \ell -1$.
These $4(\ell-1)$ planes define four cubic lattice patches $\Lambda_v$ ($v=0,1,2,3$) inside $\Delta$
near the vertices $v \in \Delta$.
They all consist of small parallelepipeds. 
Six additional quadrilaterals $\square_{vv'}$ ($vv' = 01,02,03,12,13,23$) 
inside the $3$-simplex as depicted in \cref{fig:cubulation},
separate the cubic lattice patches;
the ``divider'' $\square_{vv'}$ sits in between $\Lambda_v$ and $\Lambda_{v'}$.
The divider $\square_{02}$ is shaded in \cref{fig:cubulation}.
Together, they subdivide the $3$-simplex into $4\ell^3$ small cuboids.

The advantage of this geometry is that 
each $\Lambda_v$ is a shear transform of (a finite part of) the standard cubic lattice in $\mathbb R^3$;
only the small cuboids at the intersection of two or three cubic lattice patches,
are not linear transforms of a standard cube in $\mathbb R^3$.
However, the induced subdivisions of the $2$- and $1$-cell in the intersection of cubic lattice patches 
are translation invariant within the $2$- and $1$-cell, respectively. 
$\diamond$
\end{remark}

The $\ZZ_2$ symmetry will act only on the spins of the system;
it will act as the product of Pauli $X$ on all spins, leaving the material unchanged.
By abuse of notation,
we will write $X$ to represent this symmetry operator whenever it is clear.

There are two \emph{microscopic} length scales $\smallR$ and $\largeR$ in our system:
The smaller $\smallR$ is the spacing between nearest neighbor degrees of freedom in the material.
The larger $\largeR = O(\ell \smallR)$ is the spacing between nearest neighbor degrees of freedom in the spins.
They are both microscopic scales that should be regarded as
constants in the limit of large system sizes,
but we sometimes distinguish them for more clear presentation.

\subsection{Defining properties of $\Udis$}\label{sec:UdisProperties}

We construct the unitary $\Udis$ as a controlled unitary, 
controlling its action on the material depending on the configuration of spins in the $Z$ basis.
That is, we write
\begin{align}
\Udis = \sum_{\vec z} \Pi^{spin}_{\vec z} \otimes \Umat_{\vec z},
\end{align}
where the sum is over configurations of spins in the $Z$ basis, written as $\vec z$.
The projector $\Pi^{spin}_{\vec z}$ acts on the spins, projecting onto
spin configuration $\vec z$, while the unitary $\Umat_{\vec z}$ acts on the material, and depends on $\vec z$.

We construct the state $\Psi_0$ so that in the interior of each $3$-cell, 
we have either a three-fermion Walker-Wang state or a product state,
depending on whether there is a domain wall on that $3$-cell,
i.e., there is a domain wall if $Z_i Z_j=-1$, 
where the two spins $i,j$ are in the two $4$-cells attached to that $3$-cell, 
and there is no domain wall if $Z_i Z_j=+1$.
This is achieved by demanding that $\Umat_{\vec z}$ 
should act as the identity on any material degree of freedom 
which is {\it not} in a $3$-cell containing a domain wall.
This includes any material degree of freedom in the interior of a $3$-cell 
if that $3$-cell does not contain a domain wall, 
as well as material in the $2$-skeleton 
so long as all $3$-cells attached to the given $2$-cell do not contain a domain wall.

We demand that $\Umat_{\vec z}$ be a QCA of range $O(\smallR)$ that disentangles
the Walker-Wang state on the three-manifold $M_{\vec z}$
defined by the domain wall of spin configuration $\vec z$, mapping that state to the product state with all material qubits in the $Z=+1$ state.
We demand that $\Umat_{\vec  z}$ depend only on the domain wall configuration, i.e., that
\begin{align}
\Umat_{\vec z} = \Umat_{-\vec z}
\end{align}
where $-\vec z$ is obtained from $\vec z$ by flipping all the spins.
This implies the $\ZZ_2$ symmetry 
\begin{align}
\Udis^\dagger \symX \Udis \symX=I.
\end{align}
Then, since $\Psiprod$ is a $+1$ eigenstate of $\symX$, 
the state $\Psi_0$ also is a $+1$ eigenstate of $\symX$.

We want $\Udis$ to preserve locality when it acts by conjugation;
$\Udis$ is going to be expressed as a circuit whose range is of order $\largeR$.
To this end, we require that
\begin{align}
(\Umat_{\vec z})^\dagger \Umat_{\vec z + i}&\text{ is an operator on the $O(R)$-ball centered at } i \label{eq:UU}
\end{align}
where $\vec z +i$ is a spin configuration obtained from $\vec z$ by flipping spin $i$. 
Furthermore, we require that
\begin{align}
(\Umat_{\vec z})^\dagger \Umat_{\vec z + i}  = (\Umat_{\vec z+ j})^\dagger \Umat_{\vec z + i+j}
\text{ for any $j$ that is $O(R)$-far from $i$.}
 \label{eq:UUUU}
\end{align}
We will show later that all these requirements are satisfied,
but it will be more instructive to see first how these imply that $\Udis$ is a circuit.

\begin{lemma}\label{lem:uqca}
Let $U$ be any controlled unitary
\begin{align}
U=\sum_{\vec z} \Pi^{spin}_{\vec z} \otimes V_{\vec z}
\end{align}
where for every spin configuration $\vec z$ the unitary $V_{\vec z}$ on material acts as a QCA of range $R$ by conjugation.
Then, $U$ acts as a QCA of range $O(R)$ by conjugation if and only if for any spin $i$
\begin{align}
(V_{\vec z})^\dagger V_{\vec z + i} &\text{ is an operator on the $O(R)$-ball centered at }i \label{UU} \\
 \text{ and } \quad (V_{\vec z})^\dagger V_{\vec z + i} &= (V_{\vec z+ j})^\dagger V_{\vec z + i+j} \text{ for any $j$ that is $O(R)$-far from $i$.}\label{UUUU}
\end{align}
\end{lemma}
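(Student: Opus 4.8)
The plan is to verify the QCA property on a generating set of single-site operators and to show that the two displayed conditions are precisely what localizes the nonlocality introduced by the global spin projectors $\Pi^{spin}_{\vec z}$. First I would record the elementary commutations: every material operator commutes with each $\Pi^{spin}_{\vec z}$, the Pauli $Z_i$ commutes with all $\Pi^{spin}_{\vec z}$, and $X_i\,\Pi^{spin}_{\vec z}=\Pi^{spin}_{\vec z+i}\,X_i$. From the first two one gets $U Z_i U^\dagger=Z_i$ immediately, so diagonal spin operators are trivially fine. Conjugates of $X_i$ and of material operators both require the localization argument below; since that argument is strictly easier for a material operator (no $X_i$ prefactor and no $W_{\cdot,i}$ to track), I would present the crux for $X_i$ and remark that the material case is analogous. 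By the homomorphism property, controlling these single-site generators suffices to control $U(\cdot)U^\dagger$ on all local operators.

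The key computation uses $X_i\Pi^{spin}_{\vec z}=\Pi^{spin}_{\vec z+i}X_i$ and $\Pi^{spin}_{\vec z}\Pi^{spin}_{\vec z'}=\delta_{\vec z,\vec z'}\Pi^{spin}_{\vec z}$ to collapse the double sum into
\[
U X_i U^\dagger \;=\; X_i\sum_{\vec z}\Pi^{spin}_{\vec z}\,V_{\vec z+i}V_{\vec z}^\dagger .
\]
I introduce two ``defect'' operators, $W_{\vec z,i}:=V_{\vec z}^\dagger V_{\vec z+i}$ (the object constrained by \eqref{UU}--\eqref{UUUU}) and $\tilde W_{\vec z,i}:=V_{\vec z+i}V_{\vec z}^\dagger$ (the one appearing above), and note the identity $\tilde W_{\vec z,i}=V_{\vec z}\,W_{\vec z,i}\,V_{\vec z}^\dagger$. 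Because each $V_{\vec z}$ is a QCA of range $R$, conjugation by $V_{\vec z}$ enlarges supports by at most $R$; hence $W_{\vec z,i}$ is supported on an $O(R)$-ball about $i$ if and only if $\tilde W_{\vec z,i}$ is, so \eqref{UU} passes freely between the two forms.

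For the ``if'' direction it remains to see that $\sum_{\vec z}\Pi^{spin}_{\vec z}\tilde W_{\vec z,i}$ is supported near $i$ despite the global projectors. The mechanism is that \eqref{UUUU} makes $\tilde W_{\vec z,i}$ insensitive to flipping a far spin $j$: writing $V_{\vec z+j}=V_{\vec z}W_{\vec z,j}$ gives $\tilde W_{\vec z+j,i}=V_{\vec z}\,W_{\vec z,j}\,W_{\vec z+j,i}\,W_{\vec z,j}^\dagger\,V_{\vec z}^\dagger$, and then \eqref{UUUU} replaces $W_{\vec z+j,i}$ by $W_{\vec z,i}$ while \eqref{UU} lets $W_{\vec z,j}$ (near $j$) commute through $W_{\vec z,i}$ (near $i$), collapsing the expression back to $\tilde W_{\vec z,i}$. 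Thus $\tilde W_{\vec z,i}$ depends only on the spins in a ball $B$ about $i$; grouping the sum by $\vec z|_B$ and using $\sum_{\vec z|_{B^c}}\Pi^{spin}_{\vec z|_{B^c}}=I$ localizes the whole operator to $B$. For the converse I would run this backwards: assuming $U$ is a QCA, $X_i\,(U X_i U^\dagger)=\sum_{\vec z}\Pi^{spin}_{\vec z}\tilde W_{\vec z,i}$ is supported on an $O(R)$-ball $B$, and projecting with $\Pi^{spin}_{\vec z}$ recovers each $\tilde W_{\vec z,i}$, showing it lives in $B$ and depends only on $\vec z|_B$; this yields \eqref{UU} via $W_{\vec z,i}=V_{\vec z}^\dagger\tilde W_{\vec z,i}V_{\vec z}$ and \eqref{UUUU} by the same $V_{\vec z+j}=V_{\vec z}W_{\vec z,j}$ manipulation.

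I expect the main obstacle to be the radius bookkeeping: keeping all support balls honestly $O(R)$ through the repeated conjugations by the $V_{\vec z}$ (each costing a factor of $R$), and in particular fixing the constant hidden in ``$O(R)$-far'' in \eqref{UUUU} large enough that the near-$j$ operator $W_{\vec z,j}$ and the near-$i$ operator $W_{\vec z,i}$ have disjoint support and genuinely commute. Everything else is the algebra of controlled unitaries together with the homomorphism property, which confines the whole verification to the single-site generators $X_i$, $Z_i$, and the material Paulis.
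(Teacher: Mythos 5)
Your proof is correct, and its skeleton is the same as the paper's: reduce to single-site generators (spin $Z_i$, spin $X_i$, material operators), identify the defect unitaries $(V_{\vec z})^\dagger V_{\vec z+i}$ as the objects constrained by \eqref{UU}--\eqref{UUUU}, and use far-flip insensitivity plus commutation of disjointly supported defects. Where you differ is in mechanics: you conjugate in the direction $U(\cdot)U^\dagger$, which makes the combination $\tilde W_{\vec z,i}=V_{\vec z+i}V_{\vec z}^\dagger$ appear and forces the extra transfer step $\tilde W_{\vec z,i}=V_{\vec z}W_{\vec z,i}V_{\vec z}^\dagger$, whereas the paper uses $U^\dagger(\cdot)U$, in which the constrained operators $(V_{\vec z+i})^\dagger V_{\vec z}=W_{\vec z,i}^\dagger$ show up directly and no transfer is needed. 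More substantively, you certify locality \emph{constructively}: you prove $\tilde W_{\vec z,i}$ depends only on $\vec z|_B$ and then resum $\sum_{\vec z}\Pi^{spin}_{\vec z}\otimes\tilde W_{\vec z,i}$ into an operator manifestly supported on the ball $B$, while the paper certifies locality by checking that $\beta(P_x)$ and $\beta(X_i)$ commute with all far-away operators (group-commutator computations such as $\beta(X_i)X_j\beta(X_i)X_j=I$ via \eqref{UUUU}), implicitly invoking the commutant characterization of support. The paper's route is shorter; yours makes the $O(R)$ bookkeeping and the ``depends only on nearby spins'' structure explicit, and the same resummation template cleanly handles both directions of the equivalence, including the converse where you extract each $\tilde W_{\vec z,i}$ by projecting with $\Pi^{spin}_{\vec z}$ -- which matches the paper's converse in content. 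One small point to make airtight in a final write-up: in the converse, to pass from $\tilde W_{\vec z+j,i}=\tilde W_{\vec z,i}$ to \eqref{UUUU} you need $W_{\vec z,j}$ and $W_{\vec z,i}$ to commute, so you should first record that the converse yields \eqref{UU} for \emph{all} spins (including $j$) before running the $V_{\vec z+j}=V_{\vec z}W_{\vec z,j}$ manipulation; with that ordering your sketch closes completely.
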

Note that in this lemma $V(\vec z)$ is \emph{not} necessarily supported on the domain wall determined by~$\vec z$;
it is simply a locality preserving unitary.
\begin{proof}
Define QCA $\beta(\cdot)$ by  $\beta( O ) = U^\dagger O U$.

($\Leftarrow$)
For any unitary~$P_x$ on a single qubit $x$ on the material, 
we have 
$\beta(P_x) = U^\dagger P_x U = \sum_{\vec z} \Pi^{spin}_{\vec z} \otimes (V_{\vec z})^\dagger P_x V_{\vec z}$.
This commutes with any material operator far away from $x$ (since $V_{\vec z}$ acts as a QCA by conjugation) and any spin $Z$ operator.
This also commutes with any spin operator $X_i$ at~$i$ far from~$x$
because
$\beta(P_x^\dagger) X_i \beta(P_x) X_i = \sum_{\vec z} \Pi^{spin}_{\vec z} \otimes 
(V_{\vec z})^\dagger P_x^\dagger V_{\vec z} (V_{\vec z + i})^\dagger P_x V_{\vec z + i} = I$
where $P_x$ and $V_{\vec z} (V_{\vec z + i})^\dagger$ commute by~\eqref{UU}.
So, $\beta(P_x)$ is local.

For any single spin operator~$Z_i$ at~$i$, we see~$U$ commutes with $Z_i$.

For any single spin operator~$X_i$ at~$i$,
we have $\beta(X_i) = \sum_{\vec z} \ket{\vec z + i} \bra{\vec z} \otimes (V_{\vec z + i})^\dagger V_{\vec z}$.
Here, $\beta(X_i)$ commutes with all material operators far from~$i$ by \eqref{UU},
as well as with any spin operator~$Z$.
For any single spin operator~$X_j$ at~$j$ far from~$i$,
we have $\beta(X_i) X_j \beta(X_i) X_j = \sum_{\vec z} \Pi^{spin}_{\vec z + j} \otimes (V_{\vec z + j})^\dagger V_{\vec z + i + j}(V_{\vec z + i})^\dagger V_{\vec z} = I$
by~\eqref{UUUU}.

($\Rightarrow$)
If $U$ acts as a QCA by conjugation,
$\beta(X_i)$ is supported near spin $i$ for a spin Pauli operator~$X_i$ at~$i$.
Hence, since this operator must commute with $X_j$ for $j$ far from $i$, not only is 
$(V_{\vec z+i})^\dagger V_{\vec z}$ an operator on the material supported near spin $i$, 
it is \emph{equal} to
$(V_{\vec z+i+j})^\dagger V_{\vec z+j}$ for any $j$ far from $i$.
In words, 
$(V_{\vec z+i})^\dagger V_{\vec z}$ depends only on spins near $i$.
\end{proof}

Note that
$\Udis$ acting by conjugation as a QCA is a stronger requirement 
than just that $\Umat_{\vec z}$ acts by conjugation as a QCA as it also imposes locality requirements on the spin degrees of freedom.
Heuristically, it means that changing $\vec z$ locally will only change the action of $\Umat_{\vec z}$ locally.

Following~\cite{haah2018nontrivial},
we believe that if one restricts $\Umat_{\vec z}$ to just the material degrees of freedom in the domain wall manifold $M_{\vec z}$,
then it is nontrivial, i.e., it cannot be written as a quantum circuit of depth~$O(\smallR)$
acting just on those material degrees of freedom.
However,~$\Umat_{\vec z}$ can be written as a quantum circuit 
when one acts on {\it all} the material degrees of freedom, not just those in~$M_{\vec z}$.
To see this, recall that we require~$(\Umat_{\vec z})^\dagger \Umat_{\vec z + i}$ to be a local operator.
We can find a sequence of spin configurations,
starting with all entries equal to~$+1$, and ending at a given~$\vec z$, 
differing only by flipping a single spin at a time.
Taking the product of the local operators corresponding to these spin flips,
and doing far separated spin flips in parallel,
gives a circuit for~$\Umat_{\vec z}$.

We use a similar idea to show that $\Udis$ can be realized as a circuit.
\begin{lemma}
\label{lem:circfromcontrolledqca}
Let $U$ be any controlled unitary
\begin{align}
U=\sum_{\vec z} \Pi^{spin}_{\vec z} \otimes V({\vec z}),
\end{align}
such that conjugation by $U$ acts as a QCA with range $R = O(1)$.
Assume that $V(\vec z)=I$ if~$\vec z_i = +1$ for all~$i$.
Then, 
$U$ can be written as a quantum circuit of depth $O(1)$ with gates having range $O(1)$.
\end{lemma}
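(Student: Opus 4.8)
The plan is to realize $U$ as a constant-depth circuit of controlled local gates, each of which reads a bounded neighborhood of the spins and applies a local material unitary. The idea extends the single-$\vec z$ argument in the paragraph preceding \cref{lem:uqca} to the full controlled unitary by building up the configuration $\vec z$ from the all-$(+1)$ reference one spin-flip at a time, and parallelizing far-separated flips. First I would record, via \cref{lem:uqca}, that since conjugation by $U$ is a QCA of range $R$ the local difference operators
\[
W_i(\vec z) := (V_{\vec z})^\dagger\, V_{\vec z+i}
\]
satisfy \eqref{UU} and \eqref{UUUU}: each $W_i(\vec z)$ is supported on the $O(R)$-ball around $i$ and, by \eqref{UUUU}, depends only on the spins within $O(R)$ of $i$.

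Next I would fix a coloring of the spins by $C=O(1)$ colors so that any two spins of the same color are at distance greater than $c_1 R$, with $c_1$ chosen so that the $O(R)$-balls around equally-colored spins are pairwise disjoint; such a coloring exists because in our fixed-dimensional bounded-geometry cellulation only $O(1)$ spins lie within distance $O(R)$ of any given spin. Let $\vec z^{[c]}$ denote the configuration agreeing with $\vec z$ on all spins of color $\le c$ and equal to $+1$ on all spins of color $>c$, so that $\vec z^{[0]}$ is the all-$(+1)$ configuration (whence $V_{\vec z^{[0]}}=I$ by hypothesis) and $\vec z^{[C]}=\vec z$. Telescoping then gives
\[
V_{\vec z} = \bigl[(V_{\vec z^{[C-1]}})^\dagger V_{\vec z^{[C]}}\bigr]\cdots \bigl[(V_{\vec z^{[0]}})^\dagger V_{\vec z^{[1]}}\bigr].
\]

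The step $(V_{\vec z^{[c-1]}})^\dagger V_{\vec z^{[c]}}$ flips exactly the color-$c$ spins $i$ with $z_i=-1$, which are mutually $O(R)$-far; flipping them one at a time and repeatedly invoking \eqref{UUUU} to discard the already-performed far-away flips, I would show this step factorizes into a product of commuting local operators $\prod_{i}W_i(\vec z^{[c-1]})$ over color-$c$ spins with $z_i=-1$. I then promote each factor to a controlled gate $G^{(c)}_i$ that reads the spins in the $O(R)$-ball around $i$, reconstructs the neighborhood of $\vec z^{[c-1]}$ from the fixed, known coloring (keeping colors $<c$ as read and resetting colors $\ge c$ to $+1$), and applies $W_i(\vec z^{[c-1]})$ precisely when $z_i=-1$. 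Because $W_i(\vec z^{[c-1]})$ depends only on spins near $i$, this is a legitimate local controlled unitary of range $O(R)$, and the round $\mathcal R_c=\prod_{\mathrm{color}(i)=c}G^{(c)}_i$ is a single layer of disjointly-supported gates. Since each $\mathcal R_c$ is block-diagonal in the spin $Z$-basis, the product $\mathcal R_C\cdots\mathcal R_1$ multiplies blockwise and reproduces the telescoped identity above, yielding $\mathcal R_C\cdots\mathcal R_1=U$ as a depth-$C=O(1)$ circuit with range-$O(R)=O(1)$ gates.

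The main obstacle, and the only genuinely nontrivial point, is that $W_i$ a priori depends on the full intermediate configuration, whereas a circuit gate may depend only on a bounded neighborhood of the fixed input $\vec z$. This is exactly what \eqref{UUUU} resolves: it lets me replace the true intermediate configuration by one that is locally reconstructible from $\vec z$ together with the fixed coloring, since the configurations differ only at spins far from $i$, which are invisible to $W_i$. The coloring plays a dual role, keeping the depth bounded and ensuring the far-separation hypothesis of \eqref{UUUU} is met within each round, so I would take care to verify that the reconstruction of $\vec z^{[c-1]}$ near $i$ is consistent across all gates in a round and that the disjointness of supports makes the within-round ordering immaterial.
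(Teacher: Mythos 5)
Your proposal is essentially the paper's own proof. The paper colors the spins by $O(1)$ colors of tiles, defines the partial controlled unitaries $U_{S_a}=\sum_{\vec z}\Pi^{spin}_{\vec z}\otimes V(\vec z_{S_a})$ in which spins outside $S_a$ are treated as $+1$, and shows that each increment $U_{S_{a+1}}U_{S_a}^\dagger$ is a depth-one layer of local controlled gates; your per-spin coloring, the telescoping through the configurations $\vec z^{[c]}$, and the gates $G^{(c)}_i$ are the same construction, with the locality of each gate secured by \eqref{UU} and \eqref{UUUU} just as the paper secures it via the QCA property of $U_{S\cup T}$.

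One slip should be fixed: operators belonging to different colors need not commute (a color-$1$ spin can be adjacent to a color-$2$ spin, so the supports of $W_i$ from different rounds can overlap), hence the order of the telescoping matters, and your displayed identity is written backwards. The product that actually cancels is
\begin{align*}
V_{\vec z} \;=\; \bigl[(V_{\vec z^{[0]}})^\dagger V_{\vec z^{[1]}}\bigr]\,
\bigl[(V_{\vec z^{[1]}})^\dagger V_{\vec z^{[2]}}\bigr]\cdots
\bigl[(V_{\vec z^{[C-1]}})^\dagger V_{\vec z^{[C]}}\bigr],
\end{align*}
so the circuit equal to $U$ is $\mathcal R_1\mathcal R_2\cdots\mathcal R_C$ as an operator product (round $C$ applied first in time), not $\mathcal R_C\cdots\mathcal R_1$. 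With the order reversed, every step of your argument goes through unchanged.
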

\begin{proof}
Tile the four-manifold with five different colors of tiles, such that no two tiles of the same color are within distance $2R$ of each other, and such that each tile has diameter $O(R)$.  Call the colors of the tiles $0,1,2,3,4$.
One may derive the tiling from thickening a cell decomposition, with a given color $d$ obtained by thickening $d$-cells.  Indeed, any $O(1)$ number of colors in the tiling would work in the following construction.

For any set $S$ define $\vec z_S$ so that $(\vec z_S)_i=\vec z_i$ for $i\in S$ but $(\vec z_S)_i=1$ for $i\not\in S$.
Let $S_a$ be the set of all spins in tiles colored $b$ for $b\leq a$ so that $S_4$ contains all spins.
Let $S_{-1}=\emptyset$.
Define $U_S=\sum_{\vec z} \Pi^{spin}_{\vec z} \otimes V(\vec z_{S})$.
So, $U_{S_4}=U$.
We will show that $U_{S_{a+1}} U_{S_a}^\dagger$ is a quantum circuit of depth~$1$
with gates of range~$O(1)$ for any $a \in \{-1,0,1,2,3\}$ 
from which the lemma follows by composing these circuits.

To show this claim, we consider any set $S$ containing some tiles and any tile $T$ not in $S$ and show
that $U_{S\cup T} U_S^\dagger$ is a quantum circuit containing a single gate 
supported within distance~$R$ of the tile~$T$.
Then, all the gates for tiles of a single color can be executed in parallel in a quantum circuit,
showing that $U_{S_{a+1}} U_{S_a}^\dagger$ is a quantum circuit.

Let $X_{T,\vec z}$
be the product of $X_i$ over $i\in T$ with $\vec z_i=-1$.
Acting on the subspace with $Z_i=\vec z_i$ for $i\in T$, we have
$U_{S\cup T} U_S^{\dagger}=U_{S\cup T} X_{T,\vec z} U_{S\cup T}^{\dagger} X_{T,\vec z}$, i.e.,
\begin{align}
U_{S\cup T} U_S^{\dagger}=\sum_{\vec z} U_{S\cup T} X_{T,\vec z} U_{S\cup T}^{\dagger} X_{T,\vec z} \Pi^{spin}_{\vec z}.
\end{align}
Then, since conjugation by $U_{S\cup T}$ also acts as a QCA of range $R$ (see next paragraph), this product $U_{S\cup T} U_S^{\dagger}$ is supported within distance $R$ of $T$.  
Hence, since this operator is local, it can be written as a quantum circuit of depth $1$ with a single gate.

To see that conjugation by $U_{S\cup T}$ also acts as a QCA of range $R$,  put $S' = S \cup T$.  $U_{S'}$ is still in the form of \cref{lem:uqca} with $V(\vec z_{S'})$ in place of $V(\vec z)$.
But $V(\vec z_{S'})$ obviously satisfies the two conditions in \cref{lem:uqca}, hence conjugation by $U_{S\cup T}$ acts as a QCA of range $R$ .
\end{proof}

\subsection{Three-fermion Walker-Wang state on arbitrary $3$-manifold} \label{sec:3FWWarb}

Here we consider the three-fermion Walker-Wang (3FWW) model~\cite{WalkerWang,BCFV} on an arbitrary closed $3$-manifold~$M$
that is possibly \emph{non}orientable.%
\footnote{
The orientability is not important for 3FWW,
but our $3$-manifold is a domain wall between spin up and down
and hence is orientable.
}
The original Walker-Wang model~\cite{WalkerWang} 
is defined on a cellulation of an oriented 3-manifold
whose 1-skeleton is a trivalent graph.
The Hamiltonian terms are designed to drive fluctuations of closed string configurations
where string segments are labeled by simple objects of a unitary braided fusion category.
They are decorated by the $F$- and $R$-symbols 
such that each string configuration has correct amplitude (relative to the empty configuration)
when it is interpreted as an anyon fusion/braiding process.
Hence, the trivalency of the 1-skeleton is generally important in the construction since $F$-symbols may be nontrivial.
This trivalency requires that one resolves any high valency vertex of a general cellulation 
into several trivialent vertices, which results in a complicated (albeit systematic) formula for the Hamiltonian terms.

Specializing to the three-fermion theory as input algebraic data to the Walker-Wang prescription,
Ref.~\cite{BCFV} gives a much simpler lattice Hamiltonian in the \emph{flat} cubic lattice in $\RR^3$,
but the verification that this simplified version is what the Walker-Wang prescription gives,
is not presented in~\cite{BCFV}.
There is a technical reason we prefer the version of Ref.~\cite{BCFV} --- 
each term is a tensor product of Pauli operators;
we believe this special property is not necessary in the end,
but our tool to construct a disentangling QCA is bounded by this.

We need to combine the good features of the two different versions: 
we have to deal with arbitrary orientable closed $3$-manifolds $M$
and at the same time we want the terms of the Hamiltonian to be a tensor product of Pauli operators.
To this end, instead of simplifying the complicated formula of the original construction
to show that in some gauge choice the formula gives Pauli operators,
we reinterpret the simpler version from a more topological perspective,
from which the extension to arbitrary orientable $3$-manifolds will be immediate.
Our reinterpretation will be somewhat specific to the three-fermion theory
that has all the $F$-symbols trivial.
All our claims in regards to the construction 
will be proved without referring to Refs.~\cite{BCFV,WalkerWang}.

For clarity, we first consider the situation 
where $M$ is the flat 3-torus, cellulated with small cubes.
We define two lattices, called primary and secondary, where the primary lattice is
the cubic lattice and the secondary lattice is slightly shifted along $(111)$-direction.
The distance the secondary lattice is shifted is smaller than the half of the lattice spacing $\smallR$.
The three-fermion theory has four anyons $\{1,f_1,f_2,f_3\}$ where $1$ means the vacuum.
We take $\{f_1,f_2\}$ be the generators of the fusion group; 
$f_3$ is always regarded as a bound state of $f_1$ and $f_2$.
We assign one qubit per edge in the primary lattice 
and interpret the state of a qubit in $Z$-basis
as the occupation number of $f_1$.
Similarly, we assign one qubit per edge in the secondary lattice
whose basis state is represented by occupancy of $f_2$ on the edge.
The basis of the full Hilbert space is identified with string segments of $f_1$ and $f_2$.
Since we want the ground state to be superposition of closed string configurations,
we put Gauss law terms at each vertex of the primary and secondary lattice.
They are a tensor product of six Pauli-$Z$ matrices in the cubic lattice.\footnote{Our convention exchanges the roles of $X$ and $Z$ Pauli operators, compared to that of Ref.~\cite{BCFV}.}

To define plaquette terms, we need to pick a projection of the 3-dimensional lattice onto a 2-plane.
For definiteness, we choose $(111)$-direction.
If we project a square with all the dangling edges (there are $4\cdot 4 = 16$ of them) attached,
then there are exactly two dangling edges that lie inside the square.
We regard a worldline of $f_1$ along one of these two interior dangling edges
as being twisted upon the loop insertion along the square;
the state acquires $-1$ in the amplitude upon the insertion of the loop 
if there is an odd number of $f_1$ string segments on the interior dangling edges.
A worldline of $f_1$ along any other edge is regarded as being unaffected upon the loop insertion along the square.
The role of the projection is to distinguish these two cases.
The same rule applies for $f_2$ on the secondary lattice.
This is an implementation of the fact that $f_1$ and $f_2$ have topological spin $-1$.
Note that here we are relying on the fact that 
(i) the fusion rule is completely trivial and
(ii) the topological spins for both $f_1$ and $f_2$ are real so we do not have to assign an orientation for twist.
If the edge on the secondary lattice that penetrates the square is occupied,
the loop insertion along the square amounts to braiding $f_2$ around $f_1$, which means that the state acquires the braiding phase, 
namely $-1$, in the amplitude upon the insertion of the loop.  
Again, since the modular $S$-matrix of the $3$-fermion theory is real we do not orient the braiding.

Thus, the prescription for the plaquette term in the primary lattice is 
to take the tensor product of
Pauli $X$'s on the edges of a square, 
one Pauli $Z$ for each of the two edges that lie inside the square upon projection, and
one Pauli $Z$ for the edge on the secondary lattice that penetrates the square.
In the current specific case of the cubic lattice,
a primary plaquette term is a product of $7$ Pauli matrices.
The prescription for the secondary plaquette term is parallel and likewise gives a product of $7$ Pauli matrices.
If we shift the secondary cubic lattice back along $(111)$-direction
to overlay it with the primary lattice,
we recover precisely the Hamiltonian $H_{cubic}$ of Ref.~\citep{BCFV}
as drawn in \cref{fig:BCFV-terms}.

\begin{figure}
\includegraphics[width=0.8\textwidth,trim={1cm 6cm 5cm 3cm},clip]{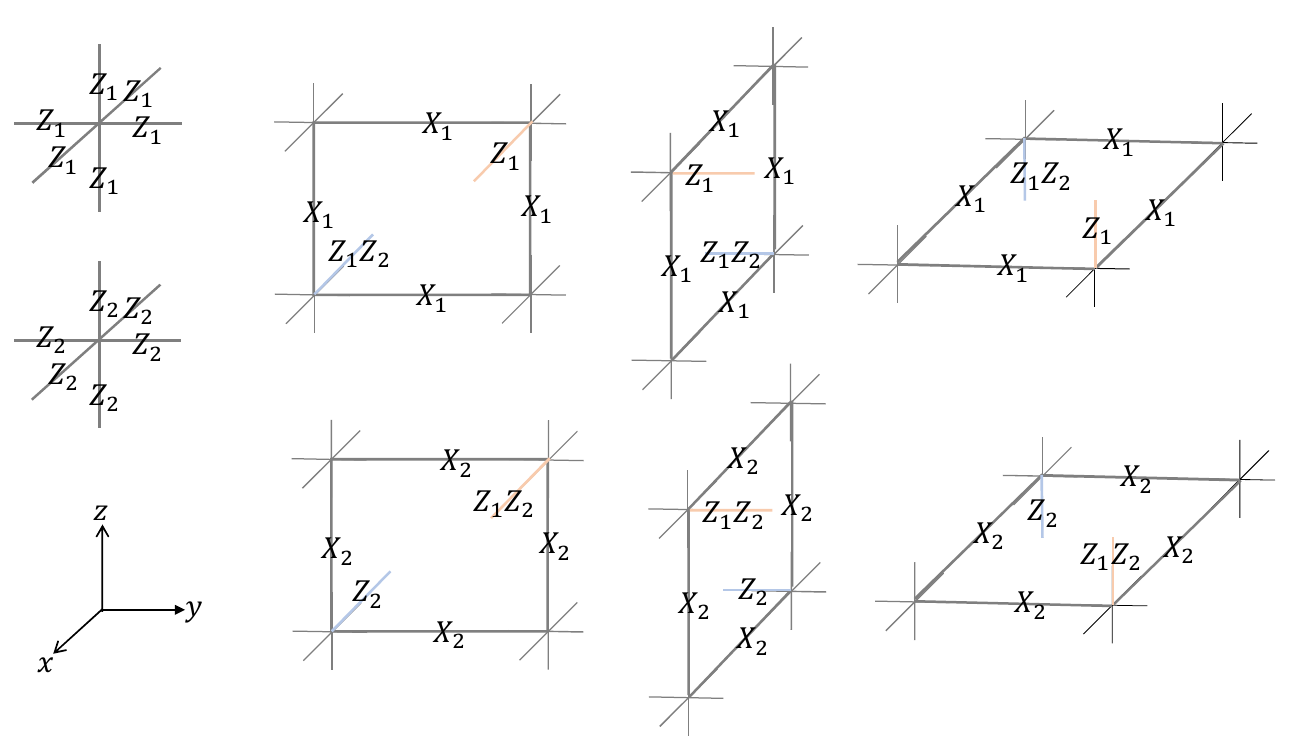}
\caption{
Terms of the 3-fermion Walker-Wang model $H_{cubic}$ on the cubic lattice.
The subscripts $1,2$ distinguish the primary and secondary lattices.
}
\label{fig:BCFV-terms}
\end{figure}

Now we define Hamiltonian terms on an arbitrary closed 3-manifold $M$.
Our general construction will allows us to choose either a triangulation or cubulation of $M$ as the primary lattice.
We pick a piecewise linear map $\phi$ from the $2$-skeleton $M^2$ to $\mathbb R^2$, called the {\it projection},
subject to the condition that $\phi$ be {\em injective} on every $2$-cell and 
that any intersecting pair of distinct edges $e \in f, e' \in f', e \cap e' \neq \emptyset$ 
of adjacent $2$-cells $f,f'$
have {\em transverse} (non-tangential) images under projection.
The projection is an arbitrary choice.
(In the original Walker-Wang prescription~\cite{WalkerWang}, this is done locally by an oriented branching.)
Heuristically speaking,
the projection is a drawing of $M^2$ on a piece of paper.
The injectivity means that the boundary of every $2$-cell must be a closed path without self-intersection,
and the transversality means that we must be able to decide whether a dangling edge to a $2$-cell
that is not contained in the boundary of the $2$-cell 
touches the boundary from outside or from inside.
See \cref{fig:proj-condition}.
The injectivity on every $2$-cell and the transversality implies that 
two $2$-cells that meet along an edge will be projected to one of the configurations in \cref{fig:p-p-commutativity}.
We do not know if such a projection always exists for an arbitrary cellulation.

\begin{figure}
\centering
\includegraphics[width=0.8\textwidth, trim={0ex 80ex 100ex 0ex}, clip]{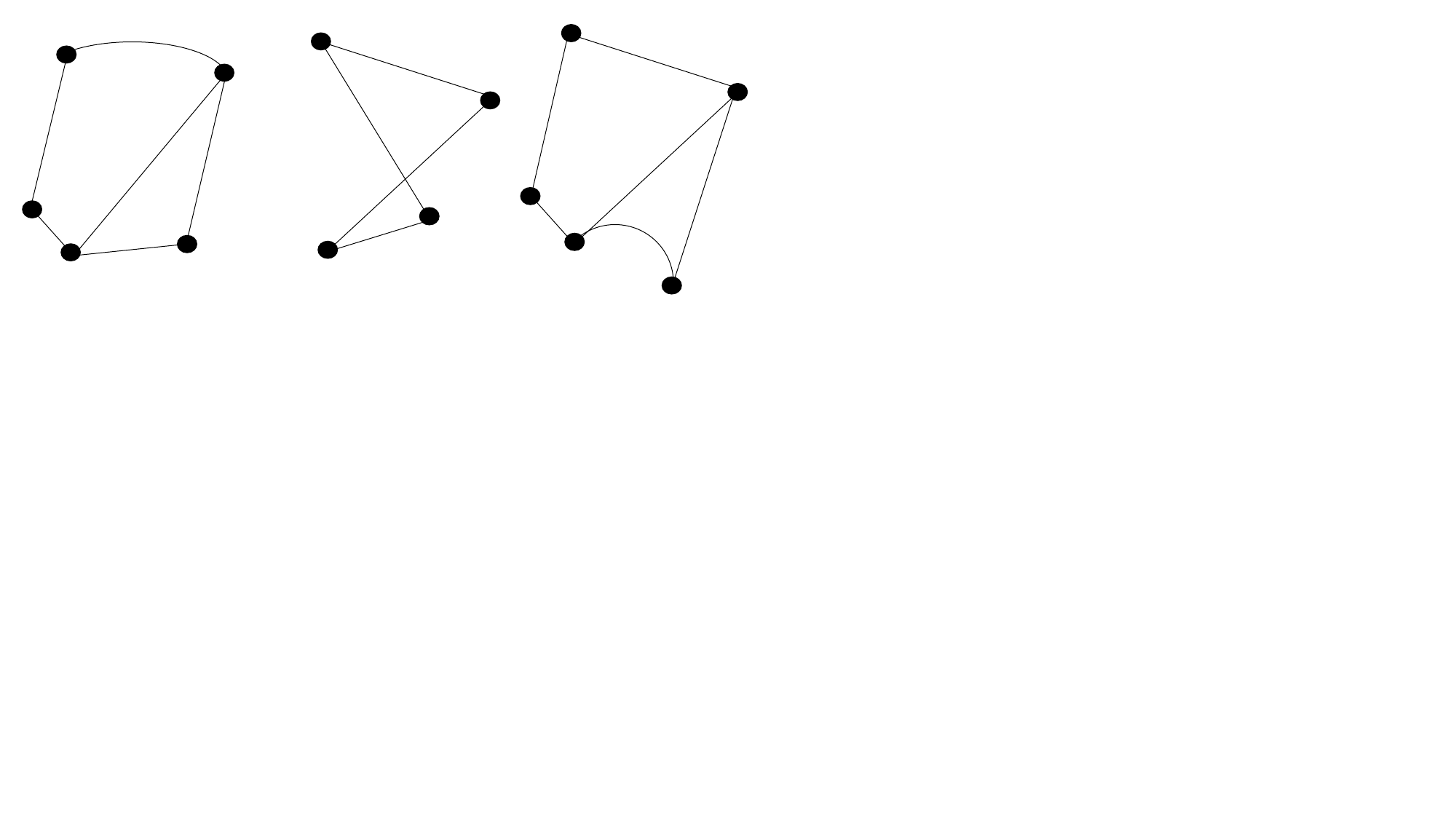}
\caption{Valid and invalid projections. 
A $2$-cell should be projected injectively, which is not the case in the second figure. 
All the adjacent projected edges to a projected $2$-cell 
should meet the boundary transversely, which is not the case in the third figure.
Among these three examples, only the first is valid.}
\label{fig:proj-condition}
\end{figure}

For a triangulation of $M$, the required projection exists as follows.
Choose a generic set of points on $\mathbb R^2$
corresponding to the vertices of the triangulation,
and connect them by straight lines according to $1$-skeleton of the triangulation.
Since any three points are in general positions, a triangle of $M$ is embedded into $\mathbb R^2$.
Using barycentric coordinates, the map for a triangle with vertices $v_0,v_1,v_2$ is 
$\phi( \lambda_0 v_0 + \lambda_1 v_1 + \lambda_2 v_2) = \lambda_0 \phi(v_0) + \lambda_1 \phi(v_1) + \lambda_2 \phi(v_2)$
where $\sum_j \lambda_j = 1$ and $\lambda_j \ge 0$.

For a cubulation that is a refinement of a triangulation according to \cref{sec:geometry},
we choose a particular projection as follows.
First, we consider a valid projection $\phi$ for the parent triangulation, 
as constructed in the previous paragraph.
Second, we extend $\phi$ to the $3$-skeleton of the parent triangulation
by the drawing in \cref{fig:embed-dividers};
using barycentric coordinates, the extension $\phi'$ is defined as
$\phi'(\sum_{j=0}^3 \lambda_j v_j) = \sum_{j=0}^3 \lambda_j \phi(v_j)$.
This is what we visually see in \cref{fig:embed-dividers}.
Third, we cubulate every $3$-simplex following \cref{rem:cubulation}.
The desired projection is now obtained by sending the quadrilaterals in the cubulation by $\phi'$
--- this projection depends on the specific geometry of the cubulation.
To show that this is a valid projection,
we need to check the $2$-cell-injectivity. 
There are two classes of quadrilaterals in the cubulation.
One class consists of parallelograms contained in the planes that are parallel to faces of $3$-simplices.
Since all the additional planes within a $3$-simplex is mapped injectively under $\phi'$,
the injectivity for small parallelograms follows.
The other class of quadrilaterals consists of those on the separating quadrilaterals $\square_{vv'}$ 
that sit in between two cubic lattice patches within a $3$-simplex.
But the dividers $\square_{vv'}$ are already injectively projected as depicted in \cref{fig:embed-dividers},
and so are the small quadrilaterals on them.
Note that the projection squashes the parallelograms along a direction that is fixed within each cubic lattice patch.

\begin{figure}
\includegraphics[width=\textwidth, trim={0ex, 84ex, 65ex, 19ex}, clip]{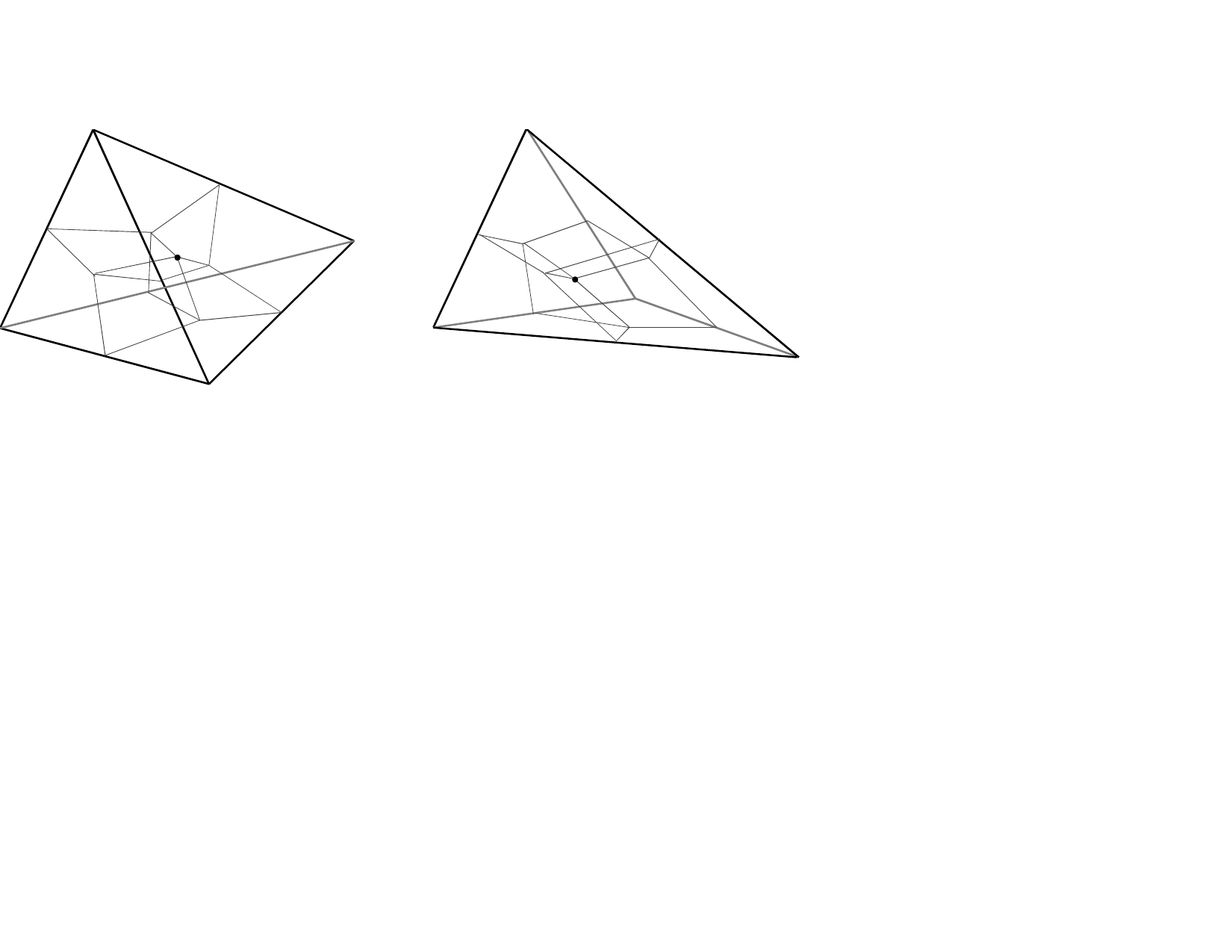}
\caption{Projection of $3$-simplex which is injective on triangles and quadrilaterals.
There are only two projections of a $3$-simplex down to $\RR^2$ 
up to homotopy of the projections such that they are injective on all $2$-faces.
The drawn quadrilaterals $\square_{vv'}$ 
separate cubic lattice patches (not drawn) near vertices
and are injectively projected.
}
\label{fig:embed-dividers}
\end{figure}

Given a primary triangulation, a secondary triangulation is obtained by perturbing it ---
each secondary vertex is at a generic position in a small neighborhood of a primary vertex,
and higher dimensional simplices are placed near those in the primary triangulation.
If a triangulation is refined to become a cubulation as in \cref{sec:geometry},
then we tailor the pertubation as follows.
The vertices of the secondary lattice that belong to the $2$-skeleton of the primary triangulation,
are obtained by a piecewise linear map defined by the perturbation from primary triangulation to secondary triangulation.
This ensures that whatever translation invariance is retained within the interior of $2$-cells of the triangulation.
In the interior of each $3$-simplex,
we push slightly the cubic lattice patch along the squashing direction of the projection.

There is one qubit per primary edge representing worldlines of $f_1$,
and also one qubit per secondary edge for $f_2$.
Every vertex term is a product of Pauli $Z$ around a vertex $v$.
Every primary plaquette term is a product of Pauli~$X$ along the boundary of a primary plaquette~$p$,
decorated with Pauli~$Z$, one on each of the secondary edges that penetrate~$p$,
and another~$Z$ on each of the dangling primary edges (meeting with the plaquette at a vertex)
that lie inside~$p$ upon projection by~$\phi$.
The prescription is analogous for the secondary lattice.
Hence, the Hamiltonian is
\begin{align}
H = 
- \sum_{v} 
\underbrace{
\prod_{e: v \in \partial e} Z_{e} 
}_\text{closed strings}
- \sum_p 
\underbrace{
\left(\prod_{e : e \in \partial p} X_e\right)
}_\text{string insertion}
\underbrace{
\left(\prod_{e': e' \cap p^\circ \neq \emptyset} Z_{e'} \right)
}_\text{mutual braiding}
\underbrace{
\left(\prod_{e : e \cap p = \text{vertex}, \phi (e^\circ) \subset \phi(p^\circ)} Z_e \right)
}_\text{topological spin}
\label{eq:Hamiltonian-3FWW}
\end{align}
where the circle in the superscript means the interior and the prime refers to the secondary lattice.%
\footnote{For an arbitrary cellulation of $M$ equipped with a $2$-cell-injective projection $\phi$,
the criterion whether a dangling projected edge $\phi(e)$ ``lies inside'' a projected plaquette $\phi(\Delta)$
is that there is an open neighborhood $U$ of a vertex $v$ of $\Delta$
such that $\phi(U \cap e \setminus \{v\})$ is contained in the interior of $\phi(\Delta)$.
}
We suppressed the terms for the secondary strings as they are symmetric variants of what are displayed.
Every Hamiltonian term is a tensor product of Pauli $X$ and $Z$ by construction.%
\footnote{
If we omit the $Z$-factors on the dangling edges that account for the topological spin, 
we obtain a Walker-Wang model for the toric code input category $\{1,e,m,\epsilon\}$.
}

A feature of the plaquette terms
is that $X$-factors lie precisely along the boundary of a $2$-cell 
(in the primary or secondary lattice).
Hence, in the multiplicative group of all Hamiltonian terms 
(without the overall minus sign in \cref{eq:Hamiltonian-3FWW}),
every member has $X$-factors along a closed $1$-chain
that is the boundary of some $2$-chain over $\ZZ_2$.

\paragraph*{On the cubulation we have $H_{cubic}$ of \cref{fig:BCFV-terms} in the interior of $3$-cells.}
We have furnished our cubulation (that is a refinement of a triangulation)
with a projection that squashes parallelograms of the cubic lattice patch along a fixed direction within that patch.
In addition, each cubic lattice patch of the secondary lattice 
is by definition a near-identity shift of that of the primary lattice along the squashing direction.
Hence, our prescription for the Hamiltonian gives the same terms as in $H_{cubic}$
for the interior of each $3$-cell.

\begin{figure}[b]
\includegraphics[width=0.9\textwidth, trim={2ex 95ex 57ex 19ex}, clip]{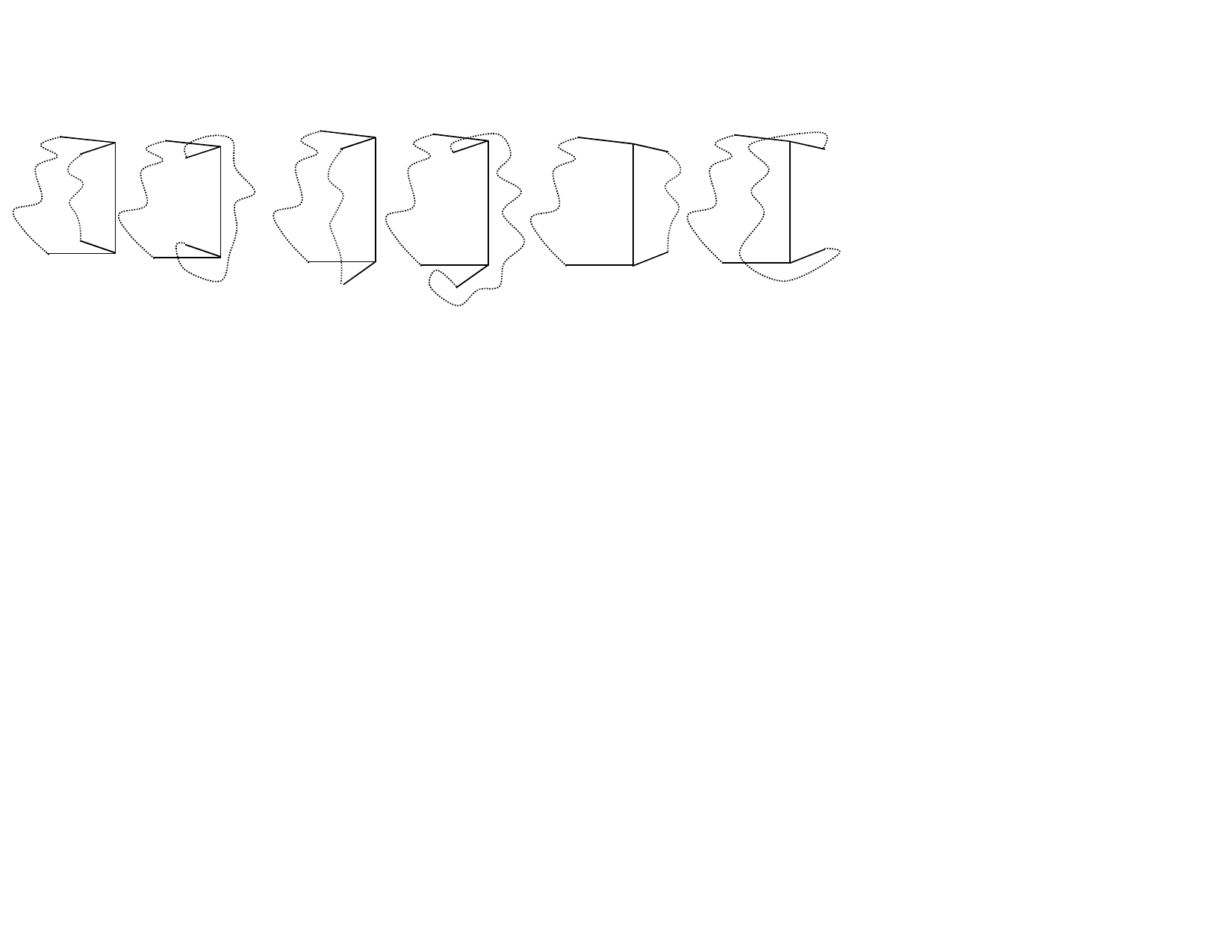}
\caption{Possible configurations of two projected plaquettes sharing an edge.
If a triangulation is projected with straight edges, 
then only the first, third, and fifth configurations are possible.
On the common edge, the two plaquette terms have $X$-factors that commute.
In all cases, either two $Z$-factors from one plaquette term meets two $X$-factors from the other plaquette term,
or a factor of $Z\otimes X$ from one plaquette meets a factor of $X \otimes Z$ from the other plaquette,
or no $Z$-factors of a plaquette meet an $X$-factor of the other.
}
\label{fig:p-p-commutativity}
\end{figure}

Let us show other important properties of the Hamiltonian in \cref{eq:Hamiltonian-3FWW}.

\paragraph{All the terms in the Hamiltonian commute with one another.}
The vertex terms are obviously commuting.
Any vertex term commutes with any plaquette term
since exactly two $X$-factors of a plaquette meet two $Z$-factors of a vertex term.
A plaquette of the primary lattice and another of the secondary lattice
may be linked, be unlinked but their interiors intersect, or not meet.
If they are linked, there are exactly two $Z$ factors that lie in the interiors of the plaquettes,
which meet two $X$ factors on the perimeters of the other plaquettes.
If the plaquettes are unlinked but their interiors intersect, then two edges of one plaquette penetrate the interior of the other plaquette.
The two edges have an $X\otimes X$ factor, and the plaquette has a $Z \otimes Z$ factor, so the two plaquettes commute.
Two plaquette terms in the primary lattice may intersect along an edge or at a vertex.
If they meet along an edge, upon projection they are in one of the configurations in \cref{fig:p-p-commutativity},
in each of which they commute.
If they meet at a vertex, we encounter configurations similar to
\cref{fig:p-p-commutativity} but with the common edge collapsed to the common vertex,
and they still commute.
Two plaquette terms in the secondary lattice commute by the same reasoning.

\paragraph{The Hamiltonian of \cref{eq:Hamiltonian-3FWW} is unfrustrated 
and moreover the vacuum state is a nonzero component in the ground state.}\label{para:vacuum}
That is, each term (without the overall $-$ sign) assumes eigenvalue $+1$ on the ground state
that has expansion
\begin{align}
\Psi(M) = \sum_{g \in S(M)} g \ket 0. \label{eq:3FWW-GS-on-M}
\end{align}
Here, $S(M)$ is the multiplicative abelian group, called the Pauli stabilizer group on $M$,
generated by all the terms of $H$ (without the overall $-1$) and $\ket 0$ is the product state with all qubits in the vacuum state $Z=+1$.
If we just had the fact that each operator in $S(M)$ assumes $+1$ on $\Psi(M)$, without the fact that the vacuum state had nonzero amplitude in the ground state,
then we would not be able to write the expansion 
since $\ket 0$ might be annihilated by the projector $\Pi_{\Psi(M)} = |S(M)|^{-1} \sum_{g \in S(M)} g$.

To prove both claims, we examine the ``diagonal'' subgroup of $S(M)$
consisting of all elements that are products of $Z$'s.
Thanks to the $\ZZ_2$ homological interpretation of plaquette terms,
we see that the diagonal subgroup is generated by products of the plaquette terms over $2$-cycles with $\ZZ_2$ coefficients
and the vertex terms.
Let us define a \emph{sign} for each of the diagonal stabilizers
as the eigenvalue of the diagonal stabilizer on $\ket 0$.
We claim that this sign is always positive,
which implies $\Pi_{\Psi(M)}$ does not annihilate $\ket 0$, proving both claims.
Note that the sign is a group homomorphism from the diagonal subgroup to $\{ \pm 1\}$.
Restricted to plaquette terms,
we may say that the sign is a group homomorphism from the set of all $2$-cycles to $\{\pm 1\}$.
This relies on the commutativity of the plaquette terms, which in turn depends only on the $2$-cell-injectivity of $\phi$.

The sign is clearly positive for vertex terms.
Let us compute the sign for a product $\prod Q$ of the plaquette terms over a $2$-cycle $C_2$ on the primary lattice.
The plaquette terms may have $Z$-factors on the penetrating secondary edges.
These do not contribute to the sign since they do not meet any $X$-factors.
The product $\prod Q$ may also have $Z$-factors on dangling primary edges,
but only those within the support of $C_2$ are important
since, otherwise, they do not meet any $X$-factors.
Hence, we only have to keep the primary edges that are within $N^2 = \mathrm{Supp}~C_2$.
That is, the sign of $\prod Q$ is equal to the sign of the product of plaquette terms 
as if the whole lattice were $N^2$ equipped with a restricted $\phi$ on $N^2$.

Consider the cone over $N^2$, which is a $3$-dimensional simplicial complex $M'$ 
that is not necessarily a manifold.
We extend $\phi|_{N^2}$ to the $2$-skeleton of $M'$.
This is easy: on the plane where $N^2$ is projected, we bring an additional vertex at a generic position,
and connect it to the vertices of $\phi(N^2)$.
By genericity, this defines an extension of $\phi|_{N^2}$ where it is injective on every $2$-cell.
So, we have commuting plaquette operators on $M'$.
Now, the boundary of the cone is $C_2$ since $C_2$ is closed.
Since the sign is a group homomorphism, 
the sign of $\prod Q$ is the product of signs of the boundaries of all $3$-simplices of $M'$.
But the sign of the boundary chain of a $3$-simplex is always positive
as seen by direct calculation.
There are only two possible projections of a $3$-simplex,
distinguished by whether a vertex is projected inside a triangle --- see \cref{fig:proj3cell}.
The same argument goes for cubulation with three different cones over a quadrilateral.
This proves that the sign is always positive.

\begin{figure}[b]
\includegraphics[width=0.9\textwidth, trim={5ex 92ex 25ex 24ex}, clip]{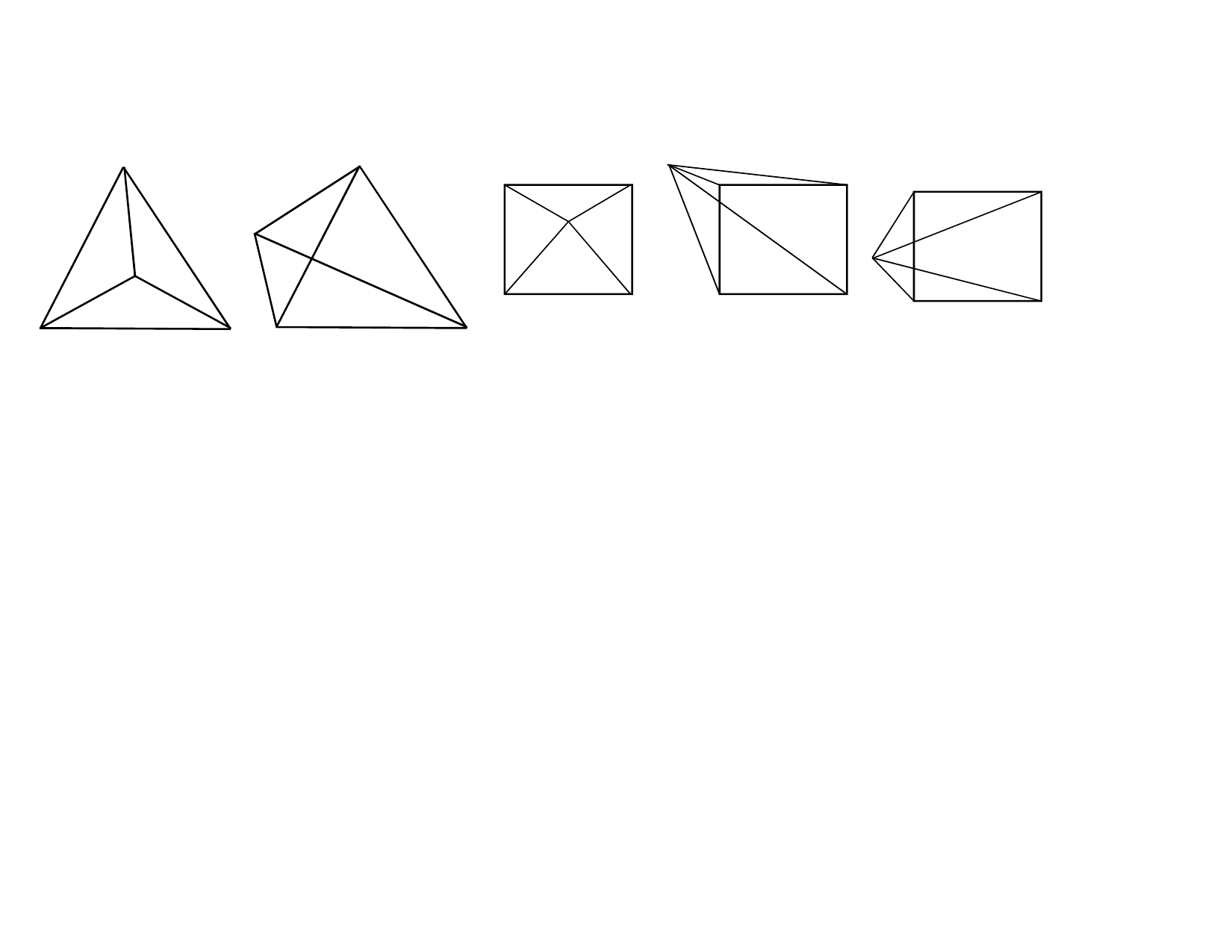}
\caption{Possible projections of a $3$-simplex
and possible projections of a cone over a quadrilateral.
The product of the four or five plaquette terms according to the shown projections
is that of $Z$-factors over all edges in the interior of the enclosing polygon
with the overall~$+$ sign.
An effective and general way to calculate the products is to regard the $2$-sphere (=the union of all plaquettes)
as the union of two hemispheres glued along the very outer rims.
}
\label{fig:proj3cell}
\end{figure}

We summarize the result here as a lemma.
\begin{lemma}\label{lem:1-cycle-to-amplitude}
Let $M$ be a compact simplicial complex or cubulated complex (not necessarily a manifold)
equipped with a continuous map $\phi$ from the $2$-skeleton of $M$ to $\mathbb R^2$ 
such that $\phi$ on every $2$-cell is injective 
and any pair of edges of adjacent $2$-cells have transverse images.
For any $1$-cycle $C_1$ over $\ZZ_2$,
let $\ket{C_1}$ be the basis state specified by the primary string configuration~$C_1$.
Then, the function $C_1 \mapsto \bra{C_1} \prod Q \ket 0 = \pm 1$ from any nullhomologous cycle $C_1$ to amplitude is well defined,
where $\prod Q$ is the product of the primary plaquette terms of the stabilizer group $S(M)$ over any $2$-chain whose $\ZZ_2$-boundary is $C_1$.
\end{lemma}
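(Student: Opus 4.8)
The plan is to reduce the statement to the positivity of the sign already established above, namely that the product of primary plaquette terms over any $2$-cycle has eigenvalue $+1$ on $\ket 0$. The genuinely new content of the lemma is only that the amplitude does not depend on the choice of $2$-chain $C_2$ with $\partial C_2 = C_1$; that it equals $\pm 1$ is immediate.

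First I would verify the $\pm 1$ claim. Each primary plaquette term is a Hermitian Pauli operator whose $X$-factors lie exactly along the boundary of its $2$-cell and whose $Z$-factors sit on disjoint edges (penetrating secondary edges and dangling primary edges). Hence, for a fixed $2$-chain $C_2$, the interior $X$-factors cancel pairwise over $\ZZ_2$ and the product $\prod Q$ has net $X$-support precisely along $\partial C_2 = C_1$, dressed by some $Z$-factors. Since $\ket 0$ is the state with every qubit in the $Z=+1$ eigenstate, the $Z$-factors act by $+1$ and the $X$-factors flip exactly the edges of $C_1$, so $\prod Q \ket 0 = \pm \ket{C_1}$ and therefore $\bra{C_1}\prod Q\ket 0 = \pm 1$.

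Next I would establish independence of $C_2$. Given two $2$-chains $C_2$ and $C_2'$ with $\partial C_2 = \partial C_2' = C_1$, their $\ZZ_2$-sum $C_2 + C_2'$ is a $2$-cycle. Using that all plaquette terms commute --- a property that rests on the $2$-cell-injectivity of $\phi$ assumed in the hypotheses --- and that each squares to the identity, the products factor as
\[
\prod_{C_2} Q = \Big(\prod_{C_2'} Q\Big)\Big(\prod_{C_2 + C_2'} Q\Big),
\]
with no ordering ambiguity or stray phase. By the positivity of the sign, the last factor fixes $\ket 0$, i.e. $\big(\prod_{C_2+C_2'} Q\big)\ket 0 = \ket 0$, whence $\prod_{C_2} Q \ket 0 = \prod_{C_2'} Q \ket 0$ and the two amplitudes coincide. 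This shows the map $C_1 \mapsto \bra{C_1}\prod Q\ket 0$ is well defined.

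The essential work is the cone computation supplied above, so inside the lemma itself the main thing to handle carefully is the Pauli bookkeeping: one must check that commutativity of the plaquette terms makes the factorization order-independent, and that the disjointness of the $X$- and $Z$-supports within a single plaquette term prevents any $Y$-type factor or extra phase from appearing, so that the only sign that can occur is the $\pm 1$ of the amplitude itself. Note that the argument never uses that $M$ is a manifold, consistent with the stated generality.
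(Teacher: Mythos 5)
Your proof is correct and takes essentially the same approach as the paper's: the paper's entire proof is the observation that two $2$-chains with the same boundary differ by a $2$-cycle, over which the product of plaquette terms has positive sign on $\ket{0}$ (the result established just before the lemma via the cone argument). Your extra Pauli bookkeeping --- the commuting factorization $\prod_{C_2} Q = \bigl(\prod_{C_2'} Q\bigr)\bigl(\prod_{C_2+C_2'} Q\bigr)$ and the check that the amplitude is $\pm 1$ --- just makes explicit what the paper leaves implicit.
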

\begin{proof}
The arbitrariness in the definition of the function is in the choice of the $2$-chain.
But different $2$-chains differ by a $2$-cycle over which the product of plaquette terms has positive sign.
\end{proof}

\paragraph{The Hamiltonian obeys the local topological order condition.}
For the present commuting Pauli Hamiltonian,
the condition is met if any Pauli operator $P$ on a ball of radius smaller than the injectivity radius of $M$
which commutes with every term of the Hamiltonian,
is a product of terms of the Hamiltonian each of which is supported near the ball up to a global phase 
factor~\cite{BravyiHastingsMichalakis2010stability}.
In our case, this condition is satisfied as seen by invoking the homological interpretation of the terms.
The commutativity with the vertex terms means that $X$-factors of $P$ must form a closed $1$-chain over~$\ZZ_2$.
Being contained in a contractible ball, the chain is a homological sum of plaquettes in the ball.
Hence, we are reduced to the case where $P$ is a product of $Z$-operators.
Then, the commutativity with the plaquette terms means that $Z$-factors must form a dual $2$-cycle.
Again, being contained in a contractible ball, 
the dual cycle is a homological sum of small dual $2$-spheres,
which are precisely the vertex terms.
Therefore, our $H$ in \cref{eq:Hamiltonian-3FWW} satisfies the local topological order condition.

\paragraph{There is no deconfined topological charge.}
Here, a deconfined topological charge means
$O(1)$ number of flipped terms that are not creatable by operators on the $O(1)$-neighborhood of the flipped terms.
For our Hamiltonian $H$ in \cref{eq:Hamiltonian-3FWW} 
that consists of commuting Pauli operators,
it suffices to consider Pauli operators acting on the ground state.
Let $P$ be a Pauli operator that creates a hypothetical topological charge at position $x$;
all other excitations by $P$ are far away from $x$, but the number of them is $O(1)$.

Remark that the product $\prod Q$ of primary plaquette operators over a $2$-cycle $C_2$
consists purely of $Z$-factors and
defines two dual $2$-cycles of $Z$-factors due to the commutativity with $X$-part of plaquette terms.
One is formed by penetrating secondary edges, denoted by $b$,
and the other is formed by dangling primary edges, denoted by $a$.
Let us give a \emph{co}homology interpretation for these cycles;
the chains $a^1$ and $b^1$ define $1$-cocycles by commutation relations with Pauli $X$-operators.
First, they are coclosed because the Hamiltonian terms are commuting.
Indeed, coclosedness means that the cochains give zero for boundaries,
i.e., the cochains $a^1$ and $b^1$ as operators commute with $X$-operators on the boundary of any $2$-chain,
which are precisely the $X$-part of the product of the plaquettes over the $2$-chain.
Second, the cocycle $a^1$ is always trivial in the cohomology $H^1(M;\ZZ_2)$.
Indeed, any $1$-cycle of $X$-operators on the primary complex
intersects dangling edges of $C_2$ with $Z$-factor on them an even number of times,
because it does so at every intersection point ---
when a line meets a disk but does not end on the disk, 
there are exactly two edges of the line dangling at the intersection point 
which lie ``inside'' the disk upon our projection.
So, the cocycle $a^1$ as a $\ZZ_2$-functional on $1$-cycles is trivial.
Third,
the cocycle class $[b^1] \in H^1(M;\ZZ_2)$ is trivial if and only if $[C_2] \in H_2(M;\ZZ_2)$ is trivial.
Indeed, if $[b^1] = 0$, 
then $b^1$ as a $Z$-operator is a product of secondary vertex terms since they are generators for null-cohomologous cocycles.
In turn, this means that $C_2$ is the $2$-cycle that encloses these vertices with which the vertex terms are associated.
If $[C_2] = 0$, then $\prod Q$'s secondary lattice part 
decomposes into $Z$-factors on fanning out edges from vertices,
so $b_1$ as a $Z$-operator is a product of those vertex terms.
This is a version of the Poincar\'e duality $H^1(M;\ZZ_2) \cong H_2(M;\ZZ_2)$.
Analogous statements also hold for the product of secondary plaquettes.

Since $P$ is a Pauli operator, it can be written as $P = \mathsf X \mathsf Z$ 
up to an unimportant phase factor 
where $\mathsf X$ is a product of $X$'s only and $\mathsf Z$ is a product of $Z$'s only.
In the region without excitations,
$P$ should commute with any vertex terms and in particular $\mathsf X$ must form strings
with potential end points near the excitations.
Consider a $2$-sphere that encloses the excitation at~$x$ but no other excitations.
If we take the product of all secondary plaquettes on this $2$-sphere,
then by the above remark we have a $1$-cocycle on the primary complex
that can detect the intersection parity of primary $X$-strings.
The primary $X$-strings of $\mathsf X$ must pass the $2$-sphere an even number of times.
The same argument goes for secondary $X$-strings.

Multiplying $P$ by terms of the Hamiltonian does not change the configuration of the excitations.
So, we may assume that $\mathsf X$ does not have any string that connects $x$ and other excitations.
This means that $\mathsf X$ consists of some closed $1$-chains $p$ 
of the primary complex and $s$ of the secondary complex
that stay away from all the excitations
and some $X$-factors near the excitations.
We must have $[p] = [s] = 0 \in H_1(M;\ZZ_2)$ for the following reason.
Since the $O(1)$-neighborhood $E$ of all the excitations have trivial homology $H_1(E;\ZZ_2) = H_2(E;\ZZ_2) = 0$,
we know that the relative homology group $H_2(M,E;\ZZ_2)$ is equal to $H_2(M;\ZZ_2)$.
That is, we can always ``detect'' $p$ and $s$ by some product of plaquette operators.
Hence, we may further assume that $\mathsf X$ is supported near the excitations.

Then, the commutativity of $P$ with plaquettes constrains $\mathsf Z$
such that viewed as a dual $2$-chain, $\mathsf Z$ has no boundary except near the excitations.
Since $H_2(M,E;\ZZ_2) = H_2(M;\ZZ_2)$,
various products over $2$-cycles of plaquette terms represent all classes of $H_2(M,E;\ZZ_2)$
and vertex terms generate all nullhomologous dual $2$-cycles.
Therefore, $P$ is equivalent (up to terms of the Hamiltoniafn)
to local operators near the excitations,
meaning that all the excitations are locally created.

\subsection{Disentangling the 3FWW state on arbitrary $3$-manifold}
\label{sec:QCAstitch}
Here we construct a QCA $\alpha_{WW}(M)$ that disentangles the $3$-fermion Walker-Wang ground state above.
Our $\alpha_{WW}(M)$ is rather involved technically
as the disentangling QCA for the 3FWW on a flat $3$-space~\cite{haah2018nontrivial} was already complicated.
Our construction is an existence proof but it can, at least in principle, be made constructive.
We show how to stitch QCA on local flat $3$-cells together
to make a QCA $\unhealedQCA$ on a given arbitrary $3$-manifold,
and then modify $\unhealedQCA$ by a quantum circuit to completely disentangle the ground state.

\subsubsection{Sewing QCA on cubulations}

Digressing from the construction of $H$ in \cref{eq:Hamiltonian-3FWW} above,
here we define a QCA $\unhealedQCA(M)$ on an arbitrary cubulated closed $3$-manifold $M$
whose action on every local cubic lattice patch agrees with 
the QCA $\alpha_{WW}(T^3)$~\cite{haah2018nontrivial}
that disentangles $H_{cubic}$ of~\cref{fig:BCFV-terms}.
We use the geometry of \cref{rem:cubulation}.
Each of the cuboids of \cref{fig:cubulation} contains
a cubic lattice patch
of linear size that is comparable to, but smaller than $\largeR$.
Each cubic lattice patch 
comes with an arbitrary but fixed frame (a linear basis of the tangent space).
The frame comes from the geometry of \cref{rem:cubulation}: we can undo the shear transform so that the cubic lattice is generated by integer combinations of three orthogonal vectors which give a basis; we then identify those vectors with standard basis vectors $(1,0,0),(0,1,0), (0,0,1)$ in such a way that the projection is in the $(111)$ direction.
With respect to this frame, inside each cubic patch we have the Hamiltonian considered in Ref.~\onlinecite{haah2018nontrivial} which may be disentangled by the QCA constructed in that reference.

A QCA being an $*$-algebra homomorphism is defined once we specify the action on local Pauli operators.
Hence we let $\unhealedQCA(M)$ act by $\alpha_{WW}(T^3)$ of range $O(\smallR)$ on the interior of any cubic lattice patch.
Here the action of $\alpha_{WW}(T^3)$ is defined according to the chosen frame.
The nontrivial point of constructing $\unhealedQCA(M)$ is to show that 
this partial definition of $\unhealedQCA(M)$ extends to all operators on $M$.
Below we extend the definition, in order, on $2$-cells, $1$-cells, and then $0$-cells.
These cells have linear dimension of order $\largeR$ 
and do not refer to the smallest cubes of linear dimension $\smallR$;
this usage is different from that of the previous sections 
where ``cells'' were individual simplices or cubes of a cellulation
and had linear dimension $\smallR$.
Rather, in this subsubsection each $3$-cell refers to a cuboid containing a single cubic lattice patch,
the $2$-cells are between two cubic lattice patches
and the $0$- and $1$-cells are surrounded by many cubic lattice patches.
Since the operator algebra on $M$ is some simple algebra of finite dimension,
we do not have to check its invertibility as it follows automatically:
any nonzero $*$-homomorphism has to be injective since its kernel is an ideal of the simple algebra.

Before we show the extension
we note a few properties related to $\alpha_0 = \alpha_{WW}(T^3)$~\cite{haah2018nontrivial}.
\begin{enumerate}
\item[(i)] $\alpha_0$ is translation invariant (TI) 
and its dimensionally reduced 1-dimensional QCA along any direction is a quantum circuit without shift.
\item[(ii)] $(\alpha_0)^{\otimes 2}$ is a quantum circuit~\cite[Thm.IV.9]{haah2018nontrivial}, 
and hence $\alpha_0^2$ is a quantum circuit~\cite[\S 2]{FreedmanHaahHastings2019Group}.
\item[(iii)] $\alpha_0$ maps a Pauli operator to a tensor product of Pauli operators (Clifford).
\item[(iv)] Every TI Clifford QCA without shift (in the sense of (i)) 
that disentangles the ground state of $H_{cubic}$ in \cref{fig:BCFV-terms},
is equivalent to $\alpha_0$ up to a TI Clifford circuit~\cite[Cor.IV.6 \& Lem.IV.10]{haah2018nontrivial}.
\item[(v)] A one-coordinate inversion~\cite{FreedmanHaahHastings2019Group} of any TI Clifford QCA
is equivalent to its inverse up to a TI Clifford circuit~\cite{haah2019clifford}.
\end{enumerate}
Indeed, the explicit QCA displayed in the supplementary material of Ref.~\cite{haah2018nontrivial} does not have any shift upon dimensional reductions.
In fact, any TI Clifford QCA can be made so by composing with some shift~\cite{haah2019clifford}.
The square being a quantum circuit is a general property of TI Clifford QCA over qubits, given the no shift property (i).
Because of (iii) the quantum circuits in (i) and (ii) are Clifford.
The one-coordinate inversion of a QCA means the one obtained by conjugating the QCA by a spatial inversion about a plane.

Denote by $M^k$ the $k$-skeleton of the cubulated $M = M^3$.
So far, we have defined $\unhealedQCA(M^3 \setminus M^2)$.

\paragraph*{Extension to $M^3 \setminus M^1$ --- filling up $2$-cells:}

Since $\alpha_0$ is defined on a cubic lattice,
we may consider $48$ versions by conjugating it with the lattice symmetry group.
The lattice symmetry group is generated by permutations of three axis and inversions about three coordinate planes.
A nice feature of $H_{cubic}$ is that it is invariant under the permutations of three axis.
Then, by the property (iv) all $6$ versions of $\alpha_0$ is equivalent to itself up to some TI Clifford circuit.
Combining the properties (ii) and (v), 
we see that all the inversions of $\alpha_0$ are equivalent to itself up to some TI Clifford circuit.
Therefore, all the $48$ versions of $\alpha_0$ by cubic lattice symmetry group
are related by some TI Clifford circuits.

Given two versions $\alpha_0'$ and $\alpha_0''$ of $\alpha_0$,
if we keep the gates of the relating Clifford circuit on the half space, say $x > 0$,
but drop them on the complementary half space,
then on the half space where $x < - R$ where $R$ is the range of $\alpha_0$ we have the identity action,
and on the other half space where $x > R$ we have the action of $\alpha_0'' \circ (\alpha_0')^{-1}$.
Composing it with $\alpha_0'$ we obtain a QCA $\gamma$ of range $O(\smallR)$ 
that interpolates between $\alpha_0'$ and $\alpha_0''$.
The interpolating $\gamma$ is still translation invariant along $y$- and $z$-axes.

Now, given a pair of cubic lattice patches in $M$,
we have two versions of $\alpha_0$ according to the frames of the respective cubic lattice patches.
Though it is meaningless to speak of which versions they are individually, 
it is meaningful to speak of a version relative to the other,
and we can apply the construction of the interpolating QCA.
On the interior of the bordering $2$-cell,
we define $\unhealedQCA(M^3 \setminus M^1)$ by the interpolating QCA;
thanks to the translation invariance within the interpolating plane,
this extension can be left undefined on the boundary of the $2$-cells.

Note that by the property (ii),
the interpolating QCA $\gamma$ can be chosen such that $\gamma^{\otimes 2}$ is a Clifford circuit;
we do make such a choice.
The choice is unique up to Clifford circuits in the intermediate region $|x| \le R$.
Indeed, two interpolations may differ by a QCA on the interpolating region that is 2-dimensional,
and such a QCA is always a Clifford circuit followed by a shift~\cite{FreedmanHastings,haah2019clifford}.
In short, the action of two copies of our extension $\unhealedQCA(M^3 \setminus M^1)$ 
can be implemented by a Clifford circuit.

\paragraph*{Extension to $M^3 \setminus M^0$ --- filling up $1$-cells:}

For $1$- and $0$-cells,
we consider a QCA as a collection of mutually commuting simple algebras
that are images of single-qubit algebras.
In the present case of Clifford QCA,
we consider a collection of pairs of anticommuting local Pauli operators
where any two operators in two different pairs commute.
Given such a collection, we can recover the QCA up to a shift QCA and a Clifford circuit of depth~$1$
by assigning each anticommuting pair to a qubit in the support of the pair.%
\footnote{
See the application of the Hall marriage theorem in \cite{haah2018nontrivial} 
or a slightly refined version in \cite[App.A]{haah2019clifford}.
}
To each qubit in the $1$-skeleton $M^1$ of $M = M^3$ where $\unhealedQCA(M)$ needs to be defined,
we have to assign a pair of anticommuting Pauli operators
that commute with any other operator in the image of $\unhealedQCA(M^3\setminus M^1)$.
Let us collect all the candidates.

The commutant of the image $\mathcal B$ of $\unhealedQCA(M^3\setminus M^1)$ is supported on $M^1$.
(Strictly speaking, it is on the $O(\smallR)$-neighborhood of $M^1$,
but we suppress this small discrepancy for clarity of presentation.)
As we did for $2$-cells, we temporarily focus on one $1$-cell surrounded by many $2$- and $3$-cells.
We regard these cells are large along the extended direction of the $1$-cell, say $x$-direction.
The translation invariance of the interpolating QCA within the $2$-cells
means that the commutant~$\mathcal A$ on this $1$-cell is also translation invariant along the $x$-direction.

It is shown~\cite[Thm.IV.11]{haah2018nontrivial} that there exists 
a translation invariant, locally generated, maximal abelian subgroup $\mathcal A_0$
in any group of finitely supported Pauli operators that is translation invariant in 1D.%
\footnote{
There is a weak translation symmetry breaking in the choice of a maximal abelian subgroup,
but the degree of the symmetry breaking is determined
by specifics of $\alpha_{WW}(T^3)$ and the arrangement of $2$- and $3$-cells around the given $1$-cell,
which are fixed once and for all.
}
Furthermore, any such abelian subgroup of Pauli operators becomes, after a Clifford circuit~$C$,
a group generated by single qubit operators $Z$ and two-qubit operators $ZZ$~\cite[\S 6,Thm.3]{Haah2013}.
If~$Z_x Z_{x+c} \in C \mathcal A_0 C^\dagger$ with~$c \neq 0$, then by translation invariance
we have $Z_x Z_{x+nc} \in C \mathcal A_0 C^\dagger$ for arbitrary $n \in \ZZ$, 
but since $\mathcal A$ is the commutant of some locally generated algebra,
we must have $Z_x \in C \mathcal A_0 C^\dagger$.

Let $\{ C^\dagger Z_x C \}_x$ be the generating set for the maximal abelian subgroup of $\mathcal A$.
Certainly, the Pauli operator $\tilde X_x = C^\dagger X_x C$ on the $1$-cell anticommutes with $\bar Z_x = C^\dagger Z_x C$.
The operator $\tilde X_x$ may or may not belong to $\mathcal A$.
However, if there is a Pauli operator, say $\unhealedQCA(M^3\setminus M^1)(Z_i)$, in $\mathcal B$ that does not commute with $\tilde X_x$,
then $\tilde X_x \cdot \unhealedQCA(M^3\setminus M^1)(X_i)$ commutes with $\unhealedQCA(M^3\setminus M^1)(Z_i)$.
The modification $\tilde X_x \to \tilde X_x \cdot \unhealedQCA(M^3\setminus M^1)(X_i)$ does not change the commutation relation with $\bar Z_{x'}$ for any $x'$.
Since $\tilde X_x$ is a local operator, 
some $O(1)$ number of similar modifications give a Pauli operator $\bar X_x \in \mathcal A$ 
that anticommutes with and only with $\bar Z_x$.%
\footnote{
An argument of a similar spirit appears in \cite[Lem.3.9]{FreedmanHastings}.
}
The translation invariance applies to this procedure,
and we obtain a finite set $\{ (\bar X_{x_j}, \bar Z_{x_j}) : j = 1,\ldots, m \}$ of anticommuting pairs
that generates $\mathcal A$ by translations.

Now we want to assign the found pairs $\{(\bar X_x, \bar Z_x)\}_x$ 
to Pauli operators on $M^1$ on which $\unhealedQCA(M)$ is not defined yet.
For this assignment to be locally feasible
we have to ensure that there are the same number of the pairs in $\mathcal A$
and unassigned qubits in $M^1$.
To compare these numbers, we take two copies of the system.
(We may impose periodic boundary condition along $x$-direction to remain in a finite dimensional setting.)
Recall that we have chosen our extension $\unhealedQCA(M^3 \setminus M^1)$
such that its double is a Clifford circuit.
By applying the inverse of this Clifford circuit to the doubled system
the algebra $\mathcal B \otimes \mathcal B$ becomes that of the tensor product of individual qubit algebras (trivial).
The commutant of the trivialized $\mathcal B^{\otimes 2}$ is again trivial,
the two numbers of interest are the same.

We assign each pair $(\bar X_{x_j},\bar Z_{x_j})$ in the finite generating set of $\mathcal A$
to an unassigned qubit in its support, and extend the assignment by translation.
This defines an extension $\unhealedQCA(M^3 \setminus M^0)$.
As in the case of filling up the $2$-cells,
any two possible extensions to the interior of a $1$-cell differ by a QCA on that interior of the $1$-cell.
We fix the ambiguity by requiring that the action of $(\unhealedQCA(M^3 \setminus M^0))^{\otimes 2}$
can be implemented by a Clifford Circuit.

\paragraph*{Extension to whole $M^3$ --- filling up $0$-cells:}

The idea continues from the previous extension.
The commutant of the image of $\unhealedQCA(M^3 \setminus M^0)$
lives on the $0$-skeleton $M^0$,
and there are exactly the same number of anticommuting pairs of Pauli operators in the commutant
as there are unassigned qubits in $M^0$.
The extension is therefore possible, and it is unique up to a QCA on $M^0$, which is a Clifford circuit.

We now have completed the existence proof of $\unhealedQCA(M)$ 
whose action matches that of $\alpha_{WW}(T^3)$ on any cubic lattice patch.

\subsubsection{Disentangling the canonical Hamiltonian of \cref{eq:Hamiltonian-3FWW}}

The QCA $\unhealedQCA(M)$ that is constructed  above (or, rather, is shown to exist),
disentangles the ground state of $H$ in \cref{eq:Hamiltonian-3FWW} at least in each cubic lattice patch.
That is, the group generated by the terms of \cref{eq:Hamiltonian-3FWW} becomes under $\unhealedQCA(M)$
a group that contains all single qubit $Z$-operator on the interior of the cubic lattice patches.
If $\Psi$ is the ground state of $H$ in \cref{eq:Hamiltonian-3FWW},
let $\unhealedQCA(\Psi)$ be the ground state of $\unhealedQCA(M)(H)$.
Any remaining entanglement in $\unhealedQCA(\Psi)$ is in the $2$-skeleton (in any sense to a reader's taste!).

We claim that $\unhealedQCA(\Psi)$ can be mapped to a completely trivial product state by a Clifford circuit of range $O(\smallR)$.
To show this we recall that $\Psi$ does not have any deconfined topological charge.
$\unhealedQCA(M)$ being locality preserving does not alter the absence of topological charges.
Recall that the construction of $\unhealedQCA(M)$ preserves translation invariance whenever possible.
In particular, on any $2$-cell between two cubic lattice patches,
our QCA $\unhealedQCA(M)$ as well as the Hamiltonian \cref{eq:Hamiltonian-3FWW}
is translation invariant within the $2$-cell,
and therefore $\unhealedQCA(M)(H)$ obeys the same property.
Discarding single qubit $Z$ factors in $\unhealedQCA(M)(H)$ that are on the interior of the $3$-cells,
we have a translation invariant commuting Pauli Hamiltonian in the interior of $2$-cells,
without any topological charges.
Note that since $H$ obeys the local topological order condition~\cite{BravyiHastingsMichalakis2010stability},
so does $\unhealedQCA(M)(H)$.
It is proved that for translation invariant 2-dimensional commuting Pauli Hamiltonians with the local topological order condition, 
no topological charge is synonymous to nondegeneracy of the ground state~\cite[\S 7.Thm.4]{Haah2013}.
Then, \cite[Thm.IV.4]{haah2018nontrivial} says any such Hamiltonian is disentanglable by a translation invariant Clifford QCA,
and \cite[Thm.1]{haah2019clifford} says any translation invariant Clifford QCA in 2-dimension is a Clifford circuit followed by a shift.
Therefore, there exists a translation invariant Clifford circuit~$\unhealedQCA_2$ that disentangles 
$\unhealedQCA(\Psi)$ in the interior of all $2$-cells.

The state $\unhealedQCA_2 \circ \unhealedQCA(\Psi)$ has entanglement only in $M^1$,
but by the same argument using the fact that every translation invariant Clifford QCA in 1D 
is a Clifford circuit $\unhealedQCA_1$ followed by a shift~\cite{haah2019clifford},
we disentangle $\unhealedQCA_2 \circ \unhealedQCA(\Psi)$, pushing any remaining entanglement down to $0$-skeleton.
It is trivial to disentangle the $0$-cells by a Clifford circuit $\unhealedQCA_0$.
The composition $\unhealedQCA_0 \circ \unhealedQCA_1 \circ \unhealedQCA_2$ is a desired Clifford circuit.
Our disentangling QCA is finally 
\begin{align}
\alpha_{WW}(M) = \unhealedQCA_0 \circ \unhealedQCA_1 \circ \unhealedQCA_2 \circ \unhealedQCA(M).
\end{align}

\subsection{Phase ambiguity and construction of $\Umat$} \label{sec:phase}

We are now ready to define $\Udis$ and show the properties in \cref{sec:UdisProperties}.

We choose a continuous projection map $\phi$ from the $2$-skeleton of our $4$-manifold to $\mathbb R^2$
such that it is injective on every $2$-cell. 
The existence of such a $\phi$ follows for the same reasons as we explained in \cref{sec:3FWWarb}.
We use the same $\phi$ for all $3$-dimensional submanifolds that are domain walls of spin configurations $\vec z$.
That is, the Hamiltonian terms on a $3$-dimensional domain wall $M$ are determined by $\phi$ restricted to the $2$-skeleton of $M$.
In particular, they depend only on the local information of the spin configuration.
Every edge of the material has two qubits, 
one of which is regarded as placed on a secondary lattice obtained by a generic perturbation of
the triangulation.

The disentangling QCA $\alpha_{WW}(M)(\cdot)$ defined by $\alpha_{WW}(M)(O) = (\Umat_{\vec z})^\dagger O \Umat_{\vec z}$
determines $\Umat_{\vec z}$ up to a phase factor that may depend on $M = M(\vec z)$.
We fix this phase factor by the \emph{positive vacuum} rule
\begin{align}
\bra 0 \Umat_{\vec z} \ket 0 > 0 \text{ for all } \vec z \label{eq:phaserule}
\end{align}
where $\ket 0$ is the vacuum state without any string segment present.
This vacuum state is a single state of our $4$-dimensional lattice
that makes sense for all choices domain walls of spin configurations.
This condition is possible only if $\bra 0 \Umat_{\vec z} \ket 0 \neq 0$,
which asserts that the amplitude of $\ket 0$ in $\Psi(M)$ be nonzero.
We have shown that this is the case when we write \cref{eq:3FWW-GS-on-M} above.
\cref{eq:phaserule} means that we fix the amplitude of $\ket 0$ in $\Psi(M)$ to be positive.

It remains to show the defining properties in \cref{sec:UdisProperties}.
In particular, we have to show the two conditions in \cref{lem:circfromcontrolledqca}.
The first condition that $(\Umat_{\vec z})^\dagger \Umat_{\vec z +i}$ be a local operator around $i$
follows from the construction of $\alpha_{WW}(M)$
where we defined the action of $\alpha_{WW}(M)$ at a simplex $\Delta$ by the open star of $\Delta$
(the collection of all cells that intersect $\Delta$).
Since the action of $\alpha_{WW}(M)$ is defined locally, the action of 
$(\Umat_{\vec z})^\dagger \Umat_{\vec z +i}$ as a QCA far from $i$ is the identity.
The second condition that $(\Umat_{\vec z})^\dagger \Umat_{\vec z +i} = (\Umat_{\vec z + j })^\dagger \Umat_{\vec z +i +j}$ for far separated $i$ and $j$
requires our phase rule in \cref{eq:phaserule}.
Let us rearrange the terms and indices to rewrite the condition as 
\begin{align}
\Umat_{\vec z +i} (\Umat_{\vec z})^\dagger = \Umat_{\vec z +i +j} (\Umat_{\vec z + j })^\dagger.
\end{align}
We know this condition holds up to a phase factor because both sides have the same action as a QCA.
To prove the equality, it suffices to show
\begin{align}
\bra 0 \Umat_{\vec z +i} (\Umat_{\vec z})^\dagger \ket 0  > 0
\end{align}
for any $\vec z$ and $i$.
This is equivalent to saying that the overlap of two Walker-Wang ground states 
with their global phase factors fixed by \cref{eq:phaserule} is also positive.

In fact, we claim an even stronger result
\begin{align}
\braket{\Psi(M)|\Psi(M')} > 0
\end{align}
for two arbitrary $3$-manifolds $M,M'$ embedded in some common space.
If we expand the ground state wavefunction in the string configuration basis $\{ \ket s \}$,
then the overlap is $\sum_s \braket{\Psi(M)|s} \braket{s|\Psi(M')}$.
The string configuration $s$ is a $1$-cycle that is nullhomologous within $M$ as well as within $M'$.
Applying \cref{lem:1-cycle-to-amplitude} to $M \cup M'$ which is a (cubulated) simplicial complex, 
inherited from our $4$-dimensional complex,
we see that $\braket{s|\Psi(M)} = \braket{s|\Psi(M \cup M')} = \braket{s|\Psi(M')}$.
Therefore the overlap is positive.

We have completed the construction of our $4$-dimensional state $\Psi_0$.

\subsection{Crane-Yetter TQFT}
\label{sec:CY}

We remark that there is an alternative way to fix the phase, different from that in \cref{sec:phase}.  Consider two choices of spin configurations,
$\vec z$ and $\vec z+i$ differing by a flip of a single spin $i$.
Each spin defines some boundary of that spin configuration, defining a cellulation of a three-manifold.
These two different cellulations, denoted $c_1,c_2$, agree on some region that we will call the ``common region", namely the region far from spin $i$.
We can use the Crane-Yetter model\cite{crane1997state} to define an operator supported near spin $i$ that maps the ground state of the Walker-Wang
on cellulation $c_1$ to that on $c_2$.  We will denote this operator by $O_{\vec z+i,\vec z}$.

The Crane-Yetter model is a state sum model defined on a cellulation of a four-manifold.  If the four-manifold has a boundary, the amplitude of the state sum depends on the configuration on the boundary, and it reproduces the Walker-Wang ground state wavefunction.  Let $b$ be the boundary of the common region and let $r_1,r_2$ be the regions of $c_1,c_2$ outside the common region.  Define a bordism from $r_1$ to $r_2$ relative to boundary $b$.
Define then a Crane-Yetter state sum from this bordism, taking a constant coloring on the boundary.
This gives the operator $O_{\vec z+i,\vec z}$ above\cite{wang}.

Indeed, we can choose the bordism in an obvious way.
First we construct a bordism from $c_1$ to $c_2$ and then we construct a relative bordism.  To construct the bordism,
consider a $5$-manifold given by the four-dimensional ambient space crossed with an interval $[0,1]$.  Call the coordinate on the interval ``time" $t$.  Define a submanifold $M$ to contain all four-cells with spin down in $\vec z$ for $t<1/2$ and to contain all spins down in $\vec z+i$ for $t\geq 1/2$.  Take the boundary of $M$ and intersect it with the open interval $(0,1)$ in the time coordinate, i.e., remove the components of the boundary at times $t=0$ and $t=1$.  This is the desired bordism as
its boundary is
$c_1$ at $t=0$ and $c_2$ at $t=1$.
Then, since this bordism is simply a product in the common region, it gives the desired relative bordism.

Remark: one may regard this bordism as describing a ``spacetime history" of spins, where the spin configuration changes from $\vec z$ to $\vec z+i$, transitioning abruptly at $t=1/2$, and where the bordism is the boundary of the down spin configuration.

With this choice of bordism, we can then make a choice to fix the phase in $\Umat$.  The choice will be done in terms of relative phases.  Recall how we showed that $\Udis$ can be realized as a quantum circuit by considering a sequence of single spin flips.  We use a similar technique here.  Choose any arbitrary ordering on the spins.  As a first try (we will see that this works if the ambient four-dimensional manifold has even Euler characteristic and we will explain how to modify it later), given a spin configuration $\vec z$,
define $\Umat_{\vec z}$ so that acting on $|0\rangle$ it produces the Walker-Wang ground state with the same phase as is produced by acting with a sequence of the operators $O_{\vec z+i,\vec z}$ defined from the Crane-Yetter model to start with the configuration with all $\vec z_i=+1$ and flip spins in turn (using the ordering above) until arriving at configuration $\vec z$.

Note in fact that the resulting phase of $\Umat_{\vec z}$ is {\it independent} of the arbitrary ordering of spins chosen.

We must verify two properties of this phase rule: that it obeys the ${\mathbb Z}_2$ symmetry and that it has the needed locality properties.  To verify the symmetry, we must compute $\Umat(-\vec z)^{\dagger} \Umat(\vec z)$.  This quantity is a scalar and so can be evaluate on the vacuum $|0\rangle$.  The result may be seen to be the partition of the Crane-Yetter model for a cobordism from the empty manifold to itself, i.e., the partition function of the Crane-Yetter model on some closed four-manifold.
This closed manifold is the boundary of some spacetime history of spins starting with all spins $+1$ and ending with all spin $-1$. and so the resulting four-manifold is cobordant to the ambient four-manifold on which the system is defined.
The partition function for the three-fermion Walker-Wang model is then equal to $-1^{\chi}$, with $\chi$ the Euler of the ambient four-manifold; this follows from the fact that the signature and Euler characteristic have the same parity and from the formula for the partition function in terms of signature\cite{crane1997state}.

If $\chi$ is even, then the $\mathbb{Z}_2$ symmetry is obeyed.  If $\chi$ is odd, then the system is {\it anti-symmetric} under ${\mathbb Z}_2$.  This can be fixed either by conjugation $\Udis$ by a Pauli $Z$ operator on a single spin or, in a less ad hoc way, by combining with the dual model to the generalized double semion model, which has the same sign rule\cite{Freedman_2016,fidkowski2019disentangling,debray2018low}.

We now consider locality.  We need to show that \begin{align}
\label{needshowloc}
(\Umat_{\vec z+i})^\dagger \Umat_{\vec z}=(\Umat_{\vec z+i+j})^\dagger \Umat_{\vec z+j}
\end{align}
 for any spin $j$ sufficiently distant from spin $i$.
Of course, the left and right sides of that expression agree by construction up to a phase; we need to show that the phase is the same.
Both sides of the expression are, by construction, supported near spin $i$.

Now, we use the fact that the phase rule is independent of the arbitrary ordering of spins chosen: any two orderings will give the same
phase since the two orderings will give bordisms that are cobordant to each other.
Hence, we can choose $i$ to be the last in the ordering.
Define $\psi_{\vec z}=\Umat_{\vec z}|0\rangle$.
The operator
$(\Umat_{\vec z+i})^\dagger \Umat_{\vec z}$ has its phase such that
$$\langle \psi_{\vec z} |  O_{\vec z+i,\vec z}^\dagger  (\Umat_{\vec z+i})^\dagger \Umat_{\vec z}   |\psi_{\vec z}\rangle =1.$$
Since $O_{\vec z+i,\vec z}=O_{\vec z+i+j,\vec z+j}$ for $j$ sufficiently separated from $i$ (this holds by construction since the bordism depends only on spins near $i$),
and since the reduced density matrix of $\psi_{\vec z}$ near spin $i$ is the same as the reduced density matrix of $\psi_{\vec z+j}$ near spin $i$, we have have that
\begin{align}
\label{phase1}
\langle \psi_{\vec z+j} |  O_{\vec z+i+j,\vec z+j}^\dagger  \Umat_{\vec z+i})^\dagger \Umat_{\vec z}   |\psi_{\vec z+j}\rangle =1.
\end{align}
Further, by construction
\begin{align}
\label{phase2}
\langle \psi_{\vec z+j} |  O_{\vec z+i+j,\vec z+j}^\dagger  (\Umat_{\vec z+i+j})^\dagger \Umat_{\vec z+j}   |\psi_{\vec z+j}\rangle =1.
\end{align}
Eq.~(\ref{needshowloc}) follows.

\subsection{$\Psi_0 \otimes \Psi_0$ is trivial.}\label{sec:double-triviality}

We show that two copies of our ground state $\Psi_0$ represent the trivial phase under $\ZZ_2$ symmetry.
This will follow from the fact that
$\Udis \otimes \Udis$ can be written as a quantum circuit in which every gate commutes with $\symX$.

First, we claim that for any unitary $U$ that acts as a QCA by conjugation
and any onsite symmetry operator $\mathbf g = \bigotimes_i g$,
if $U \mathbf g = \mathbf g U$, then $U \otimes U^\dagger$ is a quantum circuit of depth $O(1)$
of which every gate commutes with $\mathbf g$.
The proof of this is completely analogous to the proof that $\alpha \otimes \alpha^{-1}$ 
is a quantum circuit for any QCA $\alpha$.
A swap gate is always symmetric under $\mathbf g$,
and so is the swap gate conjugated by $U \otimes I$ since $U$ as a whole is symmetric.
Therefore, $U \otimes U^\dagger = 
\big[ (U \otimes I) (\bigotimes \mathrm{SWAP}) (U \otimes I)^\dagger \big] (\bigotimes \mathrm{SWAP})$
is a circuit that consists of symmetric gates.
Therefore, $\Udis \otimes \Udis$ is equivalent to $\Udis^2 \otimes I$ up to a locally symmetric circuit,
where $\Udis^2 = \sum_{\vec z} \Pi^{spin}_{\vec z} \otimes (\Umat_{\vec z})^2$.

Recall that $\Umat_{\vec z}$ is a QCA whose action is determined locally:
on any $3$-cell the action is determined by two neighboring $4$-cells intersecting along that $3$-cell,
and more generally the action on a $k$-cell 
(where a material cubic lattice patch if $k =3$ or some part thereof if $k < 3$ is supported)
is determined by the spins in the open star of that $k$-cell.
The open star has diameter $O(\largeR)$.
Furthermore, since the square of $\alpha_0 = \alpha_{WW}(T^3)$ is a quantum circuit,
our sewing construction gives a circuit for $C_{\vec z} = (\Umat_{\vec z})^2$.
The gates of $C_{\vec z}$ has range $O(\smallR)$,
and hence the range of $C_{\vec z}$ is also $O(\smallR)$.

We are going to rewrite $C_{\vec z}$
as $C_{\vec z} = C^{(0)}_{\vec z} C^{(1)}_{\vec z} C^{(2)}_{\vec z} C^{(3)}_{\vec z}$
where $C^{(k)}_{\vec z} = C^{(k)}_{-\vec z}$ consists of nonoverlapping unitaries
each of which is supported on a $k$-cell.
The nonoverlapping unitaries are all local (support of size $O(\largeR)$),
so we regard them as gates.
Then, $\Udis^2$ is written as a circuit of depth~$4$ 
whose gates are manifestly symmetric under $\symX$.
This will complete the proof that $\Udis \otimes \Udis$
can be written as a quantum circuit of $\ZZ_2$ symmetric gates.

A gate in $C^{(3)}_{\vec z}$ at a $3$-cell $\Delta^3$ is just a unitary supported inside $\Delta^3$
whose action by conjugation is equal to that of $C_{\vec z}$ for all operators on the interior of $\Delta^3$.
Such a unitary can be chosen by e.g. dropping all the gates of $C_{\vec z}$ in the complement of the interior of $\Delta^3$.
Since $C_{\vec z} = C_{-\vec z}$, we may choose the same gates for $C^{(3)}_{\vec z}$ and $C^{(3)}_{-\vec z}$.
Inductively, for $k < 3$, the gates in $C^{(k)}_{\vec z}$ at $k$-cells are 
chosen such that the product $C^{(k)}_{\vec z} \cdots C^{(3)}_{\vec z}$
acts the same way as $C_{\vec z}$ does for all operators in the complement of $(k-1)$-skeleton.
Note that the gates of $C^{(k)}_{\vec z}$ do overlap with those of $C^{(k+1)}_{\vec z}$
and strictly speaking we should have specified $O(1)$ neighborhoods of cells when we match the actions by conjugation with $C_{\vec z}$,
but we did not for clarity of exposition.
Since $C_{\vec z}$ depends only on the geometry of the domain wall, rather than $\vec z$ itself,
the symmetry $\vec z \leftrightarrow - \vec z$ is retained.

\section{Quantized invariants} \label{sec:invariants}

In this section, we construct quantized invariants to classify the beyond cohomology phase.  
We first define a bulk invariant, which is a property of the ground state: 
roughly speaking, it measures the chirality of a symmetry defect.  
Our arguments for why this bulk invariant is well defined and quantized are not rigorous, 
and rely on certain physical assumptions which we spell out.  
We also define several equivalent boundary invariants (the ``$J$, $L$, $M$'' invariants).
Again we do not give any rigorous proofs, 
but we do expect that the quantities we define will be invariant under certain quantum circuits.

The boundary invariants that we define are all properties of the disentangler $\Udis$.
From $\Udis$ one constructs a boundary symmetry operator~\cite{ElseNayak}.  
For the specific boundary disentangler $\Udis$ constructed in \cref{sec:model}, 
we will see that this boundary symmetry is equivalent to the QCA of \cite{haah2018nontrivial}, 
up to a quantum circuit. 
The $J$ invariant will then be defined to be an equivalence class of 
boundary symmetry operators up to quantum circuits.  
We will show that the $J$ invariant coincides with the bulk invariant, 
showing that in fact the $J$ invariant depends only on the bulk ground state, 
and is independent of the choice of disentangler $\Udis$.

The other two equivalent formulations of the boundary invariant are as follows.  
First, the $L$ invariant is constructed from the boundary symmetry operator, 
and measures the chiral central charge between two boundary domains.  
The $L$ invariant is a number equal to $0$ or $4$ mod $8$, 
and is a boundary version of the bulk invariant, 
roughly because a boundary domain wall is analogous to a bulk symmetry defect.  
The $M$ invariant will similarly measure the chiral central charge between two domains, 
doing it in a different way that is more complicated but potentially more general: 
the $M$ invariant is an invariant of {\it three-dimensional} commuting projector Hamiltonians 
and the difference of this value between two different commuting projector Hamiltonians 
(corresponding to the two different domains of the $L$ invariant) 
will in fact give the $L$ invariant.

\subsection{The bulk invariant} \label{sec:bulkinvariant}

\noindent
We now define the bulk invariant.  In this section all dimensions will be spatial.

Let us take the bulk to be ${\mathbb{R}}^4$, parametrized by $(x,y,z,w)$.
Modify the Hamiltonian to insert a $\ZZ_2$ symmetry defect on the $xy$-plane, i.e., at $z=w=0$.
For example, this can be achieved by minimally coupling to a background lattice $\ZZ_2$ gauge field.
Viewing a spatial $\ZZ_2$ gauge field configuration as a three-manifold, 
we can take one that corresponds to a semi-infinite domain wall at $(x,y,z,0)$ with $z>0$.
This produces the desired defect at $z=w=0$.

We would now like to define a notion of chirality (i.e. chiral central charge) on the defect 
analogous to the usual notion of chirality in $2d$ states of bosons.
However, we cannot do this directly, because the usual definition of chiral central charge in $2d$ 
requires us to expose a $1$d edge and examine the energy current on this edge as a function of temperature.
On a defect it is impossible to expose such a $1d$ edge.
To get around this problem, let us first make some observations:

\begin{itemize}

\item[ 1)] We can ensure that there are no anyons living on the defect, 
by choosing the Hamiltonian terms at the defect core appropriately.  
This should always be possible, at least with ancilla degrees of freedom near the defect: 
if one choice of Hamiltonian has some topological order at the defect core, 
we can simply make a copy of the opposite topological order out of the ancilla degrees of freedom 
and condense the appropriate bound states to get rid of this topological order.
Let us denote by $\ket{\Psi_{\rm{defect}}}$ the ground state wave-function of the system 
with such an anyon-free defect.

\item[ 2)] We assume that there exists a circuit $U_{\rm{dis}}^{\rm{defect}}$, 
acting on the whole system, which disentangles the ground state of the beyond cohomology phase, 
with defect inserted as above, 
into a state which looks like a tensor product away from the defect core.
We do not require $U_{\rm{dis}}^{\rm{defect}}$ to have any particular symmetry properties.%
\footnote{
For our specific exactly solved model, such a $U_{\rm{dis}}^{\rm{defect}}$ can be used, 
via dimensionally reducing the angular coordinate in the $zw$-plane, 
to construct a disentangling circuit for the $3$-fermion Walker Wang model ground state.  
Thus constructing $U_{\rm{dis}}^{\rm{defect}}$ will be at least as difficult as 
constructing a circuit disentangler for the $3$-fermion Walker Wang model, 
and in particular $U_{\rm{dis}}^{\rm{defect}}$ will presumably have tails.
}

\item[ 3)] In a $2d$ bosonic gapped phase with no anyons, 
it is believed~\cite{Kitaev_2005} that chiral central charge 
is quantized in integer multiples of $8$, $c_{-} = 8n$.  
Furthermore, any such phase is believed to be finite depth circuit equivalent to $n$ copies of the $E_8$ state.  
This is usually summarized by saying that there 
is a $\ZZ$ classification of invertible bosonic phases in $2d$.

\end{itemize}

We can now define our bulk invariant as follows.  
First, note that $U_{\rm{dis}}^{\rm{defect}} \ket{\Psi_{\rm{defect}}}$ 
is a product state away from the defect core, and, 
because it does not support anyons, 
it must be equivalent to some number $n_0$ of $E_8$ states at the location of the defect core.
Let us now slowly rotate the defect by an angle $\pi$ in the $yz$-plane.
Specifically, we define a family of defect Hamiltonians $H(\varphi)$, 
parametrized by an angle $\varphi$ varying from $0$ to $\pi$, 
such that $H(\varphi)$ has a defect extending in the ${\hat{x}}$ 
and $\cos (\varphi) {\hat{y}} + \sin (\varphi) {\hat{z}}$ directions 
(the corresponding $\ZZ_2$ gauge field configuration 
can be viewed as the appropriate half of $xyz$-space, with boundary equal to the defect). 
For general $\varphi$, we choose the Hamiltonian terms in the core of the defect to be 
such that there are no phase transitions or level crossings as a function of $\varphi$.
Let $\ket{ \Psi^{\varphi}_{\rm{defect}} }$ be the ground state of $H(\varphi)$.
Note that $H(\pi)$ is gauge-equivalent to $H(0)$: 
conjugating by the unitary $\tilde \symX$ that acts by the $\ZZ_2$ symmetry in the region $z>0$ 
takes $H(\pi)$ to $H(0)$.
Thus $\tilde \symX \ket{ \Psi^{\pi}_{\rm{defect}} }$ looks like $\ket{ \Psi^{0}_{\rm{defect}} }$ 
away from the defect core.  
Hence $U_{\rm{dis}}^{\rm{defect}} \tilde \symX \ket{ \Psi^{\pi}_{\rm{defect}} }$ 
is a product state away from the defect core, 
and thus equivalent to some number $n_1$ of $E_8$ states.  
We claim that the \emph{parity} of $n_0-n_1$ is a bulk invariant which diagnoses the beyond cohomology phase.

One can argue on physical grounds that this bulk invariant is well defined.
First, the number of $E_8$ states is an integer whose sign depends 
on the choice of frame in our $4$-dimensional space,
but given a frame the parity of $n_0 - n_1$ is unaffected.
Second, more importantly, the choice of Hamiltonian at the defect core was arbitrary.
Given the constraint that there are no anyons at the defect core,
the only ambiguity is in the number $n_0$ of $E_8$'s at the defect core, 
after the disentangling circuit $U_{\rm{dis}}^{\rm{defect}}$ is applied.
Now, if we tack on some extra number $m$ of $E_8$'s at the defect core, 
then $n_0 \rightarrow n_0 + m$, but also $n_1 \rightarrow n_1 - m$, 
since these extra $E_8$'s simply get rotated to their inverses during the $\pi$ rotation.
Hence the parity of $n_0 - n_1$ is unaffected.

It remains to show is that the exactly solved model we built in \cref{sec:model} 
has an odd value of $n_0 - n_1$, and is hence nontrivial.
We will do this indirectly, by first defining the boundary $J$ invariant, 
showing that it is equivalent to the bulk invariant, and that it is nontrivial for our model.

\subsection{The $J$ Invariant and the Boundary Symmetry Operator} \label{subsec:J}

We follow the procedure described in Ref.~\onlinecite{ElseNayak} to construct an effective action of the symmetry at the boundary of a four-dimensional system.  Let us review this procedure. 
Given a unitary such as $\Udis$, we will define its {\it restriction} to some large $4$-ball $B$ to be any unitary $\tUdis$ which acts by conjugation as a QCA with range $O(1)$ and which has the same action by conjugation as $\Udis$ does on operators which are supported within $B$ a distance $>O(1)$ from the complement of $B$ (i.e., those supported in the interior of $B$, away from the boundary).  For example, we may choose $\tUdis$ to be $\Udis$ itself, or we may choose $\tUdis$ to be the product of the gates in $\Udis$ which
are supported within that ball $B$.
We will choose the restriction of $\symX$ to be the product of Pauli $X$ operators in the ball, calling the resulting operator $\tilde \symX$.
Let us also define $\tilde{\tilde \symX}$ to be the product of Pauli $X$ operators in a slightly smaller ball $B'$, 
choosing the radius of $B'$ smaller than that of $B$ by an amount $O(1)$ that is still large compared to the range of $\Udis$.
Then we define the boundary symmetry operator $\Xbdry$ by
\begin{align} \label{eq:Xbdry}
\Xbdry=\tilde{\tilde \symX} \tUdis^\dagger \tilde \symX \tUdis.
\end{align}
Since $\tUdis$ has the same action by conjugation on $\tilde \symX$ as $\Udis$ does, and since $\symX \Udis^\dagger \symX \Udis=I$,
the product $\tUdis^\dagger \tilde \symX \tUdis$ is equal to a product of Pauli $X$ operators inside the ball, multiplied by some unitary supported near the boundary $\partial B$.
Choosing the radius of $B'$ appropriately, the Pauli $X$ operators in $B'$ are cancelled by Pauli $X$ operators in $\tUdis^\dagger \tilde \symX \tUdis$ and
so $\Xbdry$ is supported near $\partial B$ and
\begin{align}
\Xbdry^2=I.
\end{align}

Then, we have
\begin{lemma}\label{indeplemma}
The boundary operator $\Xbdry$ is independent of the choice of restriction $\tUdis$, up to multiplying $\Xbdry$ by a quantum circuit supported near $\partial B$.
Hence, up to a quantum circuit supported near $\partial B$, the boundary operator $\Xbdry$ is equal to
 $\tilde{\tilde \symX} \Udis^\dagger \tilde \symX \Udis $, i.e., we choose $\tUdis=\Udis$ here.
\end{lemma}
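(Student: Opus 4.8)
The plan is to compare the boundary operators produced by two arbitrary restrictions and show that their ratio is a finite-depth circuit supported near $\partial B$; the second sentence of the lemma then follows immediately, since $\Udis$ is itself a legitimate restriction. So let $\tUdis$ and $\tUdis'$ be two restrictions of $\Udis$ to $B$, write $\Xbdry=\tilde{\tilde\symX}\tUdis^\dagger\tilde\symX\tUdis$ and $\Xbdry'=\tilde{\tilde\symX}\tUdis'^\dagger\tilde\symX\tUdis'$ for the corresponding boundary operators, and set $W=\tUdis'\tUdis^\dagger$, a QCA of range $O(1)$. The whole argument will reduce to controlling $W$.

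The first step is to check that $W$ acts trivially by conjugation on every operator $O$ supported deep in the interior of $B$. By definition each of $\tUdis,\tUdis'$ reproduces the forward conjugation action of $\Udis$ there, and a short support-chasing argument upgrades this to the reverse action: if $O$ lies a distance $>c+R$ from $\partial B$, then $P=\tUdis^\dagger O\tUdis$ is still interior (a range-$R$ QCA spreads supports by at most $R$), so applying the forward agreement to $P$ gives $O=\tUdis P\tUdis^\dagger=\Udis P\Udis^\dagger$, i.e. $\tUdis^\dagger O\tUdis=\Udis^\dagger O\Udis$, and likewise for $\tUdis'$; composing yields $WOW^\dagger=\tUdis'(\Udis^\dagger O\Udis)\tUdis'^\dagger=O$. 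I would then substitute $\tUdis'=W\tUdis$ into $\Xbdry'$ and isolate the discrepancy: writing $W^\dagger\tilde\symX W=\tilde\symX D$ with $D=\tilde\symX W^\dagger\tilde\symX W$, a one-line rearrangement gives
\begin{align}
\Xbdry' = \Xbdry\cdot\tUdis^\dagger D\tUdis .
\end{align}
The operator $D$ is supported near $\partial B$: far outside $B$ the factor $\tilde\symX$ is absent, so $D$ reduces to $W^\dagger W=I$, while deep inside $B$ the QCA $W$ acts trivially, so $D=\tilde\symX\tilde\symX=I$; only the collar of $\partial B$ survives. Since conjugation by the range-$O(1)$ QCA $\tUdis$ sends finite-depth circuits to finite-depth circuits and preserves support up to $O(1)$, the lemma reduces to showing that $D$ itself is a finite-depth circuit near $\partial B$.

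This is the crux, and the only genuinely nontrivial point. A priori $D$ is merely some unitary localized near the $3$-sphere $\partial B$, and such a unitary can perfectly well be a nontrivial QCA—indeed the whole paper hinges on $\Xbdry$ being exactly such an object—so one cannot conclude from locality alone. The reason $D$ is a circuit is that it is the commutator $\tilde\symX W^\dagger\tilde\symX W$ of the QCA $W$ with the symmetry operator $\tilde\symX$, and $\tilde\symX$ is a product of single-site Pauli $X$'s, i.e. a depth-one circuit. Passing to the abelian group of QCAs-near-$\partial B$ modulo finite-depth circuits (whose abelianness is the structural input on QCAs already invoked elsewhere in this paper), the class of $D$ is $[\tilde\symX]+[W^\dagger]+[\tilde\symX]+[W]=0$, since $[\tilde\symX]=0$ and the two $W$-contributions cancel; hence $D$ is a circuit. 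Equivalently, one can argue directly from the standard fact that $U^\dagger\alpha(U)$ is a finite-depth circuit for any QCA $\alpha$ and any tensor-product unitary $U$, proved by coloring the collar of $\partial B$ with $O(1)$ colors at pairwise separation exceeding the range and grouping the resulting disjoint local gates.

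Finally, since the roles of $\tUdis$ and $\tUdis'$ are symmetric, this establishes independence up to a circuit supported near $\partial B$ for any pair of restrictions; specializing $\tUdis'=\Udis$ gives the stated normal form $\Xbdry=\tilde{\tilde\symX}\Udis^\dagger\tilde\symX\Udis$ up to such a circuit. I expect the circuit claim for $D$ to be the main obstacle; every other step is bookkeeping with operator supports and the range-$O(1)$ QCA property.
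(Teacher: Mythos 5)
Your proof is correct, and it shares the paper's overall skeleton --- factor one restriction as the other times a locality-preserving discrepancy that acts trivially by conjugation deep inside $B$, then reduce the lemma to a group commutator being a finite-depth circuit --- but your decomposition places the discrepancy so that the crux becomes genuinely more elementary. The paper writes $\tUdis=\tUdisz V$ with the discrepancy $V$ on the \emph{right}; after using that $V$ commutes with $\tilde{\tilde\symX}$, it gets $\Xbdry=V^\dagger \Xbdryz V=\Xbdryz\,\bigl(\Xbdryz^\dagger V^\dagger \Xbdryz V\bigr)$ and must invoke the theorem of Ref.~\cite{FreedmanHaahHastings2019Group} that the commutator of two \emph{arbitrary} QCA ($\Xbdryz$ and $V$, both potentially nontrivial) is a circuit. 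By writing the discrepancy on the \emph{left}, $\tUdis'=W\tUdis$, your inner conjugation hits $\tilde\symX$ directly, the error factor is $\tUdis^\dagger D\,\tUdis$ with $D=\tilde\symX W^\dagger\tilde\symX W$, and $D$ is the commutator of a QCA with a depth-one circuit (a tensor product of single-spin Pauli $X$'s). That commutator is a circuit for elementary reasons --- $W^\dagger\tilde\symX W$ is a product of mutually commuting $O(1)$-supported unitaries $W^\dagger X_i W$, which a coloring groups into $O(1)$ rounds --- so the abelianness of the QCA group modulo circuits is never needed at all. Your rearrangement also sidesteps the commutation of the discrepancy with $\tilde{\tilde\symX}$ that the paper uses implicitly, and you spell out the support-chasing (forward conjugation agreement of a restriction implies trivial conjugation action of $W$ on the deep interior) that the paper merely asserts. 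Two small touch-ups would make it airtight: first, to place the \emph{gates} of $D$, and not merely its operator support, near $\partial B$, split $\tilde\symX=\tilde\symX_{\mathrm{deep}}\tilde\symX_{\mathrm{collar}}$ and use $W^\dagger\tilde\symX_{\mathrm{deep}}W=\tilde\symX_{\mathrm{deep}}$ to get $D=\tilde\symX_{\mathrm{collar}}\bigl(W^\dagger\tilde\symX_{\mathrm{collar}}W\bigr)$, to which your coloring argument applies verbatim; second, when conjugating the circuit $D$ by the range-$O(1)$ QCA $\tUdis$, same-round gates may come to overlap, but they still commute and hence re-split into $O(1)$ rounds, preserving both circuit depth and support up to $O(1)$.
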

Remark: if we allowed more general restrictions $\tilde \symX$, 
 then $\Xbdry$ is also
independent of that choice of restriction too, so long as the restriction $\tilde \symX$ is supported on $B$ (and obeys the other requirements of a restriction, i.e., acting as $\symX$ near the interior of $B$ and acting as a QCA).
\begin{proof}
Let $\tUdisz$ be some restriction, with corresponding boundary
operator $\Xbdryz=\tilde{\tilde \symX} \tUdisz^\dagger \tilde \symX \tUdisz$.
Consider some other restriction
$\tUdis=\tUdisz V$.
By definition of a restriction, $V$ must act as the identity on the interior of $B$ and must act as a QCA.
The boundary operator corresponding to $\tUdis$ is
$\Xbdry=\tilde{\tilde \symX} V^\dagger \tUdisz^\dagger \tilde \symX \tUdisz V$.
By assumption on the support of $V$, this is equal to
$V^\dagger \Xbdryz V$.
We may commute $V$ through using the group commutator and the result is equal to $\Xbdryz$ up to a quantum circuit;
this is obvious if $V$ is a circuit and more generally, if $V$ is a QCA, 
we may use the general fact that the commutator of QCA is always a circuit~\cite{FreedmanHaahHastings2019Group}.

The same technique works also if we allow more general restrictions $\tilde \symX$.
\end{proof}

With this lemma in hand, we will choose particular restrictions $\tUdis,\tilde \symX$ that make the computation of $\Xbdry$ simpler.  With these choices, we will show show that $\Xbdry$ acts by conjugation as a QCA and this QCA is equivalent, up to a
quantum circuit, to the QCA $\alpha_{WW}$ that disentangles the three-fermion Walker-Wang phase of Ref.~\onlinecite{haah2018nontrivial}.
We emphasize that this quantum circuit is 
a quantum circuit supported near (i.e., within distance $O(1)$) the three-dimensional boundary; 
this is important because, by a ``swindle,'' 
it is possible to realize any nontrivial three-dimensional QCA on the boundary 
by a quantum circuit in the four dimensional ball.

In words, we will choose $\tUdis$ to be a controlled unitary, treating all spins outside $B$ as if they were in the $Z=+1$ state.
Precisely,
we will choose
\begin{align}
\tUdis = \sum_{\vec z} \Pi^{spin}_{\vec z} \otimes \Umat_{\vec z_B}.
\end{align}
Recall that our notation is that $(\vec z_B)_i=\vec z_i$ for $i\in B$ and $(\vec z_B)_i=1$ for $i\not \in B$.
Note that for any gate in $\Udis$, if all spins in the support of that gate are in the $Z=+1$ state, that gate acts as the identity on the material.  Hence all gates depending only on spins outside $B$ may be dropped from $\tUdis$.

We will consider the action of $\Xbdry$ on the material for various spin basis states.
To get oriented, let us first consider the action of $\Xbdry$ on a configuration with all spins inside the ball in the $Z=+1$ state.  Acting on this state, $\tUdis$ acts as the identity.  Then, flipping spins inside the ball by $\tilde \symX$, the unitary $\tUdis^\dagger$ acts by conjugation on the material as $\alpha_{WW}^{-1}(\partial B)$.
Similarly, if all spins inside the ball are in the $Z=-1$ state, the unitary $\tUdis^\dagger$ acts by conjugation on the material as $\alpha_{WW}(M)$.

Now consider an arbitrary spin configuration inside the ball $B$.  To compute the action of $\Xbdry$ in this case, it is convenient to use the ball $B'$ which is a slightly smaller ball contained within $B$; we choose $B'$ so that the radius of $B'$ is smaller by an amount $O(1)$ that is still large compared to the range of the quantum circuit $\Udis$ and so that $\Xbdry$ acts as the identity on $B'$.  
Let $\tilde \symX'$ be the product of Pauli $X$ operators on spins in $B'$, let $\tUdis'$ be the restriction of $\Udis$ to $B'$ and let $\Xbdry'$ be the boundary symmetry operator
 $\tilde{\tilde {\symX'}} (\tUdis')^\dagger \tilde \symX' \tUdis'$ where $\tilde{\tilde {\symX'}}$ is the product of Pauli $X$ operators in a ball $B''$ slightly smaller than $B'$.

Then, $\Xbdry$ is equal to $\Xbdry'$ up to a quantum circuit near the boundary.  Further,
since $\Xbdry$ commutes with any Pauli $X$ operator on spins in $B'$ and so we may assume that all spins in $B'$ are in the $Z=+1$ state initially when computing $\tilde \symX \tUdis^\dagger \tilde \symX \tUdis$.
However, for this choice of spins in $B'$, we know from above that $\Xbdry'$ acts by conjugation as $\alpha_{WW}^{-1}(\partial B')$.
Hence, for an arbitrary spin configuration inside $B$, we find that $\Xbdry$ acts by conjugation as $\alpha_{WW}^{-1}(\partial B')$, up to a quantum circuit, which in turn
means that 
$\Xbdry$ acts by conjugation as $\alpha_{WW}^{-1}(\partial B)$, up to a quantum circuit.
Note also that $\alpha_{WW}^{-1}(\partial B)$ and $\alpha_{WW}(\partial B)$ agree up to a quantum circuit.

Recall that $\Xbdry^2=I$.  This is interesting because $\alpha_{WW}(\partial B)^2$ is not the identity, but is equal to a circuit and we do not currently know any Clifford QCA that is equal to $\alpha_{WW}$ up to a circuit and whose square is the identity ($\Xbdry$ is not a Clifford QCA).

As we remarked before, since $\Xbdry$ cannot be truncated to act only on part of the boundary, this prevents one from implementing
the second step
of the Else-Nayak descent procedure\cite{ElseNayak}.

We define the $J$ invariant to be the equivalence class of $\Xbdry$ up to quantum circuits.
If $\alpha_{WW}$ is indeed nontrivial as believed, then \cref{Unontrivlemma} implies that $\Udis$ itself is nontrivial.
Here, nontriviality of $\Udis$ means that $\Udis$ cannot be written as a product of local gates which commute with $\symX$.  
\begin{lemma}
\label{Unontrivlemma}
If $\Udis$ can be written as a quantum circuit whose gates commute with $\symX$, then $\Xbdry$ can be written as a quantum circuit supported near the boundary.
\begin{proof}
Suppose $\Udis$ can be written as a product of local quantum gates which commute with $\symX$.
Insert this decomposition of $\Udis$ into
 $\tilde \symX \Udis^\dagger \tilde \symX \Udis $, which is equal to $\Xbdry$ up to a quantum circuit as shown in \cref{indeplemma}.
Then, all gates (both inside and outside $B$) which
are sufficiently far from $\partial B$ cancel, so $\Xbdry$ is a quantum circuit supported near the boundary.
\end{proof}
\end{lemma}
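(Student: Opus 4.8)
The plan is to begin from the simplified expression for $\Xbdry$ supplied by \cref{indeplemma}: choosing the restriction $\tUdis=\Udis$, we have $\Xbdry=\tilde{\tilde\symX}\,\Udis^\dagger\tilde\symX\,\Udis$ up to a quantum circuit supported near $\partial B$. Hence it suffices to show that this particular operator is itself a circuit localized near $\partial B$. I would then insert the hypothesized decomposition $\Udis=\prod_g g$ into local gates $g$, each commuting with the global symmetry $\symX=\bigotimes_i X_i$, and track precisely which gates fail to cancel upon conjugating $\tilde\symX$.

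The key observation is that the only obstruction to cancellation comes from gates straddling $\partial B$. Writing $\tilde\symX=\symX\,\symX_{\mathrm{out}}$, where $\symX_{\mathrm{out}}$ is the product of Pauli $X$ over spins outside $B$, and using that $\Udis^\dagger\symX\,\Udis=\symX$ (since every gate commutes with $\symX$), I would reduce the computation to $\Udis^\dagger\tilde\symX\,\Udis=\symX\,(\Udis^\dagger\symX_{\mathrm{out}}\,\Udis)$. Now any gate $g$ supported entirely in the interior of $B$ is disjoint from the support of $\symX_{\mathrm{out}}$ and so commutes with it, while any gate $g$ supported entirely outside $B$ commutes with $\symX$ and, by disjointness, with $\tilde\symX$, hence also with $\symX_{\mathrm{out}}$. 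Thus when one commutes $\symX_{\mathrm{out}}$ through $\Udis$ gate by gate, every gate except those within range $O(1)$ of $\partial B$ passes through untouched.

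Because $\Udis$ is a finite-depth circuit of local gates, only an $O(1)$-depth collection of gates lies within range of $\partial B$; collecting the resulting local group commutators $g^\dagger\symX_{\mathrm{out}}\,g\,\symX_{\mathrm{out}}$ yields a quantum circuit $W$ supported near $\partial B$ with $\Udis^\dagger\symX_{\mathrm{out}}\,\Udis=\symX_{\mathrm{out}}\,W$. Consequently $\Udis^\dagger\tilde\symX\,\Udis=\tilde\symX\,W$, so $\Xbdry=\tilde{\tilde\symX}\,\tilde\symX\,W$. Finally $\tilde{\tilde\symX}\,\tilde\symX$ is the product of single-site Pauli $X$ operators on the shell $B\setminus B'$, which is a depth-one circuit localized near $\partial B$; combining it with $W$ and with the circuit furnished by \cref{indeplemma} gives the asserted circuit near $\partial B$.

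The step I expect to be the main obstacle is the locality bookkeeping in the gate-by-gate commutation: one must confirm that passing $\symX_{\mathrm{out}}$ through the $O(1)$ layers of boundary gates produces only operators localized within $O(1)$ of $\partial B$ and, crucially, that the leftover is a genuine finite-depth circuit rather than merely a locality-preserving QCA. This is exactly where the finite depth and range of $\Udis$, together with the symmetric structure of its gates, is indispensable---and it is precisely this input that the nontrivial QCA $\alpha_{WW}$ lacks, which is what ultimately forces $\Udis$ to be nontrivial.
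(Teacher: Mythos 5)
Your proposal is correct and follows essentially the same route as the paper's proof: the paper's terse argument that ``all gates sufficiently far from $\partial B$ cancel'' in $\Udis^\dagger \tilde \symX \Udis$ is exactly the cancellation you make explicit via the factorization $\tilde\symX = \symX\,\symX_{\mathrm{out}}$, with symmetric gates deep inside $B$ commuting with $\tilde\symX$ and gates outside $B$ being disjoint from its support, leaving only the boundary-straddling commutators as a finite-depth circuit near $\partial B$. Your write-up simply fills in the bookkeeping (layer-by-layer conjugation, the shell factor $\tilde{\tilde\symX}\,\tilde\symX$) that the paper leaves implicit.
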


\subsection{Dependence of boundary symmetry on choice of cellulation}

In the previous subsection, we showed that $\Xbdry$ is equal to $\alpha_{WW}(\partial B)$, up to a quantum circuit.  In Ref.~\onlinecite{haah2018nontrivial}, we presented strong evidence that the QCA $\alpha_{WW}$ which disentangles the Walker-Wang Hamiltonian on a cubic lattice is nontrivial.
In this subsection we discuss the relation between these two QCA, and sketch a proof that they agree up to a quantum circuit.

Our method will be, roughly, to show that given a QCA which disentangles the Hamiltonian on some given cellulation, it agrees (up to a circuit) with a QCA which disentangles the Hamiltonian on a refinement of that cellulation; then we show that any two QCA which disentangle the same Hamiltonian agree up to a circuit.  In this way, we compare QCA on two different cellulations by considering a common refinement.

First, let us define precisely what it means for a QCA to disentangle a Hamitonian which is  sum of commuting local terms.  We will say that a QCA $\alpha$ disentangles some Hamiltonian which is a sum of commuting local terms if every term is mapped by $\alpha$ to some product of Pauli $Z$ operators and if every Pauli $Z$ operator is mapped by $\alpha^{-1}$ to a product of terms in the Hamiltonian, with every term in the product supported near the given Pauli $Z$ operator.  Note that this implies that each term in the Hamiltonian squares to the identity.

\paragraph{Any QCA $\gamma$ which maps every Pauli $Z$ operator to itself is a quantum circuit.}
We call such QCA {\it diagonal QCA} because they are given by conjugation by some diagonal unitary $U$, i.e., it is a phase controlled by the qubits.  To show that the QCA is a quantum circuit, tile the manifold with colored tiles as in \cref{lem:circfromcontrolledqca}.  Define $\vec z_{S_a}$ as in that lemma and define $U_{S_a}$ to be the phase corresponding to
qubit configuration $\vec z_{S_a}$. Remark: here, of course, we do not have a distinction between ``spins" and ``material" as we only have ``material" and the
vector $\vec z_{S_a}$ now refers to the configuration of all qubits.
Then, as in that lemma, it is easy to show that $U_{S_{a+1}} U_{S_a}^\dagger$ is a circuit and composing these circuits give the result.

\paragraph{Any two QCA $\alpha,\beta$ which disentangle the same Hamiltonian agree up to a quantum circuit.}
The product $\delta\equiv \alpha \circ \beta^{-1}$ maps every Pauli $Z$ operator to a product of Pauli $Z$ operators.  
We identify a product of Pauli $Z$ operators with a vector in ${\mathbb F}_2^N$ in the obvious way, where $N$ is the number of qubits.  Consider any qubit $i$.
By linear algebra, there is some $j$ such that 
$Z_j$ is linearly independent (using this identification of operators with vectors) of the set of $\delta(Z_k)$ for $k\neq i$.
Hence, $\delta^{-1}(Z_j)$ is linearly independent of the set of $Z_k$ for $k\neq i$ and hence $\delta^{-1}(Z_j)$ has $Z_i$ as a
nontrivial factor.  One may then precompose $\delta$ with CNOTs whose target qubit is $i$ and whose source qubits are the other factors of $\delta^{-1}(Z_j)$ to give a new QCA that maps $Z_i$ to $Z_j$
and that
maps $Z_k$ for $k\neq i$ in the same way as $\delta$.
The range of these CNOT gates is bounded by the range of $\delta$.
We can then 
apply this operation in parallel to
 a set of far separated qubits $i_1,i_2,\ldots$, i.e., we can apply a sequence of CNOTs in parallel, giving a quantum circuit whose gates are supported near $i_1,i_2,\ldots$ so that the composition of this sequence of CNOTs with $\delta$ maps $Z_{i_a}$ to $Z_{j_a}$.
Doing this over a bounded number of rounds maps $\delta$ to a diagonal QCA which, by above, is a quantum circuit.

We now compare the QCA which disentangle the Walker-Wang Hamiltonian $H(c,\phi)$ defined in \cref{eq:Hamiltonian-3FWW}
for different cellulations $c$ and projections $\phi$.
Throughout, when we refer to a ``cellulation," we mean both a cellulation and an associated projection;
for simplicity we omit $\phi$ and write $H(c)$.
We will call a QCA a {\it Clifford QCA} if it maps Pauli operators to products of a Pauli operators.

\paragraph{Let $c_1,c_2$ be cellulations (possibly of different three-manifolds) 
which agree except for a ball~$B$ of radius~$O(1)$.
Assume $\alpha$ is a Clifford QCA that disentangles $H(c_1)$.
Then, there is a Clifford QCA $\beta$ which disentangles $H(c_2)$ such that $\alpha,\beta$ agree up to
a Clifford unitary supported near $B$.}
Before giving the proof, let us remark that of course not every Hamiltonian can be disentangled by a QCA;
for example a Hamiltonian with intrinsic topological order cannot be disentangled.
Our construction of a QCA $\beta$ to disentangle $H(c_2)$ will use the existence of $\alpha$ 
which disentangles all terms of $H(c_2)$ supported in the common region (the complement of $B$),
as well as the fact that $H(c_2)$ is a stabilizer Hamiltonian without any local observable on the ground space.
The last property is trivial once we know $H(c_2)$ has a unique ground state,
but, strictly speaking, we only know that $H(c)$ obeys local topological order without deconfined topological charges.
In fact, our construction of $\alpha_{WW}(M)$ on a cubulated 3-manifold $M$ 
is a reason for, rather than a consequence of, the nondegeneracy of $H(M)$'s ground state.
{\it Proof of the claim {\bf c}}:
Let ${\cal A}$ be the set of Pauli $Z$ operators supported sufficiently far (distance $>O(1)$) from $B$.
Note that each such Pauli $Z$ operator is in the stabilizer group generated 
by $\alpha$ applied to the terms supported in the common region.
Since $H(c_2)$ is a stabilizer Hamiltonian, by linear algebra over $\mathbb F_2^N$, 
we may find some maximal subset ${\cal S}$ of the 
terms of $H(c_2)$ which are linearly independent from each other and from $\alpha^{-1}({\cal A})$, 
i.e., ${\cal S}$ and $\alpha^{-1}({\cal A})$ together give a linearly independent set 
which generates all terms of $H(c_2)$.
The terms in ${\cal S}$ are all supported near $B$ as every term supported in the common region 
is mapped by $\alpha$ to a product of Pauli $Z$ operators 
and if the term is sufficiently far from the complement of the common region,
that product will be in ${\cal A}$.
The terms in ${\cal S}$ are mapped by $\alpha$ in some arbitrary way to products of Paulis.
By linear independence,
we may apply a Clifford supported near the complement of the common region to map each such product to a Pauli $Z$ operator while leaving ${\cal A}$ invariant.
Since any local operator that commutes with all terms of $H(c_2)$ is a product of terms of $H(c_2)$,
there are ``enough" such terms so that every Pauli $Z$ operator is
in the image of some term in ${\cal S}$ or is in ${\cal A}$.

Remark: possibly cellulation $c_2$ may have a different number of cells from that of $c_1$ outside the common region, for example if $c_2$ is a refinement of $c_1$; in this case, the notion of agreeing up to a quantum circuit is a stable equivalence, i.e., we may tensor $\alpha,\beta$ with the identity QCA on any added cells.

\paragraph{Let $c_1,c_2$ be some cellulations of a three-manifold which agree in some some common region, and disagree in some regions $R_1,R_2,R_3,\ldots$ with the $R_a$ far separated from each other and each $R_a$ of diameter $O(1)$.  Let Clifford $\alpha$ disentangle $H(c_1)$.
Then, there is some Clifford QCA $\beta$ which agrees with $\alpha$ up to a quantum circuit of depth $1$ and which disentangles $H(c_2)$.}
To see this, simply implement the prescription of {\bf c} in parallel for each $R_a$.  Since each $R_a$ has diameter $O(1)$, the unitary supported near $R_a$ is a gate with
diameter $O(1)$.

\paragraph{Given any two cellulations $c,c'$ 
which can be related by a sequence $c,c_1,c_2,\ldots,c'$ 
such that each neighboring pair in the sequence obeys the requirements of {\bf d}, 
then if Clifford QCA $\alpha$ disentangles $H(c)$, 
there is some Clifford QCA $\beta$ which disentangles $H(c')$, where $\alpha,\beta$ agree up to
a quantum circuit of depth proportional to the length of the sequence.}
In this way we relate disentanglers for different cellulations.

Note further that any two triangulations of the same $3$-manifold 
can be perturbed to be transversal to each other and then there exists a common refinement.
If one cellulation $c'$ is a refinement of another cellulation $c$, 
then the results above give a quantum circuit of depth $O(1)$ 
mapping the disentangling QCA for one cellulation to that for the other.
The range of gates in the QCA is comparable to the length scale of the cells in $c$, 
which may potentially be much larger than the cells in $c'$.

Finally, putting all this together: 
given two cellulations $c,c'$ of the same manifold, with Hamiltonians $H(c),H(c')$ 
and given two Clifford QCA $\alpha(c)$,$\alpha(c')$ which disentangle the respective Hamiltonians, 
we claim that $\alpha(c),\alpha(c')$ agree up to a local quantum circuit.
Proof: find some common refinement $c''$.
By {\bf e}, $\alpha(c)$ agrees, up to local quantum circuit, with some QCA $\gamma$ that disentangles $H(c'')$.
Similarly, $\alpha(c')$ agrees, up to local quantum circuit, with some QCA $\delta$ that also disentangles $H(c'')$.
Then, by {\bf b}, $\gamma,\delta$ agree with each other up to local quantum circuit.

It is worth remarking that any two QCA (regardless of whether they disentangle the Walker-Wang Hamiltonian) 
which agree on three-cells must agree up to circuits composed with shifts.
This is because of the general result that two-dimensional QCA are trivial~\cite{FreedmanHastings}.

\subsection{Equivalence of $J$ invariant and bulk invariant}

We now give a physical argument that the $J$ invariant is equal to the bulk invariant.
This in particular implies that the $J$ invariant 
is independent of the choice of symmetric disentangler $\Udis$.
In this section $\Udis$ will refer to a general symmetric disentangler,
in some $\ZZ_2$ symmetric phase whose beyond cohomology SPT order we are trying to diagnose.

The idea is to consider a bulk symmetry defect, as in \cref{sec:bulkinvariant}.
This is a codimension~$2$ surface, attached to a $3$-dimensional ``branch cut.''
The branch cut is just the chosen background $\ZZ_2$ gauge field configuration.
Applying $\Udis$ disentangles the Hamiltonian of the system with the defect 
everywhere except at the branch cut.
In other words, the remaining, nondisentangled terms in the Hamiltonian are localized near the branch cut, 
and can be thought of as a $3d$ Hamiltonian.
For the exactly solved model constructed in \cref{sec:model},
this $3d$ Hamiltonian is simply the 3-fermion Walker-Wang model,
as expected from the decorated domain wall picture.
For a general model with symmetric disentangler $\Udis$, 
it is easy to see that the $3d$ Hamiltonian can be obtained 
by conjugating a trivial Hamiltonian $H_{\rm{trivial}}$ 
(namely, minus the sum of $X$ operators on all spins, 
plus terms that put the ``material'' into a trivial state) 
by a unitary $Y = \tilde{\symX} \Udis \tilde{\symX} \Udis^{-1}$, 
where $\tilde{\symX}$ is the global action of the $\ZZ_2$ symmetry on half of the $4d$ space.  
More specifically, if the branch cut is at $w=0$ with $z>0$ as in \cref{sec:bulkinvariant},
then $\tilde{\symX}$ is the product of $X$ operators on all spins with $w>0$.
Note that although $Y$ acts nontrivially everywhere in the neighborhood of $w=0$, 
we only use it to conjugate the terms of $H_{\rm{trivial}}$ near the branch cut, i.e. for $z>0$.
Note also that if we had exposed a physical boundary at $w=0$ then $Y$ would just be $\Xbdry$ acting on that boundary, see \cref{eq:Xbdry}.
To show that the bulk invariant matches the $J$ invariant, 
we then just have to show that the bulk invariant is nontrivial 
precisely when the QCA $Y$ is in the equivalence class of $\alpha_{WW}$.

To do this, let us first construct $\Udis^{\rm{defect}}$, 
a circuit that disentangles the ground state with the defect (but {\emph{not}} necessarily the Hamiltonian).  
Such a circuit will be given by first acting with $\Udis$, 
and then disentangling the remaining state at $w=0, z>0$ with some circuit $W$.
In the case of our model, this remaining state is the ground state of a Walker-Wang model, 
and so here we are making an assumption: 
that the ground state of $3$-fermion Walker-Wang model has a circuit disentangler,
albeit likely one with tails (i.e., not a strictly short ranged one).
We then define $\Udis^{\rm{defect}} = W \Udis$.
We can think of $\left(\Udis^{\rm{defect}}\right)^{-1} \Udis$ as a ``flux-insertion'' operator:
it maps the ground state without a defect to a ground state with a defect.

The key physical insight now is that when the phase is nontrivial, 
the circuit $W$ ``pumps chirality'' $c=\pm 4$, 
from out at infinity, along the branch cut, to the location of the defect.
This comes from the following physical heuristic used in constructing the 3-fermion Walker-Wang disentangler:
one nucleates bubbles of vacuum inside the 3-fermion Walker-Wang model, 
and percolates them until they eat the whole Walker-Wang ground state.  
The important point is that the naive commuting projector bubble surface 
would host the $3$-fermion topological order, 
and we have to get rid of this topological order before we can percolate the bubbles, 
so as to avoid topological ground state degeneracies on the high genus surfaces that form.
We do this by condensing the topological order 
with a truly $2d$, and hence chiral, realization of the opposite topological order.
This is where we have to make a choice of~$c=\pm 4$, 
and this is the sense in which such a process pumps chirality.

Now, when we do the adiabatic rotation by $\pi$, 
the direction from which the chirality~$c_{-}=4$ is pumped rotates by~$\pi$ as well:
it now comes from $z<0$ rather than from $z>0$.
So instead of getting $c_{-}=4$ we get~$c_{-}=-4$, 
which is a difference of an odd multiple of~$8$.
Thus the bulk invariant, given by the parity of one eighth of this difference, 
is nontrivial precisely when the QCA~$Y$ (and hence also $\Xbdry$)
is in the equivalence class of~$\alpha_{WW}$.
Therefore the $J$ invariant matches the bulk invariant.

\subsection{The $L$ Invariant} \label{subsec:l}

The $J$ invariant is simple to formulate, but may be difficult to diagnose, 
because to diagnose it, we need to construct $\Xbdry$ and 
determine whether or not it is a nontrivial QCA.  
In this section we define an equivalent invariant 
--- the $L$ invariant --- which has a more straightforward physical interpretation.

Let us take the bulk to be parametrized by $(\theta_1,\theta_2,\theta_3,w)$, 
where $w \geq 0$ is the bulk coordinate, with the boundary at $w=0$.  
We have chosen the boundary directions to be parametrized by angles $0 < \theta_j \leq 2\pi$; 
it should be understood that the boundary directions 
are much larger than any microscopic length scale.
Thus the geometry is $T^3 \times {\mathbb{R}}^{\geq 0}$.
We now `double' the system by making it twice as long in the $\theta_1$ direction, 
i.e., the boundary now has size $4\pi \times 2\pi \times 2\pi$.  
The local Hilbert spaces of the new doubled system 
are obtained from the original using the natural $2$-to-$1$ covering map.
From now on we work exclusively with the boundary of the doubled system.
The advantage of the doubled system is that it has a discrete translation symmetry 
$T_1: \theta_1 \rightarrow \theta_1+2\pi$.

We now consider a configuration of spins on the boundary 
with $Z=1$ for $0 < \theta_1 \leq 2\pi$ and $Z=-1$ for $2\pi < \theta_1 \leq 4\pi$.
This configuration is invariant 
under translating by $2\pi$ in the $\theta_1$-direction and flipping all the spins.
Let us define the corresponding symmetry operator $\symX' = \Xbdry T_1$.
We now introduce a Hamiltonian in the region $\epsilon < \theta_1 < 2\pi-\epsilon$
that puts the material in that region into a trivial product state,
where $\epsilon \ll 2\pi$ is still large compared to any microscopic scale, 
and we gap out the material in the $2\pi+\epsilon < \theta_1 < 4\pi - \epsilon$ region 
in such a way that the symmetry $\symX'$ is preserved.  
In our exactly solved decorated domain wall model, 
this amounts to putting the material in the $2\pi+\epsilon < \theta_1 < 4\pi - \epsilon$ region 
into a $3$-fermion Walker-Wang state.

We now pick a Hamiltonian in the region $-\epsilon \leq \theta_1 \leq \epsilon$ 
that gaps out the material there without allowing any anyon excitations.
We then conjugate this Hamiltonian by $\symX'$ 
to obtain a corresponding Hamiltonian in the region $2\pi-\epsilon < \theta_1 < 2\pi+\epsilon$.
The result is a fully gapped boundary.
We now dimensionally reduce this boundary in the $\theta_1$-direction 
to obtain a quasi $2d$ system in the $\theta_2,\theta_3$ directions, 
and measure the chiral central charge of this quasi $2d$ system.
Since there are no anyons, this chiral central charge must be equal to $8n$, where $n$ is some integer.
We define the $L$ invariant to be the parity of $n$.  Using standard arguments we know that the $L$ invariant is a well defined quantized invariant, 
independent of the various arbitrary choices made above.

Informally, the $L$ invariant is nontrivial 
precisely when a boundary domain wall has chiral central charge~$4$~mod~$8$.
The reason for the above `doubled' construction is that 
the chiral central charge of a single domain wall cannot be measured,
because a single domain wall is not a truly $2d$ system.

To see that the $L$ invariant is equivalent to the $J$ invariant, 
we first consider a situation where the $J$ invariant is trivial, 
i.e., $\Xbdry$ is a finite depth quantum circuit. 
Then $\Xbdry$ can be truncated, which means that the region $-\epsilon \leq \theta_1 \leq \epsilon$ can be gapped out with commuting projectors (by conjugating the trivial Hamiltonian by a truncated version of $\Xbdry$ that acts only in the region $\theta_1<0$).  The Hamiltonian in the region $2\pi-\epsilon < \theta_1 < 2\pi+\epsilon$ is then also made out of commuting projectors, which means that the entire quasi $2d$ system is a commuting projector model, and hence $n=0$.

On the other hand, let us consider a situation where the $J$ invariant is nontrivial.  
Specifically, let us look at our decorated domain wall model.  
A naive commuting projector choice of Hamiltonian 
will leave the region $-\epsilon \leq \theta_1 \leq \epsilon$ with the $3$-fermion topological order, 
which can be gotten rid of by introducing additional ancilla degrees of freedom in that region, 
putting them into a chiral quasi-$2d$ $3$-fermion topological order (with $c=4$),
and condensing appropriate bound states.
The key point is that we also have to do this in the region 
$2\pi-\epsilon < \theta_1 < 2\pi+\epsilon$ in a way that respects the $\symX'$ symmetry, 
resulting in another $c=4$ ancilla state in that region.%
\footnote{
$\symX'$ cannot flip the chirality of the ancilla state from $c=4$ to $c=-4$.
If it could, then we could consider a situation 
where the region $-\epsilon \leq \theta_1 \leq \epsilon$ is gapped out without anyons, 
and the region $2\pi-\epsilon < \theta_1 < 2\pi+\epsilon$ is gapped out with commuting projectors 
realizing the $3$-fermion topological order.
Then $\symX'$ would be a locality-preserving unitary that 
when applied to this quasi $2d$ system would change chirality from $c=4$ to $c=-4$.
We could then stack this quasi $2d$ system with yet another chiral $c=-4$ realization of the $3$-fermion topological order; this stacked system has zero net chirality and is adiabatically connected to a commuting projector Hamiltonian, yet under $\symX'$ maps to a system with $c=-8$, which is a contradiction.
}
The total chirality is then $c=8$, and the $L$ invariant is nontrivial.
Hence the $L$ invariant is equivalent to the $J$ invariant.

\subsection{The $M$ Invariant} \label{subsec:m}
The $L$ invariant was a way to calculate the chiral central charge of a two-dimensional interface between two three-dimensional boundary domains, by employing a trick where two identical such interfaces are prepared, related by a translation symmetry.  It is natural to wonder whether this trick was necessary: could one instead directly compute the chiral central charge of a single interface?

Let us more generally try to calculate the chiral central charge of a two-dimensional interface between two different commuting projector Hamiltonians in three dimensions.
So, we consider a system on a three-torus.  As in the previous subsection, use angular coordinates to $\theta_1,\theta_2,\theta_3$ to parametrize the three-torus.  Consider some lattice system, with lattice spacing $\ll 1$ and use some commuting projector Hamiltonian $H_1$ on the sites with coordinates $0+\epsilon \leq \theta_3 \leq \pi-\epsilon$ and use some other 
commuting projector Hamiltonian $H_2$ on the sites $\pi+\epsilon \leq \theta_3 \leq 2\pi-\epsilon$.
Here we assume that $H_1$ and $H_2$
are both taken to be translationally invariant Hamiltonians so that some single rule for the Hamiltonian on a unit cell will specify the Hamiltonian on the each of these two sets of sites, which we will call the three-dimensional bulk regions.
(Here `bulk' does not refer to the $4+1$ dimensional bulk;
this entire subsection deals with a $3+1$ dimensional system 
which is ultimately taken to be the $3+1$ dimensional boundary of our SPT.)

Then, there are two interfaces, one near $\theta_0=0$ and one near $\theta_3=\pi$.  Let there be some arbitrary (not necessarily commuting projector) way to gap these two interfaces using local Hamiltonians.  In distinction to the case of the $L$ invariant, the two ways of gapping the interfaces are unrelated: one may make an arbitrary choice at each interface.  However, we again require that the interface be gapped without creating any anyons.

We would like to measure the chiral central charge of each interface separately.  Unfortunately, if we simply dimensional reduce by ignoring the $\theta_3$ coordinate, all we will compute is the {\it total} chiral central charge of both interfaces.  Assuming that the fundamental degrees of freedom are bosonic, rather than fermionic, this number will be equal to $0$ mod $8$ and indeed it may be an arbitrary such number \cite{Kitaev_2005}.

However, there is a way to measure the chiral central charge of each interface separately as we explain in the next two subsubsections.
\subsubsection{Chiral Central Charge in the Hamiltonian Formalism}
As a starting point, we review the formalism for defining the chiral central charge in the Hamiltonian formalism developed in Ref.~\onlinecite{Kitaev_2005}.
This subsubsection repeats ideas developed there; the point is to have some of the expressions in a form in which it will be apparent how to treat the $3$ dimensional bulk between interfaces which we will consider in the next subsubsection.

The formalism begins by defining an edge current.  In this subsubsection we consider a purely two dimensional system.
Consider a lattice Hamiltonian
$H^{(\infty)}=\sum_j H_j^{(\infty)}$ with ground state $|\Psi\rangle$ and gapped excitations, where the $H_j^{(\infty)}$ are local terms:
each term $H_j^{(\infty)}$ is supported near some lattice site $j$.
Let $$H=\sum_j H_j$$ with $H_j=\beta_j H_j^{(\infty)}$.  
To simulate an edge along the $x$ axis in an infinite geometry, the scalar $\beta_j$ is chosen to be zero for positive $y$-coordinate
and positive for negative $y$ coordinate, increasing to infinity as $y\rightarrow \infty$.
Below we consider instead a disc geometry of large radius $R$; here the scalar $\beta_j$ is zero for radial coordinate of $j$ larger than $R$, and becomes very large for radial coordinate much smaller than $R$.

We consider a system at nonzero temperature $T$.
The chiral central charge is then defined to be $\frac{12}{\pi} T^{-2}$ times the energy current in the negative $x$-direction.  One may fix $T=1$ and multiply $H$ by a scalar instead.
To compute this energy current, one replaces the terms $H_j$ by a new set of terms $\tilde H_j$ so that $H=\sum_j \tilde H_j$ and so that $\tilde H_j |\Psi\rangle=0$.  This replacement removes any bulk energy current.  For an arbitrary gapped local Hamiltonian this can be done while keeping the terms $\tilde H_j$ local in space: they may decay superpolynomially fast \cite{Kitaev_2005,hastings2006solving}.

Then, the current from $k$ to $j$ may be defined by
$$f_{jk}=-i \langle  [H_j,H_k] \rangle,$$
where $\langle \ldots \rangle$ denotes a thermal average at temperature $T=1$.
Using a disc geometry and dividing the disc into three sectors $A,B,C$ (depending on whether the angular coordinate is in the interval $[0,2\pi/3),[2\pi/3,4\pi/3)$ or $[4\pi/3,2\pi)$ respectively),
the total edge current can be defined as
$\sum_{k\in A} \sum_{l \in B} f_{k,l}.$

This formalism using the edge current can be used (as we will see in the next subsection) for the case that we are interested in, namely computing the chiral central charge of two two-dimensional interfaces separately.
However, the edge current has the disadvantage that it requires introducing an edge.  To avoid this, one may (again following Ref.~\onlinecite{Kitaev_2005}) define a bulk current.

There, one introduces a ``two-current" $h_{jkl}$ which is an anti-symmetric function of $j,k,l$, decaying rapidly as the distance increases between any two sites.
This two current $h_{jkl}$ obeys
$$f_{kl}=\sum_j h_{jkl}.$$
Then in the disc geometry above, we have $\sum_{k\in A} \sum_{l \in B} f_{k,l}=\sum_{k\in A} \sum_{l\in B}\sum_{j} h_{jkl}=-\sum_{j\in A}\sum_{k\in B} \sum_{l\in C} h_{jkl}$ where the last equality uses that $\sum_{k\in A} \sum_{l\in B}\sum_{j\in A} h_{jkl}=\sum_{k\in A} \sum_{l\in B}\sum_{j\in B} h_{jkl}=0$ by anti-symmetry.
So,
$$c=-\frac{12}{\pi}\sum_{{j\in A},{k\in B},{l\in C}} h_{jkl}.$$
The dominant terms in this expression for $c$ are those near the origin, the triple contact point of the regions.

This then allows us to define an expression for the chiral central charge on a torus geometry.  One may divide the torus into three regions in some arbitrary way, so long as there at least one triple contact point (indeed, one cannot have exactly one such point).
For example, one may parameterize the torus by a square $[-1,+1] \times [-1,+1]$ with opposite edge identified; then, introduce radial and angular coordinates on the square and divide into three regions depending on the angular coordinate.
Then, pick any such triple contact point $p$ and compute the sum restricted to $j,k,l$ near that point:
\begin{align}
\label{csum}
c=-\frac{12}{\pi}\sum_{j\in A, {\rm near} \, p}\;\sum_{k\in B, {\rm near} \, p}\;\sum_{l\in C, {\rm near} \, p} h_{jkl}.
\end{align}

The two-current $h$ is defined by introducing a {\it path} of Hamiltonians $H(\beta)$, with $H(0)=0$ and $H(\beta)\approx \beta H^{(\infty)}$ for large $\beta$.  
One defines a two-current $g$ with $g={\rm d}h$ where the differential is along this path.
To define the two current, let $\langle A(\tau) B(0)\rangle$ denote the thermal average of $\exp(-(1-\tau) H(\beta)) A \exp(-\tau H(\beta)) B$,
and let $\langle \langle A(\tau) B(0)\rangle\rangle$ denote the connected correlation function
$\langle A(\tau) B(0)\rangle-\langle A \rangle \langle B \rangle$.
We emphasize that this thermal average is computed using the Hamiltonian $H(\beta)$ to define the thermal state.

Define $$\mu(A,B,C)=i\int_0^1 \langle \langle A(\tau) [B,C](0) \rangle\rangle,$$
and finally define
$$g_{jkl}=\mu({\rm d}H_j,H_k,H_l)+\mu({\rm d}H_k,H_l,H_j)+\mu({\rm d}H_l,H_j,H_k).$$
Integrating this expression for $g$ over a path $\beta$ from $0$ to $\infty$ gives an expression for $h$.  It is instructive to verify that
the expression is invariant under rescaling $H^{(\infty)}$ by a positive scalar.

Hence, one may insert this into Eq.~(\ref{csum}) obtaining
\begin{align}
\label{csumint}
c=-\frac{12}{\pi}\int \sum_{j\in A, {\rm near} \, p}\;\sum_{k\in B, {\rm near} \, p}\;\sum_{l\in C, {\rm near} \, p}g_{jkl}.
\end{align}
One assumes that this path can be chosen so that no phase transition occurs along the path; in particular, one wants all correlation functions to be local along the path so that the expression is indeed dominated by terms near the chosen triple contact point.

\subsubsection{Chiral Central Charge of Each Interface}
We now consider a three-dimensional system on a three-torus with two two-dimensional interfaces between the two three dimensional bulk regions.
Each of these bulk regions is a two-torus crossed with an interval, and there is a local commuting projector Hamiltonian in each bulk region.
We write the three-torus as a two-torus crossed with a circle, and we decompose the two-torus into three regions $A,B,C$ with at least one triple contact point between the regions as in the above subsubsection.

We begin with the seemingly more complicated expression for the chiral central charge in terms of the two-current $g$.
Then, ignoring the third coordinate
$\theta_3$ of the torus, the chiral central charge of the resulting {\it two-dimensional system}
can be computing from Eq.~(\ref{csumint}).
The quantity $g_{jkl}$ is equal to $\mu({\rm d}H_j,H_k,H_l)$ plus cyclic permutations.
We see that the expression for $\mu({\rm d}H_j,H_k,H_l)$ vanishes unless both $k$ and $l$ are in an interface, and indeed both must be in the same interface, as otherwise $[H_k,H_l]=0$ since we use a commuting projector Hamiltonian in the bulk regions.

Now we make an assumption: we assume again that we can choose the path $H(\beta)$ so that no phase transition occurs.
As explained in Ref.~\onlinecite{Kitaev_2005}, while we can avoid ordinary symmetry breaking phase transitions by an appropriate path, one might worry about phase transitions between topologically ordered states which cannot be avoided.  While this is not a problem in two dimensions, it may be a problem in three dimensions.  However, our interest here is between two three-dimensional bulk regions without anyons (such as a trivial state and a three-fermion Walker-Wang state), so in this case such a choice may be made.  Indeed, Ref.~\onlinecite{haah2018nontrivial} gives a set of generators for the three-fermion Walker-Wang stabilizer group without any redundancies in this set, so that no phase transition occurs in this case.

Under this assumption that no phase transition occurs, the connected correlation function 
$\langle \langle H_j(\tau) [H_k,H_l](0) \rangle \rangle$ decays exponentially in the distance from $j$  to $k,l$.  Indeed, the decay of such correlation functions is taken as the definition of the absence of a phase transition.

So, the expression in Eq.~(\ref{csumint}) vanishes unless $j,k,l$ are near an interface.  Hence we can divide the expression into two distinct sums, one near each interface.
We define the central charge near a given interface then to be one of these sums.  Calling the two interfaces ``top" and ``bottom" and letting $p_{top}$ denote a triple intersection point in the top interface, we define
\begin{align}
c_{top}=-\frac{12}{\pi}\int \sum_{j\in A, {\rm near} \, p_{top}}\;\sum_{k\in B, {\rm near} \, p_{top}}\;\sum_{l\in C, {\rm near} \, p_{top}}g_{jkl}.
\end{align}

\subsubsection{$M$ Invariant from Interfacial Chiral Central Charge}
This chiral central charge then gives some number for each interface between two three-dimensional commuting projector Hamiltonians.
Let us consider some properties of this number.

In the case of an interface between a three-fermion Walker-Wang model and a trivial Hamiltonian (i.e., a sum of Pauli $Z$ terms), this number is equal to $4$ mod $8$.
However, in other cases this number may be different.
For example, the interface between a trivial model and a Walker-Wang model based on the bosonic $2/3$ state \iffalse to a model obtained by condensing dyons with unit electric and magnetic charge in a $\mathbb{Z}_3$ toric code \fi has chiral central charge equal to $2$ mod $8$; see Ref.~\onlinecite{haah2019clifford}.
Conversely, the interface between this Walker Wang model and the trivial model has central charge $-2$ mod $8$.
That is, the sign of the central charge of the interface changes if one reflects the model across the coordinate perpendicular to the interface.\footnote{The reader may wonder how it is that the chiral central charge of various interfaces is already well-known even though we define it here.  The point is that here we give a definition in terms of lattice Hamiltonians without relying on known results from conformal field theory or topological quantum field theory.}

We remark that in the case of the the three-fermion Walker-Wang model, another way to compute the chiral central charge of a $2d$ boundary between this Walker-Wang model and the vacuum would be to gap one such $2d$ boundary without anyons (but using terms that do not commute) and gap the other $2d$ boundary using commuting terms (at the cost of anyons).  Then, ignore the third coordinate and compute the chiral central charge of the resulting two-dimensional system.  

Proceeding with the definition of the $M$ invariant, consider now two Hamiltonians $H_1,H_2$.
Assume that the interface from $H_1$ to $H_2$ can have central charge $c$.  Then, the interface from $H_2$ to $H_1$ must have central charge $-c$ mod $8$ assuming the fundamental degrees of freedom are bosonic, as total central charge must be $0$ mod $8$.
That is, the allowed central charge on the interface between any given pair of bulk Hamiltonians is some unique number mod $8$.  Given a bulk three-dimensional commuting projector Hamiltonian $H$, we now define
the $M$ invariant of $H$, written $M(H)$, to be the chiral central charge (mod $8$) of an interface from the trivial Hamiltonian to $H$.

Now consider three Hamiltonians $H_1,H_2,H_3$ with $H_1$ equal to the trivial Hamiltonian.  We consider a three-torus with three interfaces between three different bulk regions, taking $H_1,H_2,H_3$ for three bulk regions with coordinates
$0+\epsilon \leq \theta_3 \leq 2\pi/3-\epsilon$,
$2\pi/3+\epsilon \leq \theta_3 \leq 4\pi/3-\epsilon$, and
$4\pi/3+\epsilon \leq \theta_3 \leq 2\pi-\epsilon$, respectively.
Then, the interface from $H_1$ to $H_2$ has chiral central charge $M(H_2)$ mod $8$ while the interface from $H_2$ to $H_3$ has chiral central charge $-M(H_3)$ mod $8$.  Then, the interface from $H_2$ to $H_3$ must have chiral central charge
$M(H_3)-M(H_2)$ mod $8$, so indeed it suffices to know $M(H)$ for any given $H$ to be able to compute the chiral central charge of the
interface between two Hamiltonians.

\section{Topologically ordered boundary} \label{sec:topoboundary}

All dimensions will be spatial in this section.

When described by a quantum field theory, the boundary of a nontrivial SPT phase 
has an `t~Hooft anomaly for the global symmetry action~\cite{
BCFV, ProjS, TI, TopoSC, Vishwanath_2013, LevinGu, Kapustin, Kapustin_Thorngren1, Kapustin_Thorngren2}.
Thus it would be interesting to find such a boundary for our model.
Certainly we can simply naively truncate the Hamiltonian of our model in a symmetric way at the boundary,
but doing this results in a finely tuned surface state with exponential degeneracy, 
which is a situation not well described by a field theory.
In fact, by truncating the symmetric disentangler~$\Udis$, 
we can model the boundary as a stand-alone $3d$ lattice system, 
with symmetry acting as the nontrivial QCA $\Xbdry$.
The question then becomes: is there a $3d$ lattice Hamiltonian, invariant under $\Xbdry$, 
whose low energy spectrum is described by a simple field theory, 
and how does the symmetry act in that field theory?

In this section we will answer this question by describing the construction of a gapped boundary topological order symmetric under $\Xbdry$.  This topological order is simply a $\ZZ_2$ gauge theory with fermionic gauge charge.  We will refer to it as a $\ZZ_2^f$ gauge theory so as not to confuse the gauge and symmetry $\ZZ_2$'s.  Let us first show how to drive the boundary into this topological order, and then discuss the symmetry fractionalization.  We will only sketch the construction.  Note that in what follows, everything is occurring on the 3d boundary.  The bulk is not relevant to the discussion anymore; its only effect was to make the symmetry act as the nontrivial QCA $\Xbdry$.

First consider a general configuration of the spins.  
Let us put the ``material'' into a trivial product state in the up spin domains, 
and into the nontrivial 3-fermion Walker-Wang state in the down spin domains.
The Hamiltonian for the latter is just given by conjugating the trivial Hamiltonian by $\Xbdry$.
We would now like to form a superposition of such states over all spin configurations 
--- i.e. proliferate the domain walls --- to obtain a fully symmetric state.
The obstacle is that, with the naive commuting projector Hamiltonian for the material,
the domain walls host the 3-fermion topological order,
and hence a domain wall configuration of high genus has a large ground state degeneracy, 
preventing us from proliferating the domain walls.

One can attempt to solve this problem by introducing ancilla degrees of freedom, 
putting those ancillas in the 3-fermion topological order along the domain wall boundaries 
(and into a product state elsewhere), 
and then condensing the appropriate bound states to make the domain walls topologically trivial.
However, the Hamiltonian for the ancillas is effectively $2$-dimensional, 
and hence gives chirality $c_{-} = \pm 4$ to the domain walls.
One then has to make a sign choice for the chirality, or, more precisely, 
choose a normal direction to the domain wall, everywhere along the domain wall.
Using an argument similar to that in the footnote in \cref{subsec:l}, 
we see that the QCA $\Xbdry$ cannot reverse this chirality, 
so we cannot e.g. do the naive thing and orient the normal from say the spin up region to the spin down region;
if we did, then the state would not be symmetric under $\Xbdry$, 
which flips the spins but not the chirality.
We must instead pick normal directions using only the information about the domain wall locations; 
however, this is impossible to do in a consistent way.
If we pick the normal directions in an arbitrary (i.e. inconsistent) way, 
we will end up with $1$-dimensional edges on the domain walls 
where the normal flips; 
these edges will host gapless modes (equivalent to an odd number of $E_8$ edges)
that again prevent us from proliferating the domain walls and building a gapped symmetric state.

It is not surprising that our naive attempt at building a symmetric state failed; 
had it succeeded, we would have ended up with a short range entangled gapped symmetric boundary 
for a supposedly nontrivial SPT, which is not possible.  
However, a variation on the above attempt does succeed, 
at the expense of ending up with a surface topological order.
The variation involves introducing ancilla degrees of freedom at the boundary.
These are bosonic ancillas, but we will think of them as fermions $f$ coupled to a lattice $\ZZ_2^f$ gauge field,
similar to the way the spin degrees of freedom in Kitaev's honeycomb model 
can be thought of as fermions coupled to a $\ZZ_2^f$ gauge field~\cite{Kitaev_2005, Kapustin2d, Kapustin3d,verstraete2005mapping}.  
We now proceed as before, but make the domain walls topologically trivial 
by condensing a bilinear between $f$ and one of the $3$-fermion anyons 
--- such a particle is a boson, and can be condensed everywhere on the domain wall.
This can presumably be done entirely within the space of commuting projector models,
although we have not worked out the details.
After the domain walls are proliferated, we end up with a symmetric state $\ket{\Psi_{\rm{bdry}} }$ 
that is a finite depth circuit away from a trivial product state tensored with a fermionic $\ZZ_2^f$ gauge theory.
Hence $|\Psi_{\rm{bdry}}\rangle$ is itself a fermionic $\ZZ_2^f$ gauge theory.

One piece of intuition for why introducing fermions allows us 
to solve the chirality problem and proliferate the domain walls 
is the fact that the minimum quantized value of 
chiral central charge for $2d$ short range entangled fermionic systems is $\frac{1}{2}$ 
(realized by a $p+ip$ superconductor), rather than $8$.
Thus we could have proceeded as in the naive approach by arbitrarily picking the normal 
that gaps out the domain walls, and then using the fermions, 
in the appropriate multiple of $p+ip$ states ($8$ or $-8$), 
to cancel the chirality that obstructed that approach.
This ends up being equivalent to what was described above.

Let us now describe what makes this $\ZZ_2^f$ topological order anomalous under the global $\ZZ_2$ symmetry.
This is best discussed in the framework of Chen and Hermele~\cite{HermeleChen},
who define symmetry fractionalization for $3d$ symmetry enriched topological (SET) orders as follows: 
first, dimensionally reduce the system along one direction (say $z$), 
so the geometry is ${\mathbb{R}}^2 \times S^1$.  
Then put in a symmetry defect, which one thinks of as a large loop along the $xy$-plane (at a fixed $z$),
separating this $xy$-plane into an inside and an outside.
Finally, examine the quasi-2d SET order both inside and outside the loop.
The difference between these SET orders gives information about the symmetry fractionalization.
In all the examples considered in \cite{HermeleChen}, 
the two quasi-$2d$ topological orders were the same (namely the toric code),
and only their symmetry properties differed.
Applying this construction to our example, 
however, we end up with different quasi-$2d$ topological orders.

To see this, note that without the symmetry defect, 
the quasi-$2d$ topological order is just the toric code.
The quasi-$2d$ pointlike excitations are the fermion $f$,
a short $\ZZ_2^f$ flux loop wound along the $z$ direction (a boson), and their bound state.
However, inside the symmetry defect loop, 
the short $\ZZ_2^f$ flux loop crosses the $\ZZ_2$ domain wall and changes statistics to become fermionic.
Indeed, because of the condensation on the domain wall, 
one of the $3$-fermion anyons becomes bound to this short $\ZZ_2^f$ flux loop.
(If $f f_1$ is the condensed boson on the domain wall, the short $\ZZ_2^f$ flux loop is bound to $f_2$.)
Another way to see that the loop becomes fermionic 
is from the perspective of cancelling chirality: 
we put the $f$ fermions into $8$ copies of a $p+ip$ state along the domain wall,
which changes the statistics for corresponding $\pi$ flux.
The bound state of the $\ZZ_2^f$ flux loop and the fundamental fermion is also fermionic 
(indeed, once we have $2$ fermions in a $2d$ topological order with $\ZZ_2 \times \ZZ_2$ fusion rules, 
we must have three).
Thus inside the defect loop we have the $3$-fermion topological order, 
in contrast to the toric code outside.

The argument that such an SET is anomalous is as follows.  Supposing for a contradiction that one could realize such an SET purely in 3D, it would follow, using the above dimensional reduction thought experiment and the relation between 2d anyon statistics and chiral central charge \cite{Kitaev_2005} that a symmetry defect loop carries non-zero chirality.  In particular, fusing two such identical symmetry defect loops would give a state which has non-zero chirality.  However, this state can simply be created from the ground state by a unitary acting only in the vicinity of a 1d loop, which is impossible.  See \cite{CTW} for more details (in particular their condition (8i), discussed in their Appendix B).

In general we expect that the difference in chiral central charges mod $8$,
computed from the statistics of the two topological orders, 
should be an anomaly indicator; 
namely, the anomaly is nontrivial when this chiral central charge difference is $4$ mod $8$.

\section{Miscellaneous Discussion}
In this section we speculatively discuss miscellaneous aspects of the model.

\subsection{Consequences of a Circuit Disentangling the Three-fermion Ground State}

First, we consider the consequences if some hypothetical quantum circuit exists 
to disentangle the ground state of the three-fermion Walker-Wang model.%
\footnote{
Note that we implicitly assumed the existence of such a circuit 
in the definition of the bulk invariant in \cref{sec:bulkinvariant}.
}
We emphasize that this circuit would not map the separator~\cite{haah2018nontrivial} of that model 
to the trivial separator as $\alpha_{WW}$ does (we may refer to this as ``disentangling the Hamiltonian''),
but would only disentangle the ground state.
Although we do not have such a circuit, 
there also does not seem to be any obvious obstruction to the existence of such a circuit,
though it may need to have tails; 
i.e., rather than the gates being strictly local, 
perhaps the gates need to be well-approximated by strictly operators, up to some decaying tail.

We expect that such a circuit, if it is truncated to some finite region, 
will create a chiral state on the two-dimensional boundary of that region with $c=\pm 4$.

Suppose such a circuit did exist, implementing some unitary $U$; 
more precisely, suppose that a set of such unitaries existed, 
one for each three-manifold $M$, calling the unitary on the given manifold $U(M)$.
One might try then to use this set to define the decorated domain wall state $\Psi_0$, 
rather than using our construction involving $\alpha_{WW}$.  
That is, for each spin configuration, $\vec z$, 
one might create a three-fermion Walker-Wang state on the boundary using $U(M)$.

However, a problem arises.  If we did not care about preserving the $\mathbb{Z}_2$ symmetry of the model, we could orient the domain walls from up spins to down spins.  However, we want to preserve this symmetry; this is the reason that we chose previously some fixed projection on each cell, independent of spin configuration.
In this case, we might need to ``stitch" together different circuits on different balls.  In some cases, we may expect that the chirality will cancel, giving a state with vanishing chirality on the two-dimensional boundary, but in other cases the chirality may add, giving a state with chirality $\pm 8$ on the boundary.  This means that we may create some fluctuating configuration of $E_8$ states with chirality $\pm 8$.  We do not expect that such a state can be disentangled by a circuit that preserves the $\ZZ_2$ symmetry at the level of gates.  So, we do not expect that there is a circuit that respects the ${\mathbb Z}_2$ symmetry at the level of gates and fully disentangles $\Psi_0$.  While this argument is heuristic, it gives some reason to believe that $\alpha_{WW}$ was essential to our construction.

Another interesting consequence of the existence of such a circuit implementing a unitary $U$ is that if we conjugate the three-fermion Walker-Wang Hamiltonian by $U$, the resulting Hamiltonian will be a commuting projector Hamiltonian whose ground state is a trivial product state.  At the same time, we expect that (from the $M$ invariant) a domain wall between this Hamiltonian and the obvious Hamiltonian which stabilizes the trivial ground state (i.e., the sum of Pauli $Z$ operators on each qubit) will carry a chiral central charge $4$ mod $8$.

Hence this leads to a natural conjecture that the space of local commuting projector Hamiltonians whose ground state is a trivial product state is not a connected space.  A precise formulation of this conjecture would require specifying what we mean by a local Hamiltonian since the disentangling circuit may have tails.

\subsection{Comparison to Two-Dimensional Beyond Cohomology Phases}
Another interesting phase to compare to is a two-dimensional decorated domain wall state with fluctuating spins and domain walls decorated by Majorana chains \cite{Tarantino, Ware} in a nontrivial state.  That is, one has two Majorana fermions per bond.  On bonds which are not in a domain wall, the pair of Majoranas $\gamma,\gamma'$ are in an eigenstate $\gamma \gamma'=\pm i$, but on domain walls with bonds labelled $1,2,3,\ldots$ and Majorana operators $\gamma_j,\gamma'_j$ one instead has $\gamma'_j \gamma_{j+1}=\pm i$, taking the index $j$ periodic in the obvious way.

This state likely can be disentangled by a quantum circuit since it is a two-dimensional state without anyons and with strictly finite-range correlations.  However, it seems that there may be no circuit which disentangles a ${\mathbb Z}_2$ symmetric Hamiltonian for this state.  Equivalently, it may not be possible to find a quantum circuit which disentangles the state and which obeys the ${\mathbb Z}_2$ symmetry (note that if such a circuit existed, one could conjugate a trivial Hamiltonian by this circuit to get a Hamiltonian whose ground state is the decorated domain wall state).

One might think that one could construct such a circuit using similar ideas to what we have done here.  The Majorana chain can be constructed on a domain wall by applying a QCA which shifts Majorana operators by $1$ along the domain wall, i.e., maps $\gamma_j$ to $\gamma'_j$ and maps $\gamma'_j$ to $\gamma_{j+1}$.
However, if this procedure did work, one would find (similarly to what we found in our four-dimensional model) that the boundary operator $\Xbdry$ would be equivalent to such a shift up to a quantum circuit.  Then, the square of the boundary operator would be a shift by $2$, again up to a quantum circuit.  Since a shift by two is a nontrivial QCA in one-dimension, the boundary operator cannot square to the identity, giving a contradiction.

The problem with applying our construction to this two-dimensional system can also be phrased differently: in some cases we will shift by one to the left on a domain wall and in some cases we will shift by one to the right and we cannot interpolate between these two shifts in a one-dimensional QCA.  If we did not require that the circuit respect the ${\mathbb Z}_2$ symmetry there is no such problem: we can always decide to shift clockwise (or always shift counter-clockwise) around any spin up domain and this gives us a local rule to decide the direction of the shift, using the additional information about the spin configuration.

\subsection{Connection to anomalies in $3+1$ dimensions}

Although models in spatial dimensions higher than $3$ are not directly relevant to our $3+1$ dimensional world, the `t~Hooft anomalies that they realize at their boundaries might be.  In \cref{sec:topoboundary} we discussed one particular symmetric topologically ordered boundary termination of our model that saturates the $\ZZ_2$ boundary anomaly.  It would be interesting to extend this work to more general systems and symmetries.  For example, there exist $U(1)$-protected beyond cohomology SPT phases of both bosons and fermions which are natural generalizations of the $\ZZ_2$ beyond cohomology SPT studied in this paper \cite{Wang_2015}.  Can a quantum-information theoretic QCA framework be useful in understanding the boundary action of this $U(1)$?  This would be especially interesting for the case of fermions, where the SPTs in question are just the $4+1$ dimensional integer quantum Hall phases.  The boundary action of the $U(1)$ symmetry in that case realizes the chiral anomaly, and is saturated by a boundary state consisting of some number of Weyl fermions of the same chirality.  It would be very interesting if the QCA framework had some relevance to understanding such perturbative chiral anomalies.

Although for a continuous symmetry, the boundary symmetry operator is necessarily connected to the identity by a continuous path (and hence cannot be a nontrivial QCA), it is possible that the path of boundary symmetry operators rotating phase by a full $2 \pi$ is a nontrivial path, giving rise to a nontrivial QCA in one lower dimension.  For a three-dimensional boundary, this requires a nontrivial QCA in two dimensions, which has been ruled out for systems with only bosonic degrees of freedom~\cite{FreedmanHastings} but remains open in the fermionic case.

\subsection{QCA boundary action and symmetry group}

Recall that the boundary symmetry action following \cite{ElseNayak}
is a representation of the symmetry group $G$.
In our construction this representation resulted 
in a nontrivial group $Q$ of QCA modulo quantum circuits.
Abstractly, this means that we have a group homomorphism~$G\to Q$
which has to factor through the abelianization $G/[G,G]$ as
\begin{align}
G \to G/[G,G] \to Q
\end{align}
since $Q$ is always an abelian group.
Hence, in order for the $J$ invariant to be nontrivial,
at least, $G/[G,G]$ has to be nontrivial.
For example, if $G$ is a nonabelian simple group such as $A_5$ (the alternating group of order~$60$)
then $G/[G,G] = 0$,
and no SPT under such a group can induce nontrivial QCA on the boundary.

\subsection{Extension to Other Beyond Cohomology SPT Phases}
It is interesting to ask whether this kind of construction can be extended to other beyond cohomology SPT phases.  A key role was played in our construction by the QCA $\alpha_{WW}$ which disentangles the three-fermion Walker-Wang model.  We expect that other QCA which are nontrivial (and further, which are nontrivial modulo shift QCA) could be used to construct beyond cohomology phases also, although of course many details of the construction would have to be checked in the case of some other QCA.

In fact, we do have other candidates for 
QCA which are nontrivial modulo shifts in three spatial dimensions\cite{haah2019clifford}, though we do not have any examples outside three dimensions yet.  Like $\alpha_{WW}$, these other QCA also are (generalized) Clifford QCA: although they act on qudits of prime dimension $p>2$, rather than qubits, they map (generalized) Pauli matrices to products of Pauli matrices.

Two different cases are observed for these generalized Clifford QCA.
In one case, similar to $\alpha_{WW}$, the square of the QCA is trivial (up to shift), 
and we expect that one could construct $\ZZ_2$ beyond cohomology SPTs with these QCA.
In the other case, the square is nontrivial but the fourth power is trivial
and we expect that one could construct $\ZZ_4$ beyond cohomology SPTs with these QCA.
Correspondingly, these QCA disentangle a three-dimensional bulk whose boundary has chiral central charge 
$4\,\,\rm{mod}\,\,8$ or $\pm 2\,\,\rm{mod}\,\,8$.
We emphasize that this means that the spins and material have different Hilbert space dimensions: 
the spins have dimensions $2$ or $4$ while the material has dimension $p$ for prime $p>2$.

A final interesting question is: 
if we consider just the constructions using $\ZZ_2$ symmetry for the spins and using QCA 
whose square is trivial, 
do different choices of QCA correspond to different beyond cohomology SPT phases?
This is likely equivalent to the question:
given two three-dimensional QCA $\alpha$ and $\beta$ which each disentangle a three-dimensional bulk 
whose boundary has chiral central charge $4\,\,\rm{mod}\,\,8$, are $\alpha$ and $\beta$ equivalent up to circuits?

As an example, suppose that $\alpha$ disentangles the three-fermion Walker-Wang model, 
while $\beta$ is a QCA which disentangles a bulk whose boundary is a $\ZZ/5$ dyon model~\cite{haah2019clifford}.
(Here equivalence up to circuits should be a notion of stable equivalence where one can tensor in additional degrees of freedom on which $\alpha$ or $\beta$ acts trivially; such a stable notion is necessary since $\alpha$ acts on qubits and $\beta$ acts on qudits of dimension $5$.)
This $\ZZ/5$ dyon theory has five anyons obeying the fusion rule of the additive group $\ZZ/5$
whose topological spins are $e^{4\pi i k^2 / 5}$ for $k = 0,1,2,3,4$.
The answer to the question of the equivalence of $\alpha$ and $\beta$ is almost certainly equivalent to the question: 
is there a two-dimensional commuting projector Hamiltonian which describes the three-fermion TQFT 
tensored with this $\ZZ/5$ dyon TQFT?  
This model has chiral central charge $0\,\,\rm{mod}\,\,8$ so there is no obvious obstruction.
However, there is a conjecture~\cite{kevin} that every commuting projector model in two-dimensions 
describes some quantum double, and this particular TQFT is {\it not} a quantum double.
Thus, if this conjecture holds, then $\alpha$ and $\beta$ are not equivalent up to circuits.

It is conceivable that the resolution to this puzzle (whether such $\alpha$ and $\beta$ are equivalent) 
depends upon what kind of ``equivalence under circuits'' is allowed.
If one allows only bounded depth circuits consisting of strictly local gates,
then we expect that $\alpha$ and $\beta$ are not equivalent.
On the other hand, if one allows a broader notion of circuit equivalence,
such as allowing a ``circuit'' to be the unitary given by evolution for bounded time under a time-dependent Hamiltonian whose terms are approximately local (for example, having exponential tails), 
then one may expect that $\alpha$ and $\beta$ are equivalent.

As an analogy to this way in which different notions of locality 
can lead to different notions of equivalence, 
consider the question of classifying manifolds in topology.
There, various categories are considered, such as DIFF, PL, and TOP, 
corresponding to different requirements on the smoothness of the manifold.
The classification of manifolds in general depends upon the particular category chosen.
So, we propose that one may define a ``strictly local'' category 
whose objects are QCA obeying a strict notion of locality, 
with morphisms which are bounded depth quantum circuits with strictly local gates.
One may also define a ``tails'' category, 
whose objects are again QCA but with a more relaxed notion of locality 
and with similarly more relaxed morphisms.
We leave it to future work to determine an appropriate notion of locality for the tails category 
(for example, exponential or power law); 
note that even in one-dimension, 
the question of classifying QCA with an approximate notion of locality is completely open,
though the question with strict locality is completely solved in \cite{Gross_2012}.

\begin{acknowledgments}
We thank D. Freed, A. Vishwanath, C. Xu, and Z. Wang for useful discussions.  LF and MH thank the
 Kavli Institute for Theoretical Physics for hospitality, supported by the National Science Foundation under Grant No. NSF PHY-1748958.
\end{acknowledgments}

\bibliography{4Dbeyond-ref}
\end{document}